\newtheorem{theorem}{Theorem}
\newtheorem{definition}{Definition}
\newtheorem{lemma}{Lemma}
\newcommand\AlgFrieze{\textsc{MRG}\xspace}
\newcommand\AlgKarp{\textsc{Ranking}\xspace}
\newcommand\MDeg{\textsc{MinGreedy}\xspace}
\newcommand\EDSM{\textsc{eDSM}\xspace}
\newcommand\MDS{\textsc{MDS}\xspace}
\newcommand{\maxdeg}{{\ensuremath{\Delta}}\xspace}
\newcommand{\mopt}{\ensuremath{{M^*}}\xspace}
\newcommand{\m}{\ensuremath{M}\xspace}
\newcommand{\oneOverTwo}{\ensuremath{{\frac{1}{2}}}\xspace}
\newcommand{\twoOverThree}{\ensuremath{{\frac{2}{3}}}\xspace}
\newcommand{\desiredRatio}{\ensuremath{{\frac{\maxdeg-1}{2\maxdeg-3}}}\xspace}
\newcommand{\weakerRatio}{\ensuremath{{\frac{\maxdeg-1/2}{2\maxdeg-2}}}\xspace}
\newcommand{\transferred}{\ensuremath{\theta}\xspace}
\newcommand{\mingreedy}{\textsc{MinGreedy}\xspace}
\newcommand{\mrg}{\textsc{MRG}\xspace}
\newcommand{\karpsipser}{\textsc{KarpSipser}\xspace}
\newcommand{\greedy}{\textsc{Greedy}\xspace}
\newcommand{\ranking}{\textsc{Ranking}\xspace}
\newcommand{\g}{{\ensuremath{G}}\xspace}
\newcommand{\mg}{{\ensuremath{H}}\xspace}
\newcommand{\w}[1]{{\ensuremath{m_{#1}}}\xspace}
\newcommand{\wopt}[1]{{\ensuremath{m_{#1}^*}}\xspace}
\newcommand{\mgapprox}{{\ensuremath{\alpha}}\xspace}
\newcommand{\edge}[1]{\ensuremath{(#1)}\xspace}
\newcommand{\sdown}{\ensuremath{s_{\downarrow}}\xspace}
\newcommand{\udown}{\ensuremath{u_{\downarrow}}\xspace}
\newcommand{\vdown}{\ensuremath{v_{\downarrow}}\xspace}
\newcommand{\inlineheading}[1]{\textbf{#1}}
\newcommand{\oneonepath}{singleton\xspace}
\newcommand{\oneonepaths}{singletons\xspace}
\newcommand{\onetwopath}{\oneOverTwo-path\xspace}
\newcommand{\onetwopaths}{\oneOverTwo-paths\xspace}
\newcommand{\mmoptpath}{augmenting path\xspace}
\newcommand{\mmoptpaths}{augmenting paths\xspace}
\newcommand{\oneonePaths}{Singletons\xspace}
\newcommand{\mmoptPaths}{Augmenting Paths\xspace}
\begin{document}

\title{Greedy Matching: Guarantees and Limitations}

\author{Bert Besser\thanks{Institut f\"ur Informatik, Goethe-Universit\"at Frankfurt am Main. Email: \texttt{besser@thi.cs.uni-frankfurt.de}. Partially supported by DFG SCHN 503/6-1.}
\and
Matthias Poloczek\thanks{School of Operations Research and Information Engineering, Cornell University. Email:~\texttt{poloczek@cornell.edu}.  Supported by the Alexander von
Humboldt Foundation within the Feodor Lynen program, and in part by NSF grant
CCF-1115256.}
}
\date{}

\maketitle

\begin{abstract}
Since Tinhofer proposed the \MDeg algorithm for maximum cardinality matching
in~1984, several experimental studies found the randomized algorithm to perform
excellently for various classes of random graphs and benchmark instances.
In contrast, only few analytical results are known.
We show that \MDeg cannot improve on the trivial approximation ratio
of~$\frac{1}{2}$ whp., even for bipartite graphs. 
Our hard inputs seem to require a small number of high-degree nodes.

This motivates an investigation of greedy algorithms on graphs with 
maximum degree~\maxdeg:
We show that~\MDeg achieves a~$\desiredRatio$-approximation for graphs
with~$\maxdeg {=} 3$ and for \maxdeg-regular graphs, and a guarantee of~$\weakerRatio$ for graphs with maximum degree~$\maxdeg$.
Interestingly, our bounds even hold for the deterministic \MDeg that breaks
all ties arbitrarily.
%

Moreover, we investigate the limitations of the greedy paradigm, using the model
of \emph{priority algorithms} introduced by Borodin, Nielsen, and Rackoff.
We study deterministic priority algorithms and prove
a~\desiredRatio-inapproximability result for graphs with maximum
degree~$\maxdeg$; thus, these greedy algorithms do not achieve a~$\frac{1}{2}
{+} \varepsilon$-approximation and in particular the~$\frac{2}{3}$-approximation
obtained by the deterministic \MDeg for~$\maxdeg {=} 3$ is optimal in this
class.
For~$k$-uniform hypergraphs we show a tight~$\frac{1}{k}$-inapproximability bound.

We also study fully randomized priority algorithms and give
a~$\frac{5}{6}$-inapproximability bound. Thus, they cannot compete with matching
algorithms of other paradigms.

%


%
%
\end{abstract}

\usetikzlibrary{decorations}
\usetikzlibrary{decorations.markings}
\usetikzlibrary{decorations.pathmorphing}
\usetikzlibrary{fixedpointarithmetic}

\tikzset{
 every node/.style={
  circle,
  draw=black,
  minimum size=4mm,
  inner sep=0mm
 },
 opt/.style={
  thick,
  double=white,
  double distance=2pt
 },
 mg/.style={
  thick,
  postaction={
   decorate,
   decoration={
    markings,
    mark=at position 0.5 with {
     \draw (-0.04,0.15) -- (-0.04,-0.15)
           ( 0.04,0.15) -- ( 0.04,-0.15);
    }
   }
  }
 },
 transfer/.style={
  >=triangle 45,
  ->
 },
 nonMcov/.style={
  fill=lightgray
 },
 altpath/.style={
  decoration={snake},
  decorate
 },
 alabel/.style={
  rectangle,
  fill=none,
  draw=none
 },
 comp/.style={
  lightgray,
  line width=6pt
 },
 compopt/.style={
  opt,
  double=lightgray
 },
 mylabel/.style={
  inner sep=0pt,
  fill=none,
  draw=none,
  minimum size=1pt
 }
}

\tikzset{VertexStyle/.style = {shape          = circle,
                                 text           = black,
                                 inner sep      = 2pt,
                                 outer sep      = 0pt,
                                 minimum size   = 12 pt}}
\tikzset{ImplicationStyle/.style = {-latex, thick}}
\tikzset{AlertImplicationStyle/.style = {-latex, very thick, color=#1},
		AlertImplicationStyle/.default = red
												}
\tikzset{EquivalenceStyle/.style = {latex-latex, thick}}
\tikzset{AlertEquivalenceStyle/.style = {latex-latex, very thick, color=#1},
		AlertEquivalenceStyle/.default = red
												}
\section{Introduction}
\label{section_intro}
Due to their simplicity and efficiency, greedy algorithms have been studied
intensely for the maximum cardinality matching problem, a problem that arises in
many applications
%
including image feature
matching \cite{Cheng96maximum-weightbipartite}, pairwise kidney
exchange~\cite{kidney2,tri12}, protein structure comparison
\cite{bsx08}, and low delay network traffic
routing~\cite{hs07}.

In~1984 Tinhofer~\cite{tin84} proposed the following three randomized greedy
algorithms.
All have in common that iteratively an edge is picked and added to the matching; to
obtain a feasible matching, afterwards both endpoints are removed from the graph
with all their incident edges. Thus, the crucial aspect is how the edge is
selected.
The first algorithm, simply referred to as~\greedy,  picks the edge uniformly at
random among all edges.
The modified randomized greedy algorithm~(\AlgFrieze), however,
selects a node, the so-called first endpoint, uniformly at random and matches
it to a randomly picked neighbor, the second endpoint.
Tinhofer's third algorithm, \MDeg, is identical, except that the first endpoint
is selected uniformly at random among all nodes that have minimum degree.
%
These greedy algorithms can be implemented in linear time using only simple data
structures; for \mingreedy we describe such a data structure in
Sect.~\ref{section_mingreedy_implementation}. A linear time implementation of
\AlgFrieze is proposed in~\cite{ps12}.

Note that~\MDeg can be interpreted as a refinement of~\AlgFrieze, since~\MDeg
prioritizes nodes that have a minimum number of matchable neighbors remaining.
%
And indeed, Tinhofer provides experimental and analytical results for
Erd\H{o}s-R\'enyi random graphs with varying density: here \mingreedy achieves
an expected matching size larger than \AlgFrieze or \greedy.
%
%
Frieze, Radcliffe, and Suen~\cite{frs95} found a superior performance of \MDeg
on random cubic graphs: for instance, \MDeg left only about~$10$ out
of~$10^6$ nodes unmatched, and hence performed better than \AlgFrieze by orders
of magnitude.
%
%
An excellent performance was also observed in experiments for random graphs with
small constant average degree~\cite{Magun97greedymatching}
and on graphs arising from a real world application~\cite{hs07}.

In contrast little is known about rigorous performance guarantees of
\MDeg:
%
%
The most important analytical result is due to Frieze, Radcliffe, and Suen~\cite{frs95} and
states that \MDeg leaves only~$o(n)$ nodes unmatched on random cubic graphs in
expectation.
No worst case analysis is known.
On the other hand, \AlgFrieze beats approximation ratio~$\frac{1}{2}$ by a small
constant~\cite{adfs95,ps12} on general graphs, whereas \greedy
cannot~\cite{df91}.
%
Recently, Chan, Chen, Wu, and Zhao~\cite{ccwz14} showed that the \ranking algorithm
of~\cite{kvv90} achieves at least a~$0.523$-approximation on general graphs; 
we refer to~\cite{my11,kmt11} for results on bipartite graphs and to~\cite{ps12}
for Erd\H{o}s-R\'enyi random graphs.

Our first result is a worst case analysis of \MDeg (see
Sect.~\ref{section_MinGreedy}): We propose a family of bipartite graphs and show
that \MDeg does not achieve approximation ratio~$\frac{1}{2} {+} \varepsilon$
whp.\ for any~$\varepsilon {>} 0$.
Thus, somewhat surprisingly, it performs even worse than the seemingly less
optimized \AlgFrieze algorithm on their respective hardest inputs.
Moreover, we show that two popular randomized variants of \MDeg also fail on
our graphs.

%

The closer the approximation ratio of~\MDeg is pushed towards~$\frac{1}{2}$, the
larger the number of nodes~$n$ must be in order to obtain the respective
upper bound on the approximation ratio.
It turns out that these graphs have a small number of high degree nodes: The
maximum degree is~$\Theta(n)$, whereas the average degree is only~$O(\sqrt{n})$.
We wonder if these nodes of high degree are essential in order to enforce a bad
performance for~\MDeg? This question motivates an investigation of graphs whose
maximum degree~$\maxdeg$ is bounded above by a constant.
For graphs with maximum degree at most three we show that \MDeg achieves an
approximation ratio of~$\frac{2}{3}$ (see Sect.~\ref{subsection_mindegree_bounded_deg_graphs}).
Moreover, we show that~\MDeg gives a~$\desiredRatio$-approximation
on~$\maxdeg$-regular graphs for any~$\maxdeg$.
%
We conjecture this guarantee to hold for all graphs with maximum degree~\maxdeg;
what we show now is a slightly weaker guarantee of \weakerRatio.

%

Note that our guarantees even hold if all ties are broken deterministically,
i.e.\ we select some minimum degree node arbitrarily and match it to an
arbitrary neighbor.
Therefore, it is even more striking that the worst case guarantee for \mingreedy is better
than the best known guarantee for the \emph{expected} approximation ratio of~$0.523$
for~\ranking~\cite{ccwz14} if~$\maxdeg \leq 11$; compared to the best known bound for the expected
approximation ratio of~$0.5000025$~\cite{adfs95} for~\AlgFrieze our
guarantee is better whenever~$\maxdeg \leq 100,001$ holds.
%
%
%
%
Regarding \greedy, our bound is better for all \maxdeg than the known expected
performance guarantee of $\oneOverTwo(\sqrt{(\maxdeg{-}1)^2{+}1}{-}\maxdeg{+}2)$
shown by Miller and Pritikin~\cite{mp97}; in particular, \greedy achieves an
expected~$0.618$-approximation for $\maxdeg=3$.

\medskip

We also study the inherent limitations of greedy algorithms for the maximum
matching problem. 
While the above mentioned algorithms are certainly all ``greedy'' in a very
natural sense, we need a formal characterization of what constitutes a greedy
algorithm for the problem at hand.
To this end, we use the model of \emph{priority algorithms} introduced by
Borodin, Nielsen, and Rackoff~\cite{bnr03}. This model has been applied
successfully to a large range of
problems~\cite{di09,bblm10,poloczek11,biyz12,hb13}, see~\cite{hb13} for a recent
summary.
Greedy algorithms explore the input myopically, making an
irrevocable decision for each piece.
In the vertex model proposed in~\cite{di09,bblm10} each piece of the input, also
referred to as \emph{data item}, corresponds to a node and its neighbors.
An \emph{adaptive priority algorithm} submits an ordering~$\pi$ on all data
items without looking at the input~$\cal I$, and receives the first data item
in~$\cal I$ according to~$\pi$, say for node~$u$.
Then the algorithm either matches~$u$ to one of its neighbors or isolates~$u$.
If the algorithm is additionally required to be \emph{greedy}, then it does not
have the option to isolate~$u$.
%
Note that the deterministic~\MDeg (that breaks all ties arbitrarily) can be
implemented as a greedy adaptive priority algorithm.

We show that no greedy adaptive priority algorithm achieves an approximation
ratio better than~$\desiredRatio$ on graphs with maximum degree~$\maxdeg \geq 3$
(see Sect.~\ref{section_greedy_priority}). In particular, no such deterministic
algorithm can guarantee an approximation ratio of~$\frac{1}{2} + \varepsilon$
for any~$\varepsilon > 0$, therefore providing evidence that randomness is
essential for greedy algorithms in order to guarantee a non-trivial
approximation.
For $k$-uniform hypergraphs we show a tight~$\frac{1}{k}$-inapproximability bound (see
Sect.~\ref{section_hypergraph_matching}).

For priority algorithms that are not required to be greedy we show an
inapproximability bound of~$\frac{2}{3}$ (see
Sect.~\ref{section_nongreedy_adaptive_algorithms_matching}); it relies on a graph of
maximum degree three, hence our performance guarantee for \mingreedy for~$\maxdeg = 3$ is
tight.
We also study randomized priority algorithms introduced in~\cite{ab10} and show
that these do not achieve an expected approximation ratio better
than~$\frac{5}{6}$ (see Sect.~\ref{section_randomized_priority}).
Therefore, these randomized greedy algorithms cannot achieve the performance of
algorithms based on augmenting paths or algebraic methods.
We point out that our class of randomized priority algorithms is very comprehensive; in particular, it contains all algorithms mentioned so far as well as the randomized algorithm of Karp and Sipser~\cite{ks81,afp98}.
Furthermore, our class subsumes the models of randomized algorithms proposed by Goel and Tripathi~\cite{gt12}.
In particular, their models do not contain \MDeg.

\subsection{The Maximum Cardinality Matching Problem}
\label{section_prelim}
Let~$ \g=(V,E) $ be an unweighted and undirected graph.
A matching~$ \m\subseteq E $ is a selection of edges of~\g such that no two edges in~\m share a node.
If no further edge of~$E \setminus M$ can be added to~\m without violating this condition, then~\m is called a \emph{maximal} matching.
If~\m has largest cardinality among all matchings in~$G$, then~\m is a~\emph{maximum} matching.
If~\m covers all nodes, we call it \emph{perfect}.

An algorithm is called an~$\mgapprox$-approximation algorithm for the maximum (cardinality) matching problem, if for any graph it finds in polynomial time a matching whose size is at least~$\mgapprox$ times the size of a maximum matching.
A randomized~$\mgapprox$-approximation algorithm is a polynomial time algorithm that obtains a matching whose \emph{expected} cardinality is at least~$\mgapprox$ times the optimal solution.

Note that any maximal matching~$\m$ has size at least half of the optimum. To see this, choose any maximum matching~$\mopt$ and observe that each edge in~$\m$ shares nodes with at most two edges in~$\mopt$.
Thus, any algorithm that finds a maximal matching is a~$\frac{1}{2}$-approximation algorithm. In particular, the greedy algorithms we consider in this article will have this property.
%
%
%
%
%

\paragraph*{Related Work outside the Greedy Paradigm.}
~A large variety of algorithmic techniques have been applied to the maximum matching problem. 
We provide a brief outline.

Edmonds~\cite{edm65} proposed the famous blossom algorithm that finds a maximum
matching in polynomial time by detecting augmenting paths.
Gabow~\cite{gabow76} showed how to implement this algorithm with running time~$O(|V|^3)$.
Micali and Vazirani gave an exact algorithm with running time~$O(\sqrt{|V|}
|E|)$, that iteratively finds short augmenting paths. However, its analysis and implementation are  non-trivial (cp.~\cite{vaz13,gab14} and further references therein).
%

Goldberg and Karzanov~\cite{gk04} proposed an algorithm based on flow techniques
with a running time of~$O(\sqrt{|V|} |E| \log{(n^2 / m)}/\log{n})$; to the best of our knowledge this is the fastest exact algorithm for dense non-bipartite graphs.

Using algebraic methods, a maximum matching can be found in time~$O(|V|^\omega)$
by the randomized algorithm of Mucha and Sankowski~\cite{ms04} (see
also~\cite{geelen00,harvey09}), where~$\omega$ is the exponent of the best known
matrix multiplication algorithm.
%
%

\section{The \MDeg Algorithm}
\label{section_MinGreedy}

\label{subsection_hard_graphs_mindegree}
Encouraged by its excellent performance on random graphs, we study the worst
case performance of \MDeg and present a family of hard graphs.
In particular, our graphs show that the expected approximation ratio of \mingreedy is at most~$ \oneOverTwo+\varepsilon $ whp. for any~$ \varepsilon>0 $.

Choose $a,b \in \mathbb{N}$, where~$b$ is even and~$b \mid a$, and let 
$c := 2\cdot\left\lceil\sqrt{a}\right\rceil$.
The graph~$G_{a,b}$ has $2a {+} c$ nodes partitioned as follows:
\begin{align*}
S_1 &= \{1, 2, \dots, a\}\\
S_2 &= \{a+1, a+2, \ldots, 2a\},\text{ and}\\
S_3 &= \{2a+1, 2a+2, \ldots, 2a+c\}.
\end{align*}
Edges are as follows (see Fig.~\ref{fig_mingreedy} for an example of the graph~$ G_{a,4} $):
\begin{enumerate}[label=\roman{*}.]
  \item\label{s1s3} Each node in $S_1$ is connected to \emph{every} node in
  $S_3$.
  \item\label{s1s2} Each node $i \in S_1$ is connected to node $a+i \in S_2$.
  \item\label{s3s3} Each $ S_3 $-node is connected to exactly one $ S_3 $-node.
  \item\label{s2s2} The crucial ingredient is: within $S_2$ we form disjoint
  cliques of size $b$.
\end{enumerate}

\begin{figure}[h]
\centering
\scalebox{.75}{
\begin{tikzpicture}
\node (s3_1) {};
\node (s3_2) at ($ (s3_1)+(0,-.75) $)[] {};
\node[alabel, below=0mm of s3_2] (dots) {$ \vdots $};
\node (s3_3)[below=2mm of dots] {};
\node (s3_4)at ($ (s3_3)+(0,-.75)$) {};

\node at($ (s3_1)+(2,.75) $) (s1_1) {};
\node[] at ($(s1_1)+(0,-.5)$) (s1_2) {};
\node[] at ($(s1_2)+(0,-.5)$) (s1_3) {};
\node[] at ($(s1_3)+(0,-.5)$) (s1_4) {};
\node[alabel, below=0mm of s1_4](dots) {$ \vdots $};
\node[, below=2mm of dots](s1_5) {};
\node[]at ($(s1_5)+(0,-.5)$) (s1_6) {};
\node[]at ($(s1_6)+(0,-.5)$) (s1_7) {};
\node[]at ($(s1_7)+(0,-.5)$) (s1_8) {};

\node at($ (s1_1)+(2,0) $) (s2_1) {};
\node[] at($ (s2_1)+(-.5,-.5) $)(s2_2) {};
\node[] at($ (s2_1)+(+.5,-1) $)(s2_3) {};
\node[] at($ (s2_1)+(0,-1.5) $)(s2_4) {};
\node[alabel, below=0mm of s2_4] (dots) {$ \vdots $};
\node[, below=2mm of dots](s2_5) {};
\node[]at($ (s2_5)+(-.5,-.5) $) (s2_6) {};
\node[]at($ (s2_5)+(.5,-1) $) (s2_7) {};
\node[]at($ (s2_5)+(0,-1.5) $) (s2_8) {};

\draw[opt] (s3_1) -- (s3_2);
\draw[opt] (s3_3) -- (s3_4);

\draw[opt] (s2_1) -- (s1_1);
\draw[opt] (s2_2) -- (s1_2);
\draw[opt] (s2_3) -- (s1_3);
\draw[opt] (s2_4) -- (s1_4);

\draw[opt] (s2_5) -- (s1_5);
\draw[opt] (s2_6) -- (s1_6);
\draw[opt] (s2_7) -- (s1_7);
\draw[opt] (s2_8) -- (s1_8);

\draw[] (s2_1) -- (s2_2);
\draw[] (s2_1) -- (s2_3);
\draw[] (s2_1) -- (s2_4);
\draw[] (s2_2) -- (s2_3);
\draw[] (s2_2) -- (s2_4);
\draw[] (s2_3) -- (s2_4);

\draw[] (s2_5) -- (s2_6);
\draw[] (s2_5) -- (s2_7);
\draw[] (s2_5) -- (s2_8);
\draw[] (s2_6) -- (s2_7);
\draw[] (s2_6) -- (s2_8);
\draw[] (s2_7) -- (s2_8);

\draw (s3_1) -- (s1_1);
\draw (s3_1) -- (s1_2);
\draw (s3_1) -- (s1_3);
\draw (s3_1) -- (s1_4);
\draw (s3_1) -- (s1_5);
\draw (s3_1) -- (s1_6);
\draw (s3_1) -- (s1_7);
\draw (s3_1) -- (s1_8);
\draw (s3_2) -- (s1_1);
\draw (s3_2) -- (s1_2);
\draw (s3_2) -- (s1_3);
\draw (s3_2) -- (s1_4);
\draw (s3_2) -- (s1_5);
\draw (s3_2) -- (s1_6);
\draw (s3_2) -- (s1_7);
\draw (s3_2) -- (s1_8);
\draw (s3_3) -- (s1_1);
\draw (s3_3) -- (s1_2);
\draw (s3_3) -- (s1_3);
\draw (s3_3) -- (s1_4);
\draw (s3_3) -- (s1_5);
\draw (s3_3) -- (s1_6);
\draw (s3_3) -- (s1_7);
\draw (s3_3) -- (s1_8);
\draw (s3_4) -- (s1_1);
\draw (s3_4) -- (s1_2);
\draw (s3_4) -- (s1_3);
\draw (s3_4) -- (s1_4);
\draw (s3_4) -- (s1_5);
\draw (s3_4) -- (s1_6);
\draw (s3_4) -- (s1_7);
\draw (s3_4) -- (s1_8);

\draw[dashed,rounded corners] ($(s1_1)+(-.5,+.5)$) rectangle ($(s1_8)+(+.5,-.5)$);
\draw[dashed,rounded corners] ($(s2_1)+(-1,+.5)$) rectangle ($(s2_8)+(+1,-.5)$);
\draw[dashed,rounded corners] ($(s1_1)+(-2.5,+.5)$) rectangle ($(s1_8)+(-1.5,-.5)$);

\node[alabel] (s1) at ($ (s1_8)+(0,-1) $) {$ S_1 $};
\node[alabel]  (s2)at ($ (s1)+(2,0) $) {$ S_2 $};
\node[alabel] (s3)at ($ (s1)+(-2,0) $) {$ S_3 $};
\end{tikzpicture}
}
\caption[$G_{a,4}$]{The graph $G_{a,4}$. The edges represented by double lines form a perfect matching
}
\label{fig_mingreedy}
\end{figure}
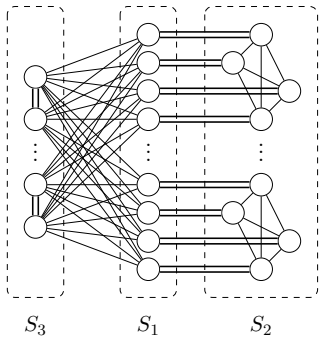

Observe that a perfect matching of size $ a+\lceil\sqrt{a}\rceil $ consists of
the $ a $ edges connecting the $ S_1 $-node with the $ S_2 $-nodes and the $
\lceil\sqrt{a}\rceil $ edges within $ S_3 $.
The edges of the perfect matching are displayed as double edges in Fig.~\ref{fig_mingreedy}.

Our intention is that an instantiation of the \MDeg algorithm matches $ S_2
$-nodes with $ S_2 $-nodes initially, since these nodes will have minimum
degree.  Thereby the respective algorithm may add $ \frac{|S_2|}{2}=\frac{a}{2}
$ edges within the cliques of $ S_2 $ to its matching.
When all $S_2$ nodes are matched, all remaining edges are incident with nodes
in~$S_3$.
%
Hence at most $ |S_3|=2\lceil\sqrt{a}\rceil $ additional edges can be added to
the matching. If the randomized algorithm follows this intended scheme, then it
achieves an approximation ratio of $
(\frac{a}{2}+2\lceil\sqrt{a}\rceil)/(a+\lceil\sqrt{a}\rceil) $ on the
graph~$G_{a,b}$, which tends to $ \oneOverTwo $ for large $ a $.
However, it turns out that the randomized \MDeg algorithm does not strictly
adhere to this plan.

\subsection{A Worst Case Analysis of \MDeg and two Variants}
\label{subsection_mindegree_uniform_TB}
Hougardy~\cite{h09} considered deterministic algorithms that iteratively pick an edge~$(u,v)$ such that either~$u$ or~$v$ has currently minimum degree. He proposed a family of graphs for which any such algorithm achieves an approximation ratio of~$\frac{1}{2} + o(1)$. 
Since the difficulty of these graphs relies on the assumption that all ties are broken towards the worst case, they are no obstacle for~\mingreedy.
%

We begin with a worst case study of Tinhofer's \MDeg. 
In each iteration the algorithm selects a node, the first endpoint, uniformly at
random among all nodes of minimum degree and matches it to the second endpoint; this neighbor is also picked uniformly at random.
It is a well-known fact that for every node~$v$ in a connected
graph there is a maximum matching that covers~$v$ 
(see~\cite{ef65} for a stronger statement).
This provides additional motivation for \MDeg, since we have a good chance to
pick an optimal edge if the first endpoint has small degree. 
%
%
However, we show:
%
\begin{theorem}
\label{theorem_mindegree_randomized}
\MDeg cannot approximate the Maximum Matching Problem within~$\frac{1}{2} +
\varepsilon$ whp.\ (for any~$\varepsilon > 0$), even on bipartite graphs.
%
\end{theorem}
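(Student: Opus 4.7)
The plan is to show that for every $\varepsilon>0$ one can choose a sufficiently large even constant $b=b(\varepsilon)$ (with $b\mid a$) so that, as $a\to\infty$, the algorithm \MDeg returns on $G_{a,b}$ a matching of size at most $\left(\oneOverTwo+\varepsilon\right)(a+\lceil\sqrt a\rceil)$ whp.  Since a perfect matching has size $a+\lceil\sqrt a\rceil$, an upper bound on the matching produced translates directly into an upper bound on the approximation ratio.  I would decompose the execution into an ``$S_2$-phase'' (while some $S_2$-node is still active) and a trailing ``$S_1$--$S_3$-phase''.

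First I would track degrees.  Initially each $S_2$-node has degree $b$, each $S_1$-node has degree $c+1$ with $c=2\lceil\sqrt a\rceil$, and each $S_3$-node has degree $a+1$.  The key invariant for the $S_2$-phase is: every still-active $S_2$-node belonging to a clique of current size $k$ has degree exactly $k$, because its unique $S_1$-partner can leave the graph only by being matched to this very $S_2$-node.  Combined with the observation that $S_1$-nodes keep degree at least $c$ and $S_3$-nodes keep degree at least $a-k$ throughout, this shows that the minimum-degree vertex picked by \MDeg always lies in $S_2$ as long as any $S_2$-node is active.

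Next I would argue that the cliques in $S_2$ are processed \emph{atomically}.  As soon as \MDeg has taken one step inside a clique $C$, the active size of $C$ has dropped below $b$, whereas every untouched clique still has size $b$; hence the next minimum-degree vertex is again in $C$, and this continues until $C$ is exhausted.  Consequently the random variables $X^{(1)},\ldots,X^{(a/b)}$, where $X^{(j)}$ counts the $S_1$--$S_2$ edges contributed by the $j$th clique, are i.i.d.\ copies of a variable $X_b$ satisfying
\begin{equation*}
E[X_k]=\frac{1}{k}\bigl(1+E[X_{k-1}]\bigr)+\frac{k-1}{k}\,E[X_{k-2}],\qquad E[X_0]=0,\ E[X_1]=1.
\end{equation*}
A short induction shows $E[X_{2k-1}]=E[X_{2k}]=E[X_{2k-2}]+\tfrac{1}{2k-1}$, and hence $E[X_b]=\sum_{j=1}^{b/2}\tfrac{1}{2j-1}=\tfrac12\ln b+O(1)$, so $E[X_b]/b$ can be driven below any prescribed constant by enlarging $b$.

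Finally, set $k:=\sum_j X^{(j)}$.  Since the summands are i.i.d.\ and bounded by $b$, Hoeffding's inequality yields $k\le E[k]+\tfrac{\varepsilon}{2}\cdot a$ whp, which gives $k/a\le\varepsilon$ once $b$ has been fixed large enough.  When the $S_2$-phase terminates, the matching already contains $(a-k)/2$ edges inside $S_2$ and $k$ edges of the form $S_1$--$S_2$; the residual graph has $a-k$ $S_1$-nodes of degree $c$ and $c$ $S_3$-nodes of degree at least $a-k$, so \MDeg picks first endpoints in $S_1$ and can add at most $c=O(\sqrt a)$ further edges.  The total size is $\tfrac{a+k}{2}+O(\sqrt a)$, giving ratio at most $\oneOverTwo+\tfrac{k}{2a}+o(1)\le\oneOverTwo+\varepsilon$ whp.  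The main technical obstacle I expect is to justify rigorously the atomic-processing claim---in particular, to verify uniformly over the execution that the $S_2$-degrees stay strictly below the $S_1$- and $S_3$-degrees, and that the random choices inside different cliques really factor independently despite the global tie-breaking of \MDeg.  A minor adjustment is also needed for the bipartite case: replacing each $b$-clique in $S_2$ by a $K_{b/2,b/2}$ and splitting $S_1$ and $S_3$ appropriately yields a bipartite graph on which the same degree invariants and independence argument carry through essentially unchanged.
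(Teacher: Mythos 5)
Your proof is correct but follows a genuinely different route from the paper's. The paper fixes $b=\sqrt{a}$ (so $b$ grows with $a$) and, instead of solving the recurrence for $E[X_b]$ exactly, pessimistically upper-bounds the probability of the $i$-th $(S_2,S_1)$-matching in a clique by $\tfrac{1}{\sqrt{a}-i}$ (assuming all prior steps were $(S_2,S_1)$-matchings and hence advanced the clique by only one node), getting $E[\sum_i Z_i]\leq\ln a$, and then applies a multiplicative Chernoff bound plus a union bound over the $\sqrt{a}$ cliques; this gives $o(a)$ $(S_2,S_1)$-matchings whp.\ and drives the ratio to exactly $\tfrac{1}{2}$. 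Your choice of a large constant $b=b(\varepsilon)$, the exact formula $E[X_b]=\sum_{j=1}^{b/2}\tfrac{1}{2j-1}$, and the i.i.d.--plus--Hoeffding argument (justified by the atomic-clique observation, which the paper also uses) is a clean, legitimate alternative; the price is a limiting ratio of $\tfrac{1}{2}+\Theta(\ln b/b)$ rather than exactly $\tfrac{1}{2}$, which still suffices. Both proofs gloss somewhat over the coupling that formally separates ``the invariant holds'' from ``$k$ is small''. The one place your sketch is materially thinner is the bipartite adaptation: replacing each $b$-clique by $K_{b/2,b/2}$ puts the two halves of a clique on opposite sides of the bipartition, hence the corresponding unique $S_1$-partners are split across sides too; but every $S_1$-node is adjacent to \emph{all} of $S_3$, so $S_3$ would have to lie on both sides. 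Making this work requires also splitting $S_1$ and $S_3$ and redefining the $S_1$--$S_3$ edges, which alters the degree relations the invariant depends on and needs to be re-checked. The paper instead duplicates $G_{a,\sqrt a}$ into two copies $L,R$, removes the $S_2$- and $S_3$-internal edges and redirects them across the copies, giving a bipartite graph with bipartition $(S_1^L\cup S_2^R\cup S_3^R,\ S_1^R\cup S_2^L\cup S_3^L)$ in which all relevant degrees are essentially unchanged.
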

\begin{proof}
We consider the graph~$G_{a, \sqrt{a}}$, and let~$a$ be a large (integral)
square number for the sake of convenience.
The crucial property~$P$ is that all nodes of minimal degree are
in~$S_2$. Initially, this is truly the case: 
Recall that by construction every node in~$S_2$ has degree~$\sqrt{a}$, whereas
the degrees are~$2\sqrt{a}+1$ in~$S_1$ and $a+1$ in~$S_3$.

In the following, the notion of a \emph{$(S_j,S_k)$ matching}, for~$k,j \in
\{1,2,3\}$,  means that a node in set~$S_j$ is selected by the algorithm because
of its minimal degree and matched to a neighbor in set~$S_k$.
When may the property~$P$ be violated?
\begin{itemize}
\item Every $(S_2,S_2)$ matching reduces the degree of their clique-neighbors by
2 and the degrees of their \emph{unique} $S_1$ neighbors by 1. Since any node in
$S_1$ is affected only once, this case does not pose any danger.
\item Every $(S_2,S_1)$ matching reduces the degree of all nodes in $S_3$ and
all nodes in the respective clique by 1, but it has no effect on any other node in
$S_1$. Hence, a large number of $(S_2,S_1)$ matchings reduces the degree of nodes in
$S_3$ from its initial value $a+1$ below the degrees of nodes in $S_2$ (which is
at most $\sqrt{a}$). In that event the algorithm prefers matching the nodes in
$S_3$, thereby drastically reducing the degrees in $S_1$. As a consequence,
a large number of nodes in $S_1$ might appear in the matching and the algorithm
may come up with a good approximation.
We will see, however, that this scenario is very unlikely.
\end{itemize} 
First we assume that~$P$ is preserved. Later we show that~$P$ holds whp.\
throughout the computation.
We need to bound the expected number of~$(S_2,S_1)$ matchings.
Observe that the algorithm processes the cliques of~$S_2$ one after another,
fixing all nodes of a clique before moving on to the next.

At first we focus on a single clique in~$S_2$ that initially has~$\sqrt{a}$
nodes.
The probability that a node $v \in S_2$ is matched to its neighbor $u \in S_1$,
conditioned on the event that \MDeg selects~$v$ as first endpoint,
%
depends on the number~$x$ of $(S_2,S_2)$ matchings and the number~$y$ of
$(S_2,S_1)$ matchings that have been performed in the respective clique so
far.
In particular, the conditional probability equals $\frac{1}{\text{deg}(v) - 2x -
y} = \frac{1}{\sqrt{a} - 2x - y}$, since~$v$ has~$\sqrt{a}-1$ neighbors in~$S_2$
and one in~$S_1$.
%
%
%
Then the expected number of $(S_2, S_1)$ matchings (for a single clique) is
given by the sum of such probabilities, where \mbox{$0 \leq 2x+y < \sqrt{a}$} holds.

For what values of~$x$ and~$y$ a random trial is performed depends on the
outcome of previous trials.
We attain a worst case perspective. Let~$Z_i$ for~$0 \leq i < \sqrt{a}$ be a
binary random variable with success probability~$p_i = \frac{1}{\sqrt{a} - i}$.
Then~$\sum_i{Z_i}$ is an upper bound on the number of $(S_2, S_1)$ matchings for
a single clique.
If~$a$ is sufficiently large, the expected value is upper-bounded by
$\E\left[\sum_i{Z_i} \right] = \sum_{i=1}^{\sqrt{a}}{\frac{1}{i}} \leq \ln(a)$,
where it is implicitly assumed that all previous actions were $(S_2,S_1)$
matchings.

%
To show that property~$P$ is likely to be preserved, we show that~$\sum_i{Z_i}$
is concentrated at its expected value, using a variant of the Chernoff bound
(cp.\ Sect.~1.6 in~\cite{dp09}).
Then,
\begin{equation*}
\mathrm{prob}\left[\sum_i{Z_i} \geq
\left(1+\beta\right)\cdot \ln(a)\right] 
\leq \left(\frac{e^{\beta}}{\left(1+\beta\right)^{(1+\beta)}}\right)^{\ln(a)}
\leq \frac{1}{a^\frac{3}{2}}
\end{equation*}
where the second inequality follows by choosing~$\beta = e^2$.
We employ the Union bound to obtain a lower bound of
$1-\sqrt{a} / a^{\frac{3}{2}} = 1 - \frac{1}{a}$
on the probability that for none of the~$\sqrt{a}$ cliques inside~$S_2$ the sum
deviates by more than a factor of~$1 + \beta$ from its expected value.

Hence, there are whp.\ at most~$O\left(\sqrt{a} \cdot \ln(a)\right) = o(a)$
$(S_2,S_1)$ matchings in total and in particular property~$P$ is preserved whp.\
throughout the course of the algorithm, since a violation would require
$\Omega(a)$ $(S_2,S_1)$ matchings.
Furthermore, the size of the matching returned by \MDeg on~$G_{a, \sqrt{a}}$ is
at most~$\frac{a}{2} + o(a)$ with high probability, whereas the maximum matching
in the graph has cardinality at least~$a$.

To obtain a bipartite graph, we take two copies~$L,R$ of~$G_{a, \sqrt{a}}$ and
let~$S_i := S_i^L {\cup} S_i^R$. First we remove the edges within~$S_3$ and
create an edge for each node in~$S_3^L$ and its counterpart in~$S_3^R$.
Then we remove all edges within~$S_2$ and form complete bipartite cliques of
each former clique in~$S_2^L$ and its counterpart in~$S_2^R$. Note that the
degree of nodes in~$S_2$ have increased by one, which is negligible in our
analysis. All other degrees do not change.
The maximum matching increased by at least~$a$ edges, whereas the output 
matching grows by~$\frac{a}{2} {+} o(a)$.
\end{proof}
There are two natural variants of \mingreedy.
The first variant, \EDSM (enhanced Degree-Sequenced Matching), matches a node of
minimum degree to a neighbor that has in turn minimum degree among all neighbors.
All ties are broken uniformly at random.
Hosaagrahara and Sethu \cite{hs07} apply \EDSM to the problem of assigning
network packets from input ports to output ports to achieve low average delay
and find that it performs very well on real and synthetic traffic traces.
Another variant proposed in \cite{hs07} reduces to \EDSM on our hard
instances.
\MDS (Minimum Degree Sum) picks the next edge such that the sum of the degrees
of its endpoints is minimal, also breaking ties at random.
Both algorithms ignore edges incident with already matched nodes.

\EDSM and \MDS achieve only a trivial expected approximation ratio, as we show next.
For \MDS, this fact also follows from the definition of graphs given in \cite{h09}.
We present a unified construction that applies to both algorithms.

\begin{theorem}
\label{theorem_mindegree_varianten}
For any~$\varepsilon > 0$, there is a bipartite graph such that neither \EDSM nor
\MDS achieve approximation ratio~$\frac{1}{2} + \varepsilon$.
\end{theorem}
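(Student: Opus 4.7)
The plan is to reuse essentially the hard graph $G_{a,\sqrt{a}}$ from the proof of Theorem~\ref{theorem_mindegree_randomized}, with a large integer $a$ such that $\sqrt{a}$ is even. The crucial point is that \EDSM and \MDS are strictly more constrained than \MDeg and can be forced to match an $S_2$-node with a clique-neighbor in \emph{every} step, with probability one. This sidesteps the delicate Chernoff-style analysis of Theorem~\ref{theorem_mindegree_randomized} and collapses the argument to a purely deterministic comparison of degrees (respectively of degree-sums).

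The heart of the proof is the following invariant, maintained by induction on the number of $(S_2,S_2)$-matches performed so far: after $k$ edges have been matched inside a clique $C \subseteq S_2$, its remaining nodes have degree $\sqrt{a}-2k$, their $S_1$-partners still have degree at least $2\sqrt{a}$, and $S_3$-nodes are untouched with degree $a+1$. Under this invariant, for \EDSM the first endpoint must lie in $S_2$ (minimum global degree), and the second endpoint, chosen of minimum degree among the neighbors of the first, must also be in the same clique since the $S_1$-partner of the first endpoint has a much larger degree. For \MDS the analogous comparison is that an $(S_2,S_2)$-edge in $C$ has degree-sum $2(\sqrt{a}-2k)$, every $(S_2,S_1)$-edge has sum at least $(\sqrt{a}-2k)+2\sqrt{a}$, and every edge incident to $S_3$ has sum at least $a+1$; for sufficiently large $a$ the $(S_2,S_2)$-edges strictly win. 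The invariant is preserved because each such match only removes two $S_2$-nodes and decrements two $S_1$-degrees by one, leaving $S_3$ alone.

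Once the $S_2$-phase terminates, exactly $|S_2|/2 = a/2$ edges have been added and the residual graph is supported on $S_1 \cup S_3$. Since $S_1$ is independent there, every further edge uses at least one $S_3$-node, so at most $|S_3| = 2\sqrt{a}$ additional edges can be matched. Hence the output has size at most $a/2 + 2\sqrt{a}$, while the perfect matching has size $a + \sqrt{a}$, giving an approximation ratio of at most $\oneOverTwo + O(1/\sqrt{a})$, which is below $\oneOverTwo + \varepsilon$ for $a$ large enough. To obtain a \emph{bipartite} hard instance, I would reapply the $L/R$-doubling construction from the end of the proof of Theorem~\ref{theorem_mindegree_randomized} (pair $S_3$-counterparts across $L$ and $R$, replace each $S_2$-clique by a complete bipartite graph between its $L$- and $R$-copies). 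This raises $S_2$-degrees by only one, which does not change any of the strict inequalities above.

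The main obstacle lies in the \MDS case: unlike \EDSM, which only needs to compare a chosen first endpoint against its \emph{own} neighbors, \MDS must compare the minimum $(S_2,S_2)$-sum against \emph{every} remaining edge in the graph, including $(S_1,S_3)$-edges whose endpoints may have lost some neighbors along the way. The bounds on $S_1$- and $S_3$-degrees inside the invariant are sharp enough for large $a$ to make this comparison go through, but the calculation has to be checked carefully; by contrast the \EDSM argument is essentially immediate once the invariant is in place.
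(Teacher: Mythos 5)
Your argument is correct, but it goes through a noticeably more elaborate graph and bookkeeping than the paper needs. The paper instantiates the same family with $b=2$, i.e.\ the graph $G_{a,2}$ where each $S_2$-clique is a single edge: then every $S_2$-node has degree exactly two for as long as it survives (a match in one size-2 clique leaves all other $S_2$-degrees untouched), so the minimum degree and the minimum degree-sum are trivially achieved in $S_2$ at every step, and no induction or invariant is required at all. Your choice of $G_{a,\sqrt{a}}$ makes the same conclusion true but forces you to track the decaying clique degrees $\sqrt{a}-2k$, argue that the parity constraint (even clique size) keeps the minimum-degree vertex a clique vertex rather than an $S_1$-partner, and check the $\MDS$ degree-sum comparison across cliques at different stages of depletion; all of that is correct as you wrote it, just unnecessary. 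The bipartite repair is also different: you reuse the $L/R$-doubling from Theorem~\ref{theorem_mindegree_randomized}, whereas the paper removes the $S_3$-internal edges and introduces a fresh set $S_3'$, routing odd $S_1$-nodes to $S_3$ and even ones to $S_3'$; both are legitimate, and neither perturbs the low-degree status of $S_2$. One small remark: the claim ``for sufficiently large $a$ the $(S_2,S_2)$-edges strictly win'' is overly cautious — with your invariant the inequality $2(\sqrt{a}-2k) < (\sqrt{a}-2k) + 2\sqrt{a}+1$ is strict for every $a$ and $k$, so no largeness is needed for the \MDS step itself (largeness is only used to push the ratio below $\tfrac12+\varepsilon$).
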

\begin{proof}
As worst case input we use the graph~$G_{a,2}$. See the beginning of
Sect.~\ref{subsection_hard_graphs_mindegree} for description of the family of graphs,
and our intention of how the algorithm should proceed.
%
%
We show below how to make the graph bipartite.
Recall that the degrees of $ S_3 $- and $ S_1 $-nodes increase with $ a $,
whereas all $ S_2 $-nodes have degree two.

As desired, both \EDSM and \MDS start by repeatedly matching a node in $ S_2 $
with its unique neighbor in $ S_2 $ until all nodes in $ S_2 $ are matched.
Thus, the claimed bound on the approximation ratio follows from the discussion
at the beginning of Sect.~\ref{subsection_hard_graphs_mindegree}.

To obtain a bipartite graph, we first remove all inner edges of~$S_3$, thereby
decreasing the size of the maximum matching
by~$\left\lceil\sqrt{a}\right\rceil$.
Note that if all nodes of~$S_1$ were connected to all nodes in~$S_3$, there
would be cycles of odd length. Hence we add a set~$S_3'$ of nodes,
with~$|S_3'|=|S_3|=2\left\lceil\sqrt{a}\right\rceil$, and connect odd $ S_1
$-nodes with $ S_3 $-nodes and even $ S_1 $-nodes with $ S_3' $-nodes such that
these edges are evenly distributed over the nodes of $ S_3 $ and $ S_3' $.
The matching obtained by the algorithm is increased by~$o(a)$ only. 
\end{proof}
%
%

\subsection{Guarantees for \MDeg on Bounded Degree Graphs}
\label{subsection_mindegree_bounded_deg_graphs}
In the construction of Theorem~\ref{theorem_mindegree_randomized} the
approximation ratio of \mingreedy converges to~\oneOverTwo as the maximum degree
increases. In this section we study degree bounded graphs.

We state that if all degrees are bounded above by~$ \maxdeg=2 $, then any
algorithm that picks an edge incident to a node of degree one (if such a node
exists) computes a maximum matching.
%
Thus, in particular \mingreedy and \karpsipser~\cite{ks81}, that picks a random
edge incident with a degree-1 node, if such an edge exists, and a random edge
otherwise, are optimal on such graphs.

In what follows we show that if the degrees in the input are bounded above
by~$\maxdeg \geq 3$, then the approximation ratio of \mingreedy is strictly
better than~$\frac{1}{2}$.
Interestingly, our guarantee holds for any algorithm that iteratively picks an
edge that is incident with some node of current minimum degree.
%
Thus, we do not use the feature that \mingreedy breaks all ties uniformly at random.
In particular, our bounds hold for the deterministic variant of
\mingreedy that picks an arbitrary non-isolated node of minimum degree and matches it with
an arbitrary neighbor. 
%
%
%
Then both nodes and all their incident edges are removed from the graph. Therefore each node left in the graph is unmatched, and the algorithm iterates on the remaining vertices.

First we study graphs of bounded degree~$ 3 $ and show that the deterministic variant of \mingreedy, achieves an approximation ratio of~$\frac{2}{3}$.
Along the same lines, we also obtain a guarantee of~$\desiredRatio$ for
\maxdeg-regular graphs.

In Sect.~\ref{section_mingreedy_guarantee_larger_maxdegs} we prove a slightly worse bound for graphs of maximum degree~\maxdeg.

\begin{theorem}
\label{thm:threeDeb}
If all degrees are bounded above by $ \maxdeg=3 $ or the input graph is \maxdeg-regular with $ \maxdeg\geq4 $, then \mingreedy achieves approximation ratio at least $\desiredRatio$.

More generally, these bounds hold for any algorithm which selects edges incident with a node of minimum degree.
\end{theorem}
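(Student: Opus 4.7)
The plan is to compare $M$ with a maximum matching $M^*$ through their symmetric difference $H := M \oplus M^*$, whose components are alternating paths and even cycles because every vertex has degree at most two in $M \cup M^*$. I classify components by the local ratio of $M$-edges to $M^*$-edges within them: \oneonepaths (single $M$-edges), length-two paths, and even cycles each achieve ratio at least the target value \desiredRatio; a \mmoptpath of length $2k+1$ with $k \geq 2$ achieves ratio $k/(k+1) \geq 2/3$, which is at least \desiredRatio for every $\maxdeg \geq 3$; and there are no isolated $M^*$-edges because $M$ is maximal. The only dangerous components are therefore the \onetwopaths (three-edge paths $M^*$-$M$-$M^*$), which have ratio exactly $1/2$. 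A direct component-wise comparison reduces the theorem to establishing the inequality
\[
  p \;\leq\; (\maxdeg - 2)\cdot s,
\]
where $p$ counts 1/2-paths and $s$ counts $M$-edges lying in balanced components.

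The heart of the argument is a charging scheme that pays off the deficit of each 1/2-path using surplus from balanced $M$-edges. Fix a 1/2-path $u$-$v$-$w$-$x$ with $(v,w) \in M$ and consider the iteration in which the algorithm selected $(v,w)$. Without loss of generality $v$ was chosen because of minimum degree, so every residual neighbor of $u$ at that moment had degree at least $\deg(v)$, and in particular $u$ had at most $\maxdeg - 1$ residual neighbors besides $v$. Since $u$ is unmatched in $M$ yet had its $M^*$-neighbor $v$ taken away by the algorithm, there must exist an $M$-edge whose later selection removed $u$'s last matchable neighbor, and I would charge the 1/2-path to this $M$-edge. An analogous candidate exists on the $x$-side, yielding two potential charging targets per 1/2-path. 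The combinatorial task is then to route the charges so that each balanced $M$-edge receives at most $\maxdeg - 2$ units, matching the bound dictated by the degree constraint.

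The special structure of the two sub-cases enters as follows. For $\maxdeg = 3$ the charging budget $\maxdeg - 2 = 1$ is very rigid: each balanced $M$-edge accepts at most one unit of charge, which I would verify directly by checking that when a balanced edge $(a,b)$ is selected, its two endpoints together have at most two residual neighbors besides each other. The ratio $\frac{2}{3}$ then drops out of the accounting immediately. For the $\maxdeg$-regular case with $\maxdeg \geq 4$ I would exploit the regularity to bypass the slack otherwise present in a non-regular graph: because every vertex begins with degree exactly $\maxdeg$, the early min-degree selections are essentially arbitrary, and a careful tracking of the cumulative degree drops allows a symmetric charging with up to $\maxdeg - 2$ units distributed over each balanced edge.

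The main obstacle is controlling the concentration of charges on individual balanced $M$-edges. A priori, an edge $(a,b)$ is adjacent to up to $2(\maxdeg - 1)$ other vertices, any of which could in principle be a 1/2-path endpoint; this naive bound exceeds the required budget $\maxdeg - 2$ by roughly a factor of two. Sharpening the count requires an invariant that pairs up the two endpoints $u, x$ of each 1/2-path and uses the min-degree property to route the two associated charges to different sides of the chosen balanced $M$-edge. I expect to formalize this as a transfer of credit \transferred between components of $H$ maintained across iterations; establishing the invariant is the delicate technical core of the proof.
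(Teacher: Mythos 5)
There is a genuine gap in the reduction step. Let $q$ denote the number of augmenting paths of length at least five, $s$ the number of $M$-edges in all non-\onetwopath components, and $p$ the number of \onetwopaths. Since each of the $q$ long augmenting paths contributes one more $M^*$-edge than $M$-edges, we have $|M| = p+s$ and $|M^*| = 2p + s + q$, and $|M| \geq \desiredRatio\,|M^*|$ rearranges to $p \leq (\maxdeg-2)s - (\maxdeg-1)q$. This is strictly stronger than the inequality $p \leq (\maxdeg-2)s$ you propose to prove. The concrete failure: a length-five augmenting path has local ratio exactly $2/3$, which \emph{equals} the target $\desiredRatio$ when $\maxdeg=3$; it carries no surplus at all, yet your scheme budgets each of its $M$-edges at $\maxdeg-2 = 1$ unit. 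An instance whose $H$-components are only \onetwopaths and length-five augmenting paths therefore cannot be covered by your accounting, no matter how the charging is routed. (A smaller imprecision: in $M \oplus M^*$ there are no one-edge components at all, because $M$ and $M^*$ are both maximal; the one-edge components you need are edges of $M \cap M^*$, which exist only in $M \cup M^*$.)

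The paper closes this gap by making the transfers two-sided: $M$-covered nodes of long augmenting paths are also \emph{sources} of outgoing transfers, each up to $\maxdeg-2$ per node, so that a length-five path in a cubic graph ends up in \emph{need} of incoming credits rather than being a donor. The invariant that makes this bookkeeping close is that every component falls at least two short of its degree-allowed debits (Lemma~\ref{lemma:lowDebs}), while every augmenting-path endpoint receives at least $\maxdeg-2$ credits (Lemma~\ref{lemma_credits}); the $\maxdeg=3$ case needs a further lemma (Lemma~\ref{lemma:maxdeg3Graphs}) for augmenting paths that miss only one debit. This multi-case analysis of the step that creates each component is precisely the ``delicate technical core'' you acknowledge is still missing, so the argument is not yet a proof even modulo the reduction issue.
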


Let $\g=(V,E)$ be a connected graph and \m the matching computed by \mingreedy.
Given~\m we choose a maximum matching \mopt for the analysis such that the connected components of the graph $ H=(\m\cup\mopt) $ are of two types:
%
%
An \emph{\mmoptpath} $ X $ has $ \w{X}\geq1 $ edges of \m and $ \wopt{X}=\w{X}+1 $ edges of \mopt that alternate. In particular, any augmenting path starts and ends with an \mopt-edge.
%
%
%
The second type are \emph{\oneonepaths}. Such a component is a path of length one in~$H$, i.e.\ its edge belongs both to~\m and~\mopt.
%
%
%

Why can we always find such an optimal matching~\mopt? Observe that~$H$ has maximum
degree two, hence it consists of cycles and paths. Every component that is not a singleton alternates
between~\m and~\mopt, since every node has degree at most one in each matching. 
Then the crucial observation is that each component~$ X $ that is either an even-length cycle
or an even-length path can be replaced by~$ \w{X} $ \oneonepaths; we simply exchange the respective edges of~\mopt by the ones of~\m.
%
%
%

\inlineheading{Local Approximation Ratios.}
To obtain a bound of $ \mgapprox=|\m|/|\mopt|\geq\desiredRatio $ on the global approximation ratio $ \mgapprox $ of \mingreedy we want to bound \emph{local approximation ratios}
\begin{align*}
\mgapprox_X=\w{X}/\wopt{X}
\end{align*}
of components $ X $ of \mg.
We call an \mmoptpath of length three a \emph{\onetwopath}.
A \onetwopath~$ X $ has local approximation ratio only $ \mgapprox_X=\oneOverTwo<\desiredRatio $, whereas longer augmenting paths $ X $ have $\mgapprox_X\geq\twoOverThree\geq\desiredRatio$ since~$\maxdeg \geq 3$. A singleton~$ X $ even has~$\mgapprox_X=1$.
Therefore we balance local approximation ratios.
We say that a component $ X $ \emph{has \m-funds} \w{X} and introduce a change $ t_X $ to the \m-funds of $ X $ such that 
\begin{align*} \mgapprox_X=\frac{\w{X}+t_X}{\wopt{X}~~~~~~~}\stackrel{!}{\geq}\desiredRatio\end{align*}
holds for the new local approximation ratio of $ X $.
By transferring \m-funds between components we assert that $ \sum_Xt_X=0 $ holds. Then the total \m-funds $ \sum_X \w{X}+t_X=\sum_X\w{X}=|\m|$ are unchanged and \mingreedy achieves approximation ratio at least
$$ \mgapprox=|\m|/|\mopt|=\left(\sum_X\w{X}+t_X\right)/|\mopt|\geq\left(\sum_X\desiredRatio\cdot\wopt{X}\right)/|\mopt|=\desiredRatio\,. $$

\inlineheading{The Choice of Transfers.}
Our approach is to transfer~\mbox{\m-funds} between components using a selection of edges in~$ F:=E\setminus(\m\cup\mopt)$.
We develop a charging scheme that transfers~\mbox{\m-funds} from~\m-covered nodes to adjacent endpoints of \mmoptpaths. 
Note that~$F$ does not contain edges between endpoints of \mmoptpaths, since otherwise~\m would not be maximal.

In particular, we will assert that components whose local approximation ratio is insufficient receive the required~\mbox{\m-funds}. 
These components in need are \onetwopaths, i.e.\ augmenting paths of length three.

First we show that every endpoint of an \mmoptpath is incident to at least one edge in~$F$.
Let~$d_\g(w)$ denote the degree of node~$w$ in~$\g$.
\begin{lemma}
\label{obs:potentialTransfer}
Let~$w$ be an endpoint of an \mmoptpath. Then $ d_\g(w)\geq2 $ holds.
\end{lemma}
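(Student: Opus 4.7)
The plan is to argue by contradiction: assume $d_\g(w) = 1$ and let $v$ denote the unique $\g$-neighbor of $w$. Since $w$ is the endpoint of an augmenting path, $w$ is \mopt-matched to $v$ (the starting \mopt-edge) and is unmatched in \m. Because $v$ is $w$'s only neighbor, the remaining degree of $w$ equals $1$ at every step of \mingreedy at which $v$ is still present, so the current minimum degree is at most $1$ throughout that phase.

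I would then trace \mingreedy's execution along the path. Write the path as $p_0 = w, p_1 = v, p_2, \ldots, p_{2k+1} = w'$, with \mopt-edges $(p_{2i}, p_{2i+1})$ for $i = 0, \ldots, k$ and \m-edges $(p_{2i-1}, p_{2i})$ for $i = 1, \ldots, k$, and let $t_i$ be the step at which \mingreedy inserts $(p_{2i-1}, p_{2i})$ into \m. The key claim, to be proved by induction on $i = 1, \ldots, k$, is that $p_{2i+1}$ has been removed before step $t_i$; in the non-terminal case $i < k$ this amounts to $t_{i+1} < t_i$, since $p_{2i+1}$ is removed only when it is matched, which happens at step $t_{i+1}$.

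For the base case $i = 1$, at step $t_1$ the primary node has degree $1$ in the remaining graph. It cannot be $p_1$, whose current neighborhood contains both $p_0 = w$ and $p_2$ (with $p_2$ about to be removed at this step). Hence the primary is $p_2$, matched to $p_1$, with $p_1$ its unique current neighbor; this forces $p_3$, a $\g$-neighbor of $p_2$ along the \mopt-edge $(p_2, p_3)$, to have been removed before $t_1$. For the inductive step, at $t_{i+1} < t_i < \cdots < t_1$ the node $v$ is still present, so the minimum degree is still at most $1$; the primary at $t_{i+1}$ cannot be $p_{2i+1}$, whose \mopt-partner $p_{2i}$ is still in the graph (it is removed only at the later step $t_i$), giving $p_{2i+1}$ current degree at least $2$. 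Hence the primary is $p_{2i+2}$ with sole current neighbor $p_{2i+1}$, and $p_{2i+3}$ must already have been removed.

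Pushing the induction up to $i = k$, we conclude that $p_{2k+1} = w'$ is removed before $t_k$. But $w'$ is itself an endpoint of the augmenting path and so is \m-unmatched, while \mingreedy removes only matched nodes; this is the desired contradiction (the degenerate case $k = 1$, where $p_3 = w'$, is subsumed). The main obstacle will be the bookkeeping of the time-ordering along the path: at each step $t_{i+1}$ one needs both $v$ and $p_{2i}$ to still be present, to invoke the minimum-degree observation and to rule out $p_{2i+1}$ as the primary, and these facts have to be threaded through the induction together with the strict inequality $t_{i+1} < t_i$.
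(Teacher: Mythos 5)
Your argument is correct, but it takes a different and considerably longer route than the paper's. The paper makes a single direct observation: consider the step at which the \emph{first} $\m$-edge of the augmenting path $X$ is added to the matching. At that step the picked $\m$-edge and both of its adjacent $\mopt$-edges are still in the remaining graph (no node of $X$ has yet been matched), so the selected minimum-degree node is incident to at least two remaining edges and hence has degree at least two. Thus the current minimum degree is at least two, and since $w$ is still non-isolated at that step (its $\mopt$-edge is still present), $d_\g(w) \geq 2$ follows immediately. You instead argue by contradiction from $d_\g(w) = 1$: you note that the minimum degree stays at most one as long as $v$ survives, and cascade this fact backward along the path by induction, forcing $t_k < t_{k-1} < \cdots < t_1$ and ultimately that the far endpoint $w'$ must have been matched before any $\m$-edge of $X$ was picked, contradicting that $w'$ is $\m$-unmatched. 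Both proofs exploit exactly the same minimum-degree property of \mingreedy, but the paper invokes it once, at the first $\m$-edge of $X$, whereas you must invoke it at every step $t_i$ and thread the time ordering through an induction (a bookkeeping burden you correctly anticipate). The paper's formulation also yields as a free byproduct that \emph{every} node of $X$, not just the endpoint $w$, has degree at least two in $\g$, which your contradiction argument does not give directly.
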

\begin{proof}
Let~$ X $ be the \mmoptpath of~$ w $.
Consider the step when \mingreedy picks the first~\m-edge~$ e $ in~$ X $.
Edge~$ e $ and both its adjacent~\mopt-edges must still be contained in the graph.
%
Since \mingreedy selects a minimum degree node, all nodes of~$X$ have degree at least two.
\end{proof}
Which edges in~$F$ are selected to transfer funds?
%
%
%
\begin{definition}
\label{def:transfer}
Let~$ \edge{v,w}$ be an edge in~$F$, where~$v$ is covered by~\m and~$w$ is an endpoint of an \mmoptpath. Assume that~$ v $ is matched in step~$ s $.
Then edge~\edge{v,w} is a \emph{transfer} if at the end of step~$ s $ the degree of~$ w $ is~$ d(w)\leq\maxdeg-2 $.
%
\end{definition}
We frequently denote a transfer as~$ vw $ to stress its direction from the~\m-covered node~$ v $ to the augmenting path endpoint~$ w $.
In order to refer to transfers to and from a given component $ X $, we also call~$ vw $ a \emph{credit} to $ w $ respectively a \emph{debit} to $ v $.
%
%
%
%
%
%
%
%
\textbf{}
Next we show that each endpoint of an \mmoptpath is guaranteed to receive transfers.

\begin{lemma}
\label{lemma_credits}
Let degrees be bounded by \maxdeg.
The number of credits $ c_w $ to an \mmoptpath endpoint $ w $ is at least $ c_w\geq\min\{d_\g(w)-1,\maxdeg-2\}\geq1$.
\end{lemma}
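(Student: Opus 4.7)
The plan is to identify the $F$-neighbors of $w$, argue they are all $M$-covered (hence matched by \mingreedy), and then use the order in which they are matched to trigger the transfer condition. Let $X$ be the \mmoptpath containing $w$, and let $w^* \in X$ be the neighbor of $w$ along the starting $M^*$-edge of $X$. Since $X$ starts and ends with $M^*$-edges, $\edge{w, w^*}$ is the only edge of $M \cup M^*$ incident to $w$, so $w$ has exactly $d_\g(w) - 1$ neighbors reached via edges in $F$; call them $v_1, \ldots, v_{d_\g(w) - 1}$. Because $w$ is not $M$-covered, maximality of $M$ forces every $v_i$ to be $M$-covered, hence each $v_i$ is matched by \mingreedy in some step $s_i$; I order the labeling so that $s_1 \leq s_2 \leq \cdots$.

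The key quantitative observation is that $w$ itself is never matched (it is the unmatched endpoint of an augmenting path), so every time one of its neighbors is deleted, the residual degree of $w$ drops by at least one. By the end of step $s_i$, the nodes $v_1, \ldots, v_i$ have all been removed, so $d_{s_i}(w) \leq d_\g(w) - i$. The transfer condition $d_{s_i}(w) \leq \maxdeg - 2$ is therefore automatic whenever $i \geq d_\g(w) - \maxdeg + 2$.

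I would then split on whether $d_\g(w)$ attains $\maxdeg$. If $d_\g(w) \leq \maxdeg - 1$, already $i = 1$ clears the threshold, so every $v_i$ contributes a credit and $c_w \geq d_\g(w) - 1 = \min\{d_\g(w)-1, \maxdeg-2\}$. If $d_\g(w) = \maxdeg$, the bound only guarantees the condition from $i \geq 2$ onward, so $v_2, \ldots, v_{\maxdeg-1}$ each contribute a credit and $c_w \geq \maxdeg - 2 = \min\{d_\g(w)-1, \maxdeg-2\}$. The closing inequality $\min\{d_\g(w)-1, \maxdeg-2\} \geq 1$ follows from Lemma \ref{obs:potentialTransfer} (which gives $d_\g(w) \geq 2$) combined with $\maxdeg \geq 3$. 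The only place where the argument cannot extract more is the step $s_1$ in the case $d_\g(w) = \maxdeg$: if $w^*$ is still present at the end of $s_1$ and $v_1$ happens to be matched to a vertex outside the neighborhood of $w$, then $d_{s_1}(w) = \maxdeg - 1$ and $v_1$ fails to qualify as a transfer. This is the main obstacle, and the plan is simply to not count $v_1$ in that subcase; the stated bound $\min\{d_\g(w)-1, \maxdeg-2\}$ is calibrated precisely to absorb this potential loss of a single credit.
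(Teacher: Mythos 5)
Your proof is correct and takes essentially the same approach as the paper's: both arguments hinge on the observation that once $w$'s residual degree has dropped to at most $\maxdeg-1$, every subsequent removal of an $F$-edge satisfies the transfer condition, so at most one $F$-edge can fail to be a credit (and none fails when $d_\g(w) \leq \maxdeg - 1$). Your version makes the counting explicit via the ordering of the steps $s_i$, but the underlying idea and the case split on $d_\g(w)$ coincide with the paper's.
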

\begin{proof}
By Lemma~\ref{obs:potentialTransfer} we have $ d_\g(w)\geq2 $, thus $ \min\{d_\g(w)-1,\maxdeg-2\} \geq1$ holds.

If $ d_\g(w)\leq\maxdeg-1 $ holds, then any time the degree of $ w $ drops, it drops to at most~$ \maxdeg-2 $.
Therefore all $ F $-edges incident with $ w $ are credits.

If $ d_\g(w)=\maxdeg $ holds, then after $ d(w) $ drops to $ d(w)=\maxdeg-1 $, all~$ F $-edges of~$ w $ removed later are credits:
at most one $ F $-edge of $ w $ is not a credit.
\end{proof}

\inlineheading{Bounding Local Approximation Ratios.}
Let $ d_X $ (resp., $ c_X $) denote the numbers of debits (respectively credits) to the nodes of a component $ X $.
We call $ d_X-c_X $ the \emph{balance} of~$ X $.
Our approach is to move a constant amount~\transferred of~\m-funds along each transfer.
Assuming that $ d_X-c_X\leq T_X $ holds, the local approximation ratio of $ X $ is at least
\begin{align*}
\mgapprox_X=
\frac{\w{X}-\transferred d_X+\transferred c_X}{\wopt{X}}
\geq\frac{\w{X}-\theta T_X}{\wopt{X}}\,.
\end{align*}
To prove Theorem~\ref{thm:threeDeb} we find \transferred and $ T_X $ such that $ \mgapprox_X\geq\desiredRatio $ holds for all $ X $.

The maximum \emph{possible} number of debits to a component depends on its number of \m-covered
nodes and on \maxdeg:
%
%
By the degree constraint, each \m-covered node of~$X$ has at most~$\maxdeg$ debits.
However, no funds are moved over edges in~$\m {\cup} \mopt$, and if~$X$ is an augmenting path, then
its path endpoints do not incur any debits at all.
We say that a component has \emph{two missing debits}, if its actual number of debits is at least two less than the maximum possible number.
%
%

First consider a \oneonepath $ X $.
By definition, $ X $ does not get credits, i.e.\ $ c_X=0 $.
By the above considerations, both nodes of $ X $ have at most $ \maxdeg-1 $ debits
each.
Using two missing debits, the balance of $ X $ and the local approximation ratio of $ X $ are
\begin{align}
d_X-c_X&=d_X\leq2(\maxdeg-1)-2\label{totalDebits11}\\
\mgapprox_X&=\frac{1-\transferred(d_X-c_X)}{1}\geq 1-2\transferred(\maxdeg-2)\,.\tag{1'}\label{locApprox11}
\end{align}
Note that $ \mgapprox_X\geq\desiredRatio $ holds if we choose $ \transferred\leq\frac{1}{2(2\maxdeg-3)} $.

Let $ X $ be an \mmoptpath.
Since degrees are at most $ \maxdeg=3 $ or the graph is~$ \maxdeg $-regular, by
Lemma~\ref{lemma_credits} credits to a path
endpoint $ w $ of $ X $ (respectively in total to both path endpoints of~$ X $) are
\begin{align}
\label{minCredsPath} c_w\geq\maxdeg-2\mbox{~~~~~respectively~~~~~} c_X\geq2\cdot(\maxdeg-2)\,.
\end{align}
From the above considerations $ d_X\leq2m_X(\maxdeg-2) $ follows.
Assuming that two debits are missing, we obtain
\begin{align}
\label{totalDebitsmmopt}
d_X-c_X&\leq2\w{X}(\maxdeg-2)-2(\maxdeg-2)-2\\
\label{locApproxmmopt}\mgapprox_X&=\frac{\w{X}-\transferred(d_X-c_X)}{\wopt{X}}
\geq\frac{ \w{X}-2\transferred\w{X}(\maxdeg-2)+2\transferred(\maxdeg-1)}{\w{X}+1}\tag{3'}\\
&=1-2\transferred(\maxdeg-2)+\frac{2\transferred(2\maxdeg-3)-1}{\w{X}+1}
=1-2\transferred(\maxdeg-2)\,,\notag
\end{align}
where we choose $ \transferred=\frac{1}{2(2\maxdeg-3)} $ in the last equality.

Combining Eq.~(\ref{locApprox11})~and Eq.~(\ref{locApproxmmopt}) yields our claimed performance guarantee for \mingreedy:
the local approximation ratio of each component $ X $ is $\mgapprox_X\geq 1-2\transferred(\maxdeg-2)=1-\frac{2(\maxdeg-2)}{2(2\maxdeg-3)}=\desiredRatio\,.$

\subsection*{The Proof of Theorem~\ref{thm:threeDeb} for~\maxdeg-Regular Graphs with~$\maxdeg \geq 4$}
%
%
First we show that each component~$X$ has two missing debits, be it due to a node in~$X$ having low degree in \g or being incident to non-transfer $ F $-edges.
The following steps of \mingreedy are crucial:
We say that a node~$u$ \emph{creates} its component~$X$, if \mingreedy selects node~$u$ as the first
endpoint of the first edge matched in~$X$.

\begin{lemma}
\label{lemma:lowDebs}
Let~$ u  $ create~$X$ and~$ d(u) $ be its degree at creation.
$X$ has two missing debits if 
\begin{enumerate*}[label=\alph*),topsep=0mm,noitemsep]
\item\label{lowDebsA}
$ d(u){\leq}\maxdeg-2 $,~~~~
\item\label{lowDebsB}
$ d(u){=}\maxdeg $,~~~~
\item\label{lowDebsC}
$ \maxdeg{\geq}4 $,~~~or~~~~
\item\label{lowDebsD}
$ X $ is a \oneonepath.
\end{enumerate*}
\end{lemma}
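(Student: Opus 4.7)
The plan is to verify each of the four sufficient conditions (a)--(d) separately, in each case identifying at least two $F$-edges incident with the $M$-covered side of~$X$ that violate the transfer requirements of Definition~\ref{def:transfer}. Call such an edge \emph{defective}; the number of missing debits of~$X$ equals the number of defective $F$-edges incident with the $M$-covered nodes of~$X$. The observation used throughout is that in a $\maxdeg$-regular graph every neighbour of~$u$ absent at step~$s$ has already been matched by \mingreedy, hence is $M$-covered and thus not an \mmoptpath endpoint; consequently every lost incident edge of $u$ already furnishes a defective $F$-edge.

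For case~(a), $d(u) \leq \maxdeg - 2$, the conclusion is immediate: $u$ has lost $\maxdeg - d(u) \geq 2$ incident edges, giving two defective $F$-edges at~$u$. For case~(d), $X$ a \oneonepath, I additionally exploit that $v$ is matched simultaneously in step~$s$. Since $v$'s current degree satisfies $d_s(v) \geq d(u)$ by minimality, any shortfall $\maxdeg - d_s(v) \geq 1$ adds a defective $F$-edge at~$v$; combining the contributions of $u$ and $v$ reaches two whenever $\min\{d(u), d_s(v)\} \leq \maxdeg - 1$. The residual situation $d(u) = d_s(v) = \maxdeg$ is subsumed under case~(b).

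For case~(b), $d(u) = \maxdeg$ together with the minimum-degree property forces every currently present node to have degree~$\maxdeg$, so the current subgraph is intact. A current neighbour~$w$ of $u$ ends step~$s$ with degree $\maxdeg - |\{u,v\} \cap N(w)|$, so the transfer threshold $d_{\mathrm{end}}(w) \leq \maxdeg - 2$ demands $w \in Y := N_G(u) \cap N_G(v) \setminus \{u,v\}$, and the same holds for transfers from~$v$. Bounding $|Y| \leq \maxdeg - 2$ already yields two defective $F$-edges; the tight case $|Y| = \maxdeg - 1$, in which $u$ and $v$ share all their non-mutual neighbours, is handled by tracing \mingreedy one step further and arguing that some $w \in Y$ is subsequently matched, hence $M$-covered rather than an \mmoptpath endpoint. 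For case~(c), $\maxdeg \geq 4$, cases~(a) and~(b) leave only $d(u) = \maxdeg - 1$. Then $u$ loses one edge (one defective), and splitting on $d_s(v)$ produces a second defective edge: either from $v$'s own shortfall, or, if $d_s(v) = \maxdeg$, from the common-neighbourhood argument of~(b), where the slack $\maxdeg - 2 \geq 2$ per $M$-covered node guarantees an additional non-transfer.

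I expect the main obstacle to be the tight sub-case of~(b), where $u$ and $v$ have identical non-mutual neighbourhoods of size $\maxdeg - 1$: counting losses at step~$s$ alone does not suffice, and one must trace \mingreedy at least one step beyond~$s$ to show that some common neighbour will be matched by the algorithm and therefore cannot contribute a transfer. The same local argument also closes the residual configurations arising at the end of cases~(c) and~(d).
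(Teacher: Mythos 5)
Your overall plan---counting ``defective'' $F$-edges per case, and falling back on a step-$s{+}1$ argument in what you perceive to be the tight sub-cases---is essentially the paper's strategy, and case~(a) is fine. But the counting in the remaining cases has gaps that are not merely cosmetic. In case~(b) you assert that $|Y|\le\maxdeg-2$ alone already forces two defective $F$-edges, reserving the step-$s{+}1$ trace for $|Y|=\maxdeg-1$. This is false: when $|Y|=\maxdeg-2$ and neither $u$'s nor $v$'s $\mopt$-mate lies in $Y$, the $F$-neighbourhood of $u$ (resp.\ of $v$) can coincide exactly with $Y$, so a priori every $F$-edge at $u$ and $v$ is a transfer and nothing is missing. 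Ruling this out requires exactly the step-$s{+}1$ argument the paper uses uniformly for all $Y\neq\emptyset$: some node of degree $\maxdeg-2$ is selected at step $s{+}1$, hence is $\m$-covered and lies in $Y$, so at least one $F$-edge to it from $u$ or $v$ is not a transfer. You would also need to handle separately, as the paper does, the sub-case where that selected node $y$ belongs to $X$ itself, since then one of $(u,y),(v,y)$ may be the $\mopt$-edge rather than an $F$-edge.

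Case~(d) contains an outright arithmetic error. With $d(u)=\maxdeg-1$ and $d_s(v)=\maxdeg$ (both consistent with minimality), the combined shortfall is $(\maxdeg-d(u))+(\maxdeg-d_s(v))=1$, not $2$, yet your criterion ``$\min\{d(u),d_s(v)\}\le\maxdeg-1$'' declares two missing debits. The residual you explicitly name, $d(u)=d_s(v)=\maxdeg$, is indeed handled by~(b); the genuine residual, $d(u)=\maxdeg-1<d_s(v)=\maxdeg$ with $\maxdeg=3$, is precisely where the paper's proof of~(d) traces step $s{+}1$, shows one of the transfer recipients $w_u,w_v,w_v'$ would become the unique minimum-degree node, and derives the contradiction that an \mmoptpath endpoint is matched. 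Your closing remark that ``the same local argument also closes the residual configurations'' points at the fix but misidentifies which configurations actually remain open. Case~(c) is too vague to assess: ``the slack $\maxdeg-2\ge2$ per $\m$-covered node guarantees an additional non-transfer'' is a gesture, not an argument; when $d_s(v)=\maxdeg$ you again need to chase the step-$s{+}1$ selection and split on whether the newly selected node lies in $X$, as the paper does.
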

\begin{proof}
In order to obtain a contradiction, we assume that at most one debit to a node
of $ X $ is missing.
Let $ X $ be created in step $ s $ when $ u $ is selected because of its minimum degree~$d(u)$ and matched to $ v $.
By assumption, one of $ u,v $ has a debit.

\noindent
If $ d(u){\leq}\maxdeg{-}2 $ at step $ s $, then $ u $ has two missing debits by Definition~\ref{def:transfer}.
This proves part \ref{lowDebsA}, and we may assume $d(u)\geq\maxdeg-1 $ from now on.
Here is an overview of the argument.
Consider a debit $ xw' $ to $ x\in\{u,v\} $.
By Definition~\ref{def:transfer}, step $ s $ removes edge $ \edge{x,w'} $ from \g and, by definition of transfers, after step $ s $ the degree of $ w' $ is $ d'(w')\leq\maxdeg-2 $.
In particular, after step $ s $ the degree of $ w' $ is smaller than the degree of $ u $ before step $ s $.
Our crucial claim is:
\begin{quote}
\emph{At step $ s{+}1 $ there is an \mmoptpath endpoint $ w $ that currently has
minimum degree~$d'(w)$ with~$\maxdeg - 2 \geq d'(w) \geq 1 $. In particular, $ w $ is not isolated.} 
\end{quote}
Note that node~$w$ has a neighbor in~$X$, but does not necessarily belong to~$X$.
Since~$ w $ is an \mmoptpath endpoint, it cannot be the node selected by \mingreedy in step $ s+1 $.
Hence it cannot be the case that at step $ s+1 $ all minimum degree nodes are \mmoptpath endpoints:
There must be a node $ y\neq w $ that is selected in step~$ s+1 $ with minimum degree $ d'(y)=d'(w) \leq\maxdeg-2$.
Since the degree of $ y $ was at least~$ d(u) {=} \maxdeg-1 $ before step $ s $, in step $ s $ edges of $ F $ connecting $ y $ with $ u $ or $ v $ are removed.
But $ u,v,y $ are \m-covered, hence the removed edges are not debits to $ u,v $.
We obtain a contradiction if we find at least two missing debits.

\ref{lowDebsB}
Assume that $ d(u)=\maxdeg $ holds at step $ s $.
Since in step $ s $ at most two edges incident with $ w $ are removed and there is a transfer~$ xw $ with~$ x\in\{u,v\} $, we have $ d'(w)=\maxdeg-2 $ at step $ s+1 $.
In particular, our claim holds:
node $ w $ is not isolated, since $ d'(w)=\maxdeg-2\geq1 $, and $ w $ has minimum degree.
The reason is that every node had degree~$\maxdeg$ before step~$s$, and every
degree drops at most by two.
So let $ y\neq w $ be the node selected in step $ s+1 $.
The degree of $ y $ also drops from \maxdeg to $ \maxdeg-2 $ in step $ s $ when incident edges $ \edge{u,y},\edge{v,y} $ are removed.
If $ y $ is not a node of $ X $, then both $ \edge{u,y},\edge{v,y} $ are $ F $-edges but they are not debits, since $ u,v,y $ are \m-covered. 
Hence we have found for each of $ u,v $ a missing debit.
If $ y $ is a node of $ X $, then $ u,v,y $ form a triangle and one of $ \edge{u,y},\edge{v,y} $, say $\edge{u,y}$, is an $ F $-edge. 
Since both $ u,y $ are \m-covered, edge $\edge{u,y}$ is not a transfer and
hence both $ u,y $ have a missing debit.

\ref{lowDebsC}
Since parts \ref{lowDebsA} and \ref{lowDebsB} apply to $ \maxdeg\geq4 $, we may assume that $ d(u)=\maxdeg-1 $ holds at step $ s $.
%
Recall that node $ u $ has a missing debit due to its low degree.
We study step $ s+1 $ to find an additional missing debit.
Our claim holds:
since $ xw $ is a transfer, after step $ s $ the degree $ d'(w)\leq\maxdeg-2 $ is smaller than that of $ u $ before step $ s $; using $ \maxdeg\geq4 $ we get $ d'(w)\geq\maxdeg-3\geq1 $, since at most two edges incident with $ w $ are removed in step $ s $.
Let $ y\neq w $ be the neighbor of $ u$ or $v $ being selected with degree $ d'(y)\leq \maxdeg-2 $ next.
Recall that step $ s $ removes an edge $ \edge{x,y} $ with $ x\in\{u,v\} $.
If $ y $ is not a node of $ X $, then $ \edge{x,y} $ is an $ F $-edge.
Since $ x,y $ are \m-covered, edge $ \edge{x,y} $ is not a transfer and we have found the other missing debit to one of $ u,v $.
Now assume that $ y $ is a node of $ X $.
At step $ s+1 $ node $ y $ is incident with its \m-edge, maybe with debits and possibly with its \mopt-edge.
Observe that no matter if the \mopt-edge of $ y $ is already removed, node $ y $ has a missing debit since $ d(y)\leq\maxdeg-2 $.
We have found the second missing debit and obtain a contradiction to the
assumption that~$X$ has at most one missing debit.

\ref{lowDebsD}
By parts \ref{lowDebsA}-\ref{lowDebsC} it suffices to prove the case that degrees are bounded by $ \maxdeg=3 $ and~$ d(u)=\maxdeg-1=2 $ holds.
%
%
For singleton~$X$ both endpoints can have at most three debits, since the edge of~$X$
belongs to~$\m$ and hence does not move funds. 
Since $ u $ has a missing debit and by our assumption at most one debit is
missing for~$ X $, we get that $ u $ has exactly one debit, say to $
w_u $. Then $ v $ has exactly two debits, say to~$ w_v$ and~$w_v'$.

Thus,~$u$ can be adjacent to at most one of~$ w_v$ and~$w_v'$, since~$u$ has degree two and is already adjacent to~$v$.
Since the edges~$(u,w_u)$, $(v,w_v)$, and~$(v,w_v')$ are transfers, Definition~\ref{def:transfer} implies that the degrees of~$w_u$, $w_v$ and,~$w_v'$ are each at most~$\maxdeg - 2 = 1$ after~$u$ was matched to~$v$.

If~$u$ is adjacent to neither~$ w_v$ nor~$w_v'$, then~$w_u$, $w_v$, and~$w_v'$ have degree one afterwards, since their degrees were at least the minimum degree of~$d(u) = 2$ before~$(u,v)$ was matched and dropped to at most one afterwards.
Now consider the case that~$w_u$ is also a neighbor of~$v$, say~$w_v = w_u$. Then the degree of~$w_v'$ drops by at most one when the edge~$(u,v)$ is picked by the algorithm, and hence~$w_v'$ has degree one afterwards.
%

In both cases $ w_u,w_v$ or $w_v' $ must be selected and matched by the algorithm in step $ s+1 $, since no other degrees dropped in step $ s $.
A contradiction:
$ w_u,w_v,w_v' $ are assumed to be \mmoptpath endpoints.
%
\end{proof}

We prove Theorem~\ref{thm:threeDeb} for \maxdeg-regular graphs with $ \maxdeg\geq4 $.
A \oneonepath $ X $ has two missing debits by Lemma~\ref{lemma:lowDebs}\ref{lowDebsD}:
the balance is at most $ d_X{-}c_X=d_X{\leq}2(\maxdeg{-}1)-2 $ as required in Eq.~(\ref{totalDebits11}).
An \mmoptpath $ X $ has two missing debits by Lemma~\ref{lemma:lowDebs}\ref{lowDebsC} and $ c_X\geq2(\maxdeg-2) $ credits by Eq.~(\ref{minCredsPath}):
the balance is at most $ d_X-c_X\leq2\w{X}(\maxdeg-2)-2-2(\maxdeg-2) $ as claimed in Eq.~(\ref{totalDebitsmmopt}).

\subsection*{The Proof of Theorem~\ref{thm:threeDeb} for Graphs of Degree At Most~$ \maxdeg=3 $}

By Lemma~\ref{lemma:lowDebs}\ref{lowDebsD}, a \oneonepath $ X $ has a balance of at most $ d_X-c_X=d_X\leq2(\maxdeg-1)-2=2 $ as required in Eq.~(\ref{totalDebits11}).
We prove Eq.~(\ref{totalDebitsmmopt}) for an \mmoptpath~$ X $.
By Eq.~(\ref{minCredsPath}), each path endpoint of~$ X $ receives at least one credit.
If~$X$ has two missing debits, then the balance of~$X$ is at most $ d_X-c_X\leq2\w{X}-4 $, i.e.~$ X $ receives a sufficient amount of~\m-funds.
Hence we assume from now on that~$X$ has at most one missing debit and that each path endpoint of~$ X $ receives exactly one credit.

Lemma~\ref{lemma:lowDebs}\ref{lowDebsA}~and~\ref{lowDebsB} imply that~$X$ has two missing debits if the degree of~$u$ in the creation step of~$X$ is \emph{not}~$\maxdeg-1=2$.
Thus, we focus on the case that the degree of~$u$ is~$ d(u)=2 $ at creation. Since~$u$ is incident with an~\m-edge and an~\mopt-edge, it is not incident with an~$ F $-edge. Therefore, $u$ has a missing debit.
According to our assumption this is the only missing debit of~$X$.

\begin{lemma}
\label{lemma:maxdeg3Graphs}
Assume that the graph has maximum degree~$ \maxdeg=3 $.
If $ X $ is an \mmoptpath with $ d_X=2\w{X}-1 $ debits, then at least $ c_X\geq3 $ credits are given to $ X $.
\end{lemma}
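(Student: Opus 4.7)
The plan is to argue by contradiction: assume $c_X \leq 2$, which combined with Lemma~\ref{lemma_credits} forces $c_{w_1}=c_{w_2}=1$. From the setup preceding the lemma, the creation step $s$ of $X$ picks $u$ with $d(u)=2$ and matches it to $v$, and $d_X=2\w{X}-1$ forces every inner node of $X$ other than $u$ to contribute its full debit. In particular $v$'s debit is realised by a transfer $(v,w')$ to some \mmoptpath endpoint $w'$ with $d(w')\leq 1$ after step $s$. Write $p$ and $p'$ for the \mopt-neighbours of $u$ and $v$ in $X$.

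The core argument inspects the minimum-degree node the algorithm is forced to select at step $s+1$. Suppose first that $d(w')=1$ after step $s$, so the minimum degree there equals $1$. Since $d(u)=2$ was the minimum at step $s$, every non-isolated node had degree at least $2$ just before step $s$; hence any degree-$1$ node at step $s+1$ must have lost a neighbour at step $s$, and so must be adjacent to $u$ or $v$ at step $s$. The only such neighbours are $p$, $p'$, and $w'$, so the algorithm selects and matches one of these. Since $w_1$, $w_2$, and $w'$ are \mmoptpath endpoints that must remain \m-uncovered, the selected node must be \m-covered, forcing it to be $p$ or $p'$ and to be an inner node of $X$. But every inner node of $X$ other than $u$ has full debit---namely one $F$-edge---and therefore has degree $3$ at step $s$ (its $M$-, $M^*$-, and $F$-edge); losing exactly the one neighbour $u$ or $v$ at step $s$ leaves it with degree $2$ at step $s+1$, not $1$. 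Hence no valid selection exists---contradiction.

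It remains to rule out $d(w')=0$, i.e.\ $w'$ isolated after step $s$. For $w'$ to lose both neighbours $u$ and $v$ at step $s$, the endpoint $w'$ must coincide with $u$'s \mopt-neighbour $p$, forcing $w'=p=w_1$ and $k=1$, with $d(w_1)=2$ just before step $s$. Then the only $F$-edge incident to $w_1$ at step $s$ is $(w_1,v)$, so $c_{w_1}=1$ and $c_X=2$ demands $c_{w_2}=1$. If $d_G(w_2)=3$ and $w_2$'s \mopt-edge is removed no earlier than both of its $F$-edges, then each $F$-edge removal afterwards drops $d(w_2)$ to at most $1$, making both $F$-edges credits and yielding $c_{w_2}=2$---a contradiction. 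In every other sub-case, $w_2$'s degree drops to $1$ at some later step $t\geq s$, and re-applying the minimum-degree argument at step $t+1$ (any \m-covered degree-$1$ candidate is again an inner node of $X$ with full debit, hence degree $3$ at step $s$, losing at most one neighbour per step and thus having degree at least $2$) once more forces the algorithm to match an \mmoptpath endpoint---contradiction. The main technical obstacle is this iterated minimum-degree analysis in the isolated-$w'$ sub-case, since it requires carefully tracking how the algorithm's degree sequence evolves past step $s$ and re-establishing the candidate-elimination step at the later time $t+1$.
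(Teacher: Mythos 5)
Your split into the sub-cases $d(w')=1$ and $d(w')=0$ is a reasonable alternative to the paper's split into $\w{X}=1$ versus $\w{X}\geq 2$, and your argument for the $d(w')=1$ sub-case is correct: at step $s+1$ the minimum degree is $1$, all degree-$1$ nodes must have lost a neighbour at step $s$ and hence lie in $\{p,p',w'\}$, and each of these is either an augmenting path endpoint (never matched) or an $\m$-covered inner node whose degree is still $2$, yielding the contradiction. However, the $d(w')=0$ sub-case contains a genuine gap.

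You claim that $d(w')=0$ forces $w'=p=w_1$ \emph{and} $\w{X}=1$. The first conclusion is sound (for $w'$ to lose two neighbours at step $s$ it must be adjacent to both $u$ and $v$, and $u$'s only non-$v$ neighbour is $p$), but the second does not follow. The equation $w'=p=w_1$ only says that $u$ is the inner node of $X$ whose \mopt-mate is the endpoint $w_1$; the path $X$ may still have $\w{X}\geq 2$, with $v$ the other inner node of the first $\m$-edge and the remainder of $X$ stretching away from $v$. In that scenario $w_2$ does \emph{not} lose its \mopt-edge at step $s$: that edge is removed only at the much later step when $w_2$'s \mopt-neighbour is finally matched. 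So at step $s+1$ the minimum degree is $2$, your degree-$1$ forcing argument produces nothing, and there is a long stretch of the run you have not analyzed. The ``iterated minimum-degree analysis'' you invoke for a later step $t+1$ does not repair this: the parenthetical justification asserts that any $\m$-covered degree-$1$ candidate at step $t+1$ is ``again an inner node of $X$,'' which is not justified (degree-$1$ nodes at a generic later step can arise in any component), and the assertion ``losing at most one neighbour per step'' is simply false, since a node can lose two neighbours in a single step when both nodes of the chosen edge are adjacent to it. The paper closes exactly this gap by jumping to the specific step $s'$ at which $w_2$'s \mopt-neighbour $x$ is matched to its $\m$-mate $x'$, using the minimum-degree rule at step $s'$ to pin $d(x')=2$ and $d(w_2)=2$, and then extracting the contradiction from a degree-$1$ analysis at step $s'+1$; your sketch contains no substitute for that argument.
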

\begin{proof}
To show the statement we assume that~$ c_X<3 $ holds and show a contradiction.

We have already argued that each path endpoint of~$ X $ receives exactly one credit, that the node~$ u $ selected to create~$ X $ has degree~$ d(u)=2 $ at creation, and that~$u$ has the only missing debit of~$X$. 
Then all other \m-covered nodes of $ X $ have exactly
one debit in order to ensure $ d_X=2\w{X}-1 $.
Assume that~$ v $ is matched with~$ u $ in~\m.

First we consider the case of~$ m_X=1 $, i.e.\ that~$ X $ is a \onetwopath. Let~$w_u,w_v$ be the endpoints of~$X$ such that~$(u,w_u), (v,w_v) \in \mopt$. 
Note that this implies~$w_u \neq w_v$.
Then~$v$ has the only debit, say to the augmenting path endpoint~$z_v$.
When~$u$ and~$v$ are matched, the degrees of~$w_u,w_v,z_v$ all drop, and no other degrees drop.
In particular, we claim that~$w_v$ has degree exactly one afterwards, which is the new minimum degree in the graph.
As a consequence, one of~$ w_v,w_u,z_v $ is matched in the next round.
A contradiction is obtained since augmenting path endpoints are never matched.

Now we prove the claim.
When the component is created~$ d(u)=2 $ holds. Since~$u$ is adjacent to~$v$ and~$w_u$, it is not a neighbor of node~$ w_v $. 
Hence~$ w_v $ is incident with at least one~$ F $-edge.
Since only the~\mopt-edge of~$ w_v $ is removed,~$ w_v $ is still incident with at least one~$ F $-edge after creation.
If~$ w_v $ is still incident with at least two~$ F $-edges, then~$ w_v $ receives at least two credits, i.e.\ at least one more than assumed.
So after creation~$ w_v $ is incident with exactly one~$ F $-edge and has degree one.

Next consider the case that the edge~$(u,v)$ belongs to an augmenting path with $ m_X\geq2 $ edges in~\m.
Since~$(u,v) \in \m$ is the first edge picked by the algorithm in~$X$, at least one path endpoint~$ w $ of~$ X $ is still connected with its unique neighbor~$ x $ in~\mopt, after~$ u $ and~$ v $ have been removed with all their incident edges.
We consider the step when~$x$ is matched to its neighbor in~\m, say~$ x' $.
Node $w$ and the recipient of the debit to~$x$ are not yet isolated, since these two nodes are never matched.
Thus  $x$ has degree three before this step.

Moreover, we may assume that~$w$ has degree at most two before this step.
To see this, assume that~$w$'s degree were still~$\maxdeg = 3$ and observe that the two~$ F $-edges incident with~$w$ would become credits, i.e.\ node~$ w $ would get more credits than assumed.

Since~$x'$ has exactly one debit, i.e.~$ x' $ is adjacent to one endpoint of an augmenting path, node $x'$ has degree exactly two before the step.
Consequently, node~$ w $ has degree exactly two as well.

All still present neighbors of~$x,x'$ are endpoints of augmenting paths, and only their degrees drop when~$x$ and~$x'$ are removed from the graph.
If~$x'$ is not adjacent to~$w$, then~$w$'s degree drops by one and to exactly one, hence either~$w$ or another augmenting path endpoint would be picked in the subsequent step.
A contradiction, since an \mmoptpath endpoint is never matched.

Lastly, assume that~$x'$ and~$w$ are adjacent; then~$w$ becomes isolated in that step.
We consider the recipient~$y$ of the debit to~$x$.
Since~$ x' $ is adjacent to~$ w $ and~$ x $ and has degree exactly two, node~$ y $ is not adjacent to~$ x' $.
Therefore the degree of~$ y $ drops by exactly one.
Since~$ y $'s degree drops from at least two before the step to at most~$\maxdeg - 2 = 1$ afterwards by Def.~\ref{def:transfer} we get that~$y$'s degree is exactly one in the subsequent step.
Furthermore, node~$ y $ is now the only degree-1 node and is matched next.
A contradiction since~$ y $ is an \mmoptpath endpoint.
\end{proof}

\subsection{A Guarantee for Graphs with Maximum Degree~$ \maxdeg\geq4 $}
\label{section_mingreedy_guarantee_larger_maxdegs}

In this section we consider graphs with maximum degree~$\maxdeg \geq 4$ and show
a slightly weaker bound. As in Theorem~\ref{thm:threeDeb}, our guarantee holds
for a more general class of greedy algorithms that repeatedly add an edge~$(u,v)$ to the
matching such that either~$u$ or~$v$ has current minimum degree.
For instance, this holds for the deterministic variant of \mingreedy that breaks all ties arbitrarily.
We show the following guarantee.

\begin{theorem}
\mingreedy achieves an approximation ratio of at least \weakerRatio on graphs
with degrees at most \maxdeg.
\end{theorem}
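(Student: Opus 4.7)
The plan is to reuse the debit-credit balancing framework from the proof of Theorem~\ref{thm:threeDeb}. We pick the optimal matching $\mopt$ so that every connected component $X$ of $H = \m \cup \mopt$ is either a \oneonepath or an \mmoptpath, transfer a constant amount $\theta$ of $\m$-funds across each debit and credit, and aim to show $\mgapprox_X \geq \weakerRatio$ for every component.

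The first step is to recalibrate the transfer amount. Rather than $\theta = \frac{1}{2(2\maxdeg-3)}$ as in the original proof, I would set $\theta := \frac{1}{4(\maxdeg-1)}$, chosen so that $1 - 2\theta(\maxdeg-1) = \weakerRatio$. This choice is calibrated so that a \oneonepath with just \emph{one} missing debit---that is, with $d_X \leq 2\maxdeg - 3$---already satisfies $\mgapprox_X = 1 - \theta\, d_X \geq \weakerRatio$, as opposed to the two missing debits that the stronger ratio $\desiredRatio$ demanded.

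Next I would verify the required bound on $d_X - c_X$ componentwise. For a \oneonepath $X = \{u,v\}$ with creating node $u$ of current degree $d(u)$: if $d(u) \leq \maxdeg - 2$ then Lemma~\ref{lemma:lowDebs}(a) already yields two missing debits; if $d(u) = \maxdeg$ then every remaining node has degree $\maxdeg$ at that step, and the argument of Lemma~\ref{lemma:lowDebs}(b) goes through locally (it only uses the degree profile at step $s$, not global regularity); and if $d(u) = \maxdeg - 1$, a direct count gives $d_X \leq (d(u)-1) + (d(v)-1) \leq 2\maxdeg - 3$, the one missing debit needed. For an \mmoptpath $X$, Lemma~\ref{lemma_credits} delivers at least one credit per endpoint, and combined with the appropriate case of Lemma~\ref{lemma:lowDebs} this yields a bound on $d_X - c_X$ that, plugged into the computation of Eq.~(\ref{locApproxmmopt}) with the new $\theta$, gives $\mgapprox_X \geq \weakerRatio$ for every $\w{X} \geq 1$.

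The hard part will be \mmoptpaths whose creating node has current degree exactly $\maxdeg - 1$ when $\maxdeg \geq 4$, since parts (a) and (b) of Lemma~\ref{lemma:lowDebs} do not apply and part (c) uses the full $\maxdeg$-regularity assumption. Here I would adapt the case analysis of Lemma~\ref{lemma:maxdeg3Graphs}: inspecting the step in which $X$ is created together with the immediately following step, I would locate either an additional missing debit among interior nodes of $X$ (some $F$-neighbor that is \m-covered and hence cannot carry a transfer) or an additional credit to a path endpoint, which in either case pushes $d_X - c_X$ below the threshold allowed by $\theta = \frac{1}{4(\maxdeg-1)}$. The deliberate loss from $\desiredRatio$ to $\weakerRatio$ is precisely the price paid for needing only one missing debit per component in the absence of regularity, thereby sidestepping the intricate structural arguments of parts (b) and (c) of Lemma~\ref{lemma:lowDebs}.
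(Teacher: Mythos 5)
Your proposal keeps the transfer system of Theorem~\ref{thm:threeDeb} unchanged, recalibrates $\theta = \frac{1}{4(\maxdeg-1)}$, and hopes that ``one missing debit'' per component suffices. The \oneonepath case indeed works under this $\theta$ (modulo a small algebra slip: you want $1 - \theta(2\maxdeg-3) = \weakerRatio$, not $1 - 2\theta(\maxdeg-1) = \weakerRatio$, the latter being $\frac{1}{2}$). But the plan breaks on \onetwopaths, which are the bottleneck. Under the original transfer system a \onetwopath can have up to $2(\maxdeg-2)$ outgoing debits, while in a non-regular graph Lemma~\ref{lemma_credits} only guarantees $c_w \geq \min\{d_{\g}(w)-1,\ \maxdeg-2\} \geq 1$, i.e.\ as few as one credit per endpoint. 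Even granting two missing debits, the balance is $d_X - c_X \leq 2(\maxdeg-2) - 2 - 2 = 2\maxdeg-8$, which gives
\begin{equation*}
\mgapprox_X \;\geq\; \frac{1 - \theta(2\maxdeg-8)}{2} \;=\; \frac{\maxdeg+2}{4(\maxdeg-1)} \;<\; \frac{2\maxdeg-1}{4(\maxdeg-1)} \;=\; \weakerRatio
\end{equation*}
for every $\maxdeg \geq 4$. So no amount of ``finding one more missing debit'' fixes a \onetwopath; you are off by roughly $\maxdeg$ units, not one.

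The paper's proof escapes this by \emph{changing} the transfer scheme: the two nodes of the $\m$-edge that creates an augmenting path never emit transfers, so a \onetwopath has \emph{zero} debits by construction, and the target becomes ``find two incoming credits,'' not ``cap the debits.'' Since a \onetwopath endpoint may have only one $F$-neighbor in a non-regular graph, the paper then has to invent \emph{indirect transfers}---extra $\m$-fund movements not along $F$-edges---to deliver the second credit, and in turn has to re-bound $d_X - c_X$ for \oneonepaths and longer \mmoptpaths because those may now lose indirect transfers (this is exactly where the ``$+1$'' in Eq.~(\ref{newTotalDebitBound11}) and Eq.~(\ref{newTotalDebitBoundMmopt}) comes from, accounted for by Lemmas~\ref{degreesDropToOne} and~\ref{totalDebitBound11}). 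Your proposal contains none of this machinery, and the remark that you would ``adapt the case analysis of Lemma~\ref{lemma:maxdeg3Graphs}'' does not plug the hole: that lemma's argument is specific to $\maxdeg = 3$ where the creating node has degree two and hence no $F$-edges at all, and it does not supply the missing $\Theta(\maxdeg)$ credits a \onetwopath needs under the unmodified scheme. The gap is therefore structural, not a matter of tightening constants.
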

%
%
Why does our bound not match our conjectured approximation ratio
of \desiredRatio?

In Theorem~\ref{thm:threeDeb} we proved a guarantee of~\desiredRatio
for~$\maxdeg$-regular graphs. In order to provide sufficient \m-funds
to \onetwopath{}s, we used that each endpoint of such a short augmenting path
receives the required number of incoming transfers. This was possible, since in
a~$\maxdeg$-regular graph each endpoint has~$\maxdeg - 1$ neighbors besides its
mate in~\mopt.

This property no longer holds for the more general class of graphs we consider
now. 
Thus, we modify our system of transfers.

Recall that an edge in~\m creates a component~$X$ in~$H = (V, \; \m {\cup} \mopt)$ if that edge is the first edge that the algorithm picks in~$X$.

In our new system nodes of an \m-edge~$(u,v)$ that creates some augmenting path do
not transfer \m-funds, i.e.\ the $F$-edges of~$u$ and~$v$ are not debits. As a
consequence, a \onetwopath does not have any debits, because it contains exactly one edge of~\m.
Other than that we do not change the definition of transfers.
Note that \oneonepaths are not augmenting paths and therefore might have debits.

In Lemma~\ref{obs:potentialTransfer} we have shown that each endpoint of an \onetwopath has at least one~$F$-edge, i.e.\ one edge that is eligible for a transfer. Indeed, as we show next, every \onetwopath receives at least one credit.
We prove the following statement for augmenting paths of arbitrary lengths.
\begin{lemma}
\label{lemma_existence_F_edge}
\label{isol12pathEndpointsPrep}
Let~$X$ be an augmenting path. Then at least one of its endpoints is not isolated immediately in the step when the first \m-edge of~$X$ is added to the matching.

Let~$w$ be an endpoint of~$X$ that is not immediately isolated. If~$w$ is not incident with its~\mopt-edge when it is eventually isolated in step~$s$, then~$w$ receives a credit along one of the edges that it was incident with at the beginning of step~$s$.
\end{lemma}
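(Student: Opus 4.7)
The plan is to write the augmenting path $X$ as
$w_1 - a_1 - b_1 - a_2 - b_2 - \cdots - a_{\w{X}} - b_{\w{X}} - w_2$,
with the outer edges and every $(b_j, a_{j+1})$ in $\mopt$ and every $(a_j, b_j)$ in $\m$, and to consider the step $s_0$ in which \mingreedy first picks an $\m$-edge of $X$, say $(a_i, b_i)$. Two tools drive both parts: Lemma~\ref{obs:potentialTransfer}, which gives $d_\g(w) \geq 2$ for every \mmoptpath endpoint $w$, and the minimum-degree rule obeyed by \mingreedy.

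For the first statement, if $\w{X} \geq 2$ the claim is immediate: at least one of $i \neq 1$ or $i \neq \w{X}$ holds, so at least one endpoint retains its $\mopt$-edge in $X$ through step $s_0$ and therefore cannot be isolated in that step. The substantive case is $\w{X} = 1$, where $X = w_1 - a_1 - b_1 - w_2$. Here I argue by contradiction: if both $w_1$ and $w_2$ are isolated in step $s_0$, then each of them has all its $\g$-neighbors in $\{a_1, b_1\}$; combining this with $d_\g(w_j) \geq 2$ and the fact that $F$ contains no edge between two augmenting-path endpoints forces $w_1$ to be adjacent to \emph{both} $a_1$ and $b_1$, and symmetrically for $w_2$. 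Hence at the start of step $s_0$ both endpoints have degree exactly $2$, while the node selected by \mingreedy (one of $a_1, b_1$) has at least three still-present neighbors, namely $w_1, w_2$, and its $\m$-mate. This violates the minimum-degree rule and yields the contradiction.

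For the second statement, suppose $w$ survived step $s_0$ and becomes isolated only in a later step $s$. Because $w$ is an augmenting-path endpoint it is $\m$-uncovered, and by assumption its $\mopt$-edge is already gone at the start of step $s$; hence every edge incident to $w$ at that moment lies in $F$. Since $w$ is isolated in step $s$, each such edge is destroyed by the removal of the pair $(u,v)$ matched in step $s$, so every remaining neighbor of $w$ lies in $\{u,v\}$. For each such $F$-edge $(w,x)$ with $x \in \{u,v\}$, Definition~\ref{def:transfer} classifies $(x,w)$ as a transfer and hence as a credit to $w$: $x$ is $\m$-covered and matched in step $s$, and after step $s$ the degree of $w$ is $0 \leq \maxdeg - 2$. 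Because $w$ was not already isolated at the start of step $s$, at least one such edge exists, so $w$ receives at least one credit along an edge incident to it at the beginning of step $s$.

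The only delicate step is the $\w{X} = 1$ subcase of part~1, which requires invoking the minimum-degree rule rather than any purely topological property of $X$; the remainder is essentially bookkeeping against Definition~\ref{def:transfer} and Lemma~\ref{obs:potentialTransfer}.
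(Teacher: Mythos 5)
Your part~1 argument is correct and essentially identical to the paper's: for $\w{X}\geq 2$ the first $\m$-edge picked is adjacent to at most one of the two end $\mopt$-edges, and for $\w{X}=1$ both the paper and you derive a contradiction with the minimum-degree rule by showing both endpoints would have degree two while the selected node would have degree at least three.

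Part~2, however, has a genuine gap. This lemma is stated inside Sect.~\ref{section_mingreedy_guarantee_larger_maxdegs}, where the transfer system has been \emph{modified}: the $F$-edges of the two nodes on the first $\m$-edge of an augmenting path (the ``creating'' edge) are declared not to move $\m$-funds, hence are neither debits nor credits. Your proof takes each $F$-edge $(w,x)$ with $x\in\{u',v'\}$ and invokes Definition~\ref{def:transfer} to conclude it is a credit. That conclusion would be valid under the original transfer system, but in the modified system it fails exactly when the $\m$-edge $(u',v')$ matched in step~$s$ creates a new augmenting path --- and nothing in your argument rules that out. The bulk of the paper's part-2 proof is devoted to closing precisely this loophole: it observes that $w$ has degree at most two at step $s$, hence so does $u'$, and then argues by cases on $d(u')\in\{1,2\}$ and on whether $u'$ lies in $X$ that $(u',v')$ cannot be the first $\m$-edge of an augmenting path (because $u'$ lacks either an incident $\m$-edge or $\mopt$-edge, or because $X$ was already created). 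Without this case analysis the claimed credit may not exist, so the proof as written is incomplete.
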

\begin{proof}
If~$X$ contains more than one edge of~\m, then at most one~\mopt-edge can be adjacent to the~\m-edge picked first, and the first part of the lemma follows.
Thus, assume that~$X$ is a \onetwopath.
Assume that both endpoints of~$X$ become isolated upon removing the~\m-edge~$(u,v)$, and note that~$u$ and~$v$ both have degree at least two. Then the endpoints must also have degree two and thus be adjacent to both~$u,v$; recall that the algorithm picks a node of minimum degree. But then the degree of~$u$ and~$v$ is in fact three. This implies that the endpoints must also have degree three, and hence they cannot be isolated by removing~$u$ and~$v$ only.

Now we proof the second part of the lemma. 
%
By assumption of the lemma, at the beginning of step~$s$ the incident edges of endpoint~$w$ are~$F$-edges, and since~$w$ is assumed to be isolated in this step, there can be at most two. 
Assume that~$u'$ is matched to~$v'$ in step~$s$, and recall that their~$F$-edges provide credit for the endpoint~$w$ unless the algorithm creates a new augmenting path by picking the~\m-edge~$(u',v')$.
Thus, it suffices to show that picking~$(u',v')$ does not create an augmenting path. 
Since~$w$ has degree at most two,~$u'$ also has degree at most two.
%
%
%
If $ d(u')=1 $, then $ u' $ does not have both an incident \m-edge and \mopt-edge and hence step $ s $ does not create an \mmoptpath.
Since $ w $ becomes isolated but is not connected to~$ u' $, then~$v'$ is connected to $ w $ by exactly~$F$-edge that provides a credit to~$w$.

On the other hand, if $ d(u')=2 $, then $ d(w)=2 $. Since~$w$ becomes isolated, it must be adjacent to~$u'$ and~$v'$.
If~$u'$ belongs to~$X$, then picking~$u'$ does not create~$X$, since~$X$ was created earlier. Hence~$u'$ and~$v'$ both provide internal credit to~$w$.
Else~$ u' $ is not a node of $ X $. Since~$ u' $ is connected to $ w $, node $ u' $ is not incident both to an \m-edge and an \mopt-edge. But then matching~$u'$ and~$v'$ does not create a new \mmoptpath.
Again both $ u',v' $ transfer a credit to $ w $.
%
\end{proof}
We briefly note that the first part of the lemma also implies that the approximation ratio of~\mingreedy may converge to~$\frac{1}{2}$ asymptotically as shown in Theorem~\ref{theorem_mindegree_randomized}, but will never attain that value exactly.

Since we ensured that a \onetwopath~$X$ has no debits, we claim that the local approximation would be sufficient if we could always provide a second credit.
However, it might be the case that the endpoints of~$X$ have only one neighbor outside~$X$ in total (cf.\ Fig.~\ref{fig:howBadEmerge} for an example).
In this case the second credit to~$X$ will be given via an \emph{indirect transfer}; this is an \m-fund that is not transferred via an~$ F $-edge (see the gray transfer in Fig.~\ref{fig:howBadEmerge}).
To distinguish indirect transfers from~$ F $-edges which move~\m-funds, we also call the latter \emph{direct transfers} from here on.

Interestingly, indirect transfers are also issued by components that are connected to~$X$ via an~$F$-edge.
Also, each indirect transfer originates at a node which is selected when it has degree one.
Consequently, no indirect transfer leaves~$ X $.
So, as desired, no direct or indirect transfers leave~$ X $ at all.

We formally define indirect transfers in Sect.~\ref{section_fake_transfers}. But first we argue that a second incoming transfer to~$ X $ is indeed sufficient to guarantee the claimed local approximations.
\begin{figure}
\centering
\begin{tikzpicture}
\node[] (d) {};
\node[right of=d,very thick] (e) {};
\node[right of=e] (f) {};
\node[right of=f] (g) {$w$};
\node[] at($(g)+(1.5,.55)$)(h) {$v$};
\node[below of=h,very thick] (i) {$u$};
\node[draw=none] (X) at ($(e)+(-1.5,0)$) {$ X $};
\node[draw=none] (1) at ($(e)+(.5,.25)$) {{\tiny 1}};
\node[draw=none] (2) at ($(i)+(+.25,.5)$) {{\tiny 2}};

\draw
(d) edge[opt] (e)
(f) edge[bend left,->,transfer] (d)
(e) edge[mg]node[fill=none,draw=none]{} (f)
(f) edge[opt] (g)
(h) edge[->,transfer] (g)
(h) edge[opt,mg]node[draw=none,fill=none]{} (i)
(f) edge[bend right] (i);
\draw[->,transfer,lightgray,rounded corners,thick]
(i) -- ($ (g)+(1,0) $) -- (g);
;

\end{tikzpicture}
\caption{
	A \onetwopath $ X $ with only one credit from another component:
	\m-edges are picked in order indicated by small numbers (from left to right) selecting fat nodes,
	the gray transfer is \emph{not} an edge of \g and does not effect computations of \mingreedy
}
\label{fig:howBadEmerge}
\end{figure}
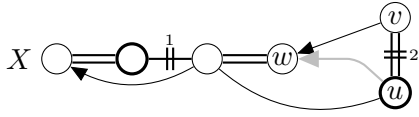

\subsubsection{Optimizing Transferred $\boldsymbol{\m}$-Funds}
Our upper bounds on the balance in Eq.~(\ref{totalDebits11})~and Eq.~(\ref{totalDebitsmmopt}) for \oneonepaths respectively \mmoptpaths are no longer valid for our new system of transfers.
In the sequel we show upper bounds are exactly larger by one unit:
%
A \oneonepath $ X $ can have a balance up to
\begin{align}
\label{newTotalDebitBound11}
d_X-c_X=d_X\leq2(\maxdeg-1)-2+1
\end{align}
and the balance of an \mmoptpath $ X $ with $ \w{X}\geq2 $ is bounded by at most
\begin{align}
\label{newTotalDebitBoundMmopt}
d_X-c_X\leq2\w{X}(\maxdeg-2)-2(\maxdeg-2)-2+1\,.
\end{align}

We have already argued that a~\onetwopath $X$ has a sufficient balance if we can provide credits over two transfers, since no transfers leave~$ X $:
\begin{align}
\label{newTotalDebitBound12}
d_X-c_X\leq 0-2 = -2.
\end{align}

Assume we would move the amount~$ \transferred=\frac{1}{2(2\maxdeg-3)} $ of~\m-funds along each transfer, as we did in case of~$\maxdeg$-regular graphs.
Then a singleton with maximum balance $ d_X=2(\maxdeg-2)+1 $ would have a local approximation ratio of only
\begin{equation*}
\mgapprox_X=\frac{1-\theta(d_X-c_X)}{1}=1-\frac{2(\maxdeg-2)+1}{2(2\maxdeg-3)}=\oneOverTwo\,.
\end{equation*}
In order to obtain a sufficient local approximation for all components, we adjust \transferred.
%
%
What \transferred should we choose?
Parametrized by \transferred, we lower bound local approximation ratios for all components and then optimize \transferred.
%
%
%
\begin{align}
\mbox{\onetwopaths:}&&\mgapprox_X&=\frac{1-\transferred(d_X-c_X)}{2}\stackrel{\mbox{\tiny (\ref{newTotalDebitBound12})}}{\geq}\frac{1+2\transferred}{2}=\oneOverTwo+\transferred\tag{5'}\label{optimizeTheta12}\\
\mbox{\oneonepaths:}&&
\mgapprox_X&
=\frac{1-\transferred(d_X-c_X)}{1} 
\stackrel{\mbox{\tiny (\ref{newTotalDebitBound11})}}{\geq}1-2\theta(\maxdeg-2)-\transferred\tag{6'}\label{optimizeTheta11} \\
\mbox{\mmoptpaths:}&& \mgapprox_X&=\frac{\w{X}-\transferred(d_X-c_X)}{\w{X}+1}\notag\\
&&&\stackrel{\mbox{\tiny (\ref{newTotalDebitBoundMmopt})}}{\geq} \frac{\w{X}-\transferred(1+2\w{X}(\maxdeg-2)-2(\maxdeg-1))}{\w{X}+1}\notag\\
&&&= \frac{\w{X}(1-2\transferred(\maxdeg-2))-\transferred\cdot(1-2(\maxdeg-1))}{\w{X}+1}\notag\\
&&&= 1-2\transferred(\maxdeg-2)+\frac{2\transferred(2\maxdeg-3)-1-\transferred}{\w{X}+1}\notag
\end{align}
Recall that in the last bound for \mmoptpaths we have $ \w{X}\geq2 $.
So if we choose~$ \transferred\geq\frac{1}{2(2\maxdeg-2)} $, then $ 2\transferred(2\maxdeg-3)-1\geq-2\theta $ holds and we can simplify the bound to
\begin{align}
\label{optimizeThetaMmopt}
\mgapprox_X\geq 1-2\transferred(\maxdeg-2)-\frac{3\theta}{3}= 1-2\transferred(\maxdeg-2)-\transferred\,.\tag{7'}
\end{align}
Combining Eq.~(\ref{optimizeTheta12}), Eq.~(\ref{optimizeTheta11}), and Eq.~(\ref{optimizeThetaMmopt}), we set $\oneOverTwo+\transferred=1-2\transferred(\maxdeg-2)-\transferred$ and obtain~\mbox{$\transferred=\frac{1}{2(2\maxdeg-2)}$}.
Hence the local approximation ratio of any component $ X $ is lower bounded by
$$\mgapprox_X\geq\oneOverTwo+\transferred=\oneOverTwo+\frac{1}{2(2\maxdeg-2)}=\frac{2\maxdeg-1}{2(2\maxdeg-2)}=\weakerRatio\,,$$
which proves our claimed performance guarantee for \mingreedy.

\subsubsection{Indirect Transfers}
\label{section_fake_transfers}
To show the balance bounds claimed for \oneonepaths and \mmoptpaths in Eq.~(\ref{newTotalDebitBound11}) respectively Eq.~(\ref{newTotalDebitBoundMmopt}), we first have to develop the definition of indirect transfers.
Therefore we examine the properties of a \onetwopath for which an incoming indirect transfer needs to be added.

Let~$w$ be the endpoint of~$X$ that has receives the only direct credit.
Denote by~$w'$ the other endpoint of~$X$ and assume that~$ X $ is created in step~$ s $.
\begin{enumerate}[label=(c\arabic*),itemindent=1em]
\item\label{12obs1}
Node~$w'$ must be isolated in step~$s$. Otherwise~$w'$ would also receive a direct credit by Lemma~\ref{isol12pathEndpointsPrep} since it loses its~\mopt-edge in step~$s$.
%
%
\item\label{12obs2}
Node~$w$ is not isolated in step~$s$, also by Lemma~\ref{isol12pathEndpointsPrep}.
%
%
\item\label{12obs3}
Let step~$s' > s$ be the step in which~$w$ becomes isolated. At this time~$w$ has exactly one neighbor and receives exactly one direct credit. To see this, assume that~$w$ would be adjacent with both nodes removed in step~$s'$. Then each such~$F$-edge would provide one direct credit, since~$w$'s degree drops from two to zero.
But by assumption~$w$ receives only one direct credit.
%
\end{enumerate}
What are the steps leading to $ w $ becoming isolated eventually?
Recall that an~$F$-edge does \emph{not} provide a direct transfer to~$w$ if the respective neighbor is removed upon creation of a new augmenting path. Thus:
\begin{enumerate}[resume,label=(c\arabic*),itemindent=1em]
\item
\label{12obs4}
\label{12obs4.5}
If~$w$ has degree larger one after its own~\onetwopath is created, then until its degree reaches one it only drops in steps when a new augmenting path is created (and one or both of the nodes matched first is adjacent to~$w$).

Why?
If an adjacent node~$z$ of~$w$ is matched and~$z$ would not belong to the first~\m-edge of an augmenting path, then the edge~$(z,w)$ would provide a second direct credit to~$w$. But we assumed that~$w$ only receives one direct credit. 
%
%
%
%
\end{enumerate}
Assume that in step $ s' $ the algorithm selects node $ u $ and matches it to~$ v $. Note by~\ref{12obs3} node~$w$ has degree one in that step, hence the degree of~$u$ is also one.
Moreover, $v$ is~$w$'s last neighbor, since~$w$ is isolated by removing~$u$ and~$v$.
Therefore,~$(v,w)$ is a direct transfer.
\begin{enumerate}[resume,label=(c\arabic*),itemindent=1em]
\item\label{12obs5}
%
Since~$u$ has degree one when it is matched, all other neighbors (if any) must be matched. Thus,~$u$ has no direct debits to its neighbors.
%
\end{enumerate}
\noindent
Since~$u$ has no direct debits on its own, we add an indirect transfer~$ uw $ from~$ u $ to~$v$.
Moreover, $ vw $ is the direct transfer to $ X $.
%
\begin{definition}
Let $ X $ be a \onetwopath that receives exactly one direct credit $ vw $ to a path endpoint $ w $ of~$ X $.
Let~$u$ be the mate of~$ v $ in~\m. Then we add a transfer $ uw $ and call it an \emph{indirect} transfer.
\end{definition}
In particular, given~$X$ the two nodes that participate in the indirect transfer are uniquely defined.
However, it might be that several \onetwopaths with exactly one direct credit require an indirect transfer from the same node.
Finally we remark:
\begin{enumerate}[resume,label=(c\arabic*),itemindent=1em]
\item\label{12obs7}
No indirect transfer leaves node $ v $, since by construction indirect transfers are added only for its unique \m-neighbor~$ u $.
%
%
\item\label{12obs8}
%
Recall that in our new system of transfers the first \m-edge $ \edge{x,x'} $ picked in an augmenting path does not have any outgoing direct or indirect transfers. 
%
\end{enumerate}


\subsubsection{The Upper Bound on the Balance of \oneonePaths}
For a \oneonepath $ X = (u,v)$ we claimed in Eq.~(\ref{newTotalDebitBound11}) a balance of at most
$$d_X-c_X=d_X\leq2(\maxdeg-1)-2+1\,.$$
Lemma~\ref{lemma:lowDebs}\ref{lowDebsD} gives an upper bound on direct debits of at most $ d_X\leq2(\maxdeg-1)-2 $. Thus, if there is no outgoing indirect transfer, then the claimed bound on~$d_X - c_X$ follows because credits~$c_X$ are always nonnegative.

Assume otherwise and let~$u$ be the node with an outgoing indirect transfer. Then~$u$ has no direct debits by~\ref{12obs5}.
Moreover, $v$ has by the degree constraint at most $ \maxdeg-1 $ direct debits, call them $ vw_1,\dots,vw_{\maxdeg-1} $. Recall from~\ref{12obs7} that~$v$ cannot have outgoing indirect transfers.

%
We show in Lemma~\ref{totalDebitBound11}\ref{totalDebitBound11b} that at most $ \maxdeg-2 $ of the $ w_i $ belong to \onetwopaths which need an incoming indirect transfer, hence at most $ \maxdeg-2 $ indirect transfers leave $ u $.
%
%
So $ d_X\leq(\maxdeg-1)+(\maxdeg-2)=2(\maxdeg-1)-1 $ holds, which gives us the claimed bound on the balance.
%
%
%
%
%
%
%
%
%
%
%
%
We prepare the bound of Lemma~\ref{totalDebitBound11}.

\begin{lemma}
\label{degreesDropToOne}
Let $ \edge{u,v} $ be an \m-edge and $ v $ have direct debits $ vw_1,$ $\dots,$ $vw_n $.
Assume that indirect transfers $ uw_1,\dots,uw_n $ are added.
Before $ u,v $ are matched, the degrees of all $ w_i $ drop to $ d(w_i)=1 $ in the same step \sdown.
\end{lemma}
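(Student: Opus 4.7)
The aim is to show that a single step $\sdown<s^*$, where $s^*$ denotes the step in which \mingreedy matches $(u,v)$, sees the degrees of all $w_i$ drop to $1$ simultaneously. I would proceed in two phases: first, pin down $w_i$'s situation just before $(u,v)$ is matched, using the structure of \onetwopaths that require an indirect transfer; then, force the drop steps to coincide via the minimum-degree selection rule combined with the structure of newly created \mmoptpaths.

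\emph{Phase 1: preliminaries on $w_i$.} By hypothesis each $w_i$ is an endpoint of a \onetwopath that receives exactly one direct credit, namely $vw_i$ --- this is precisely the condition triggering the indirect transfer $uw_i$. Applying observation~\ref{12obs3} to $w_i$, at the beginning of step $s^*$ node $w_i$ has exactly one remaining neighbor, which must be $v$ itself, since $vw_i$ is the sole direct credit to $w_i$'s component and it is issued in step $s^*$. Thus $d(w_i)=1$ just before $(u,v)$ is matched. Moreover $d_{\g}(w_i)\geq 2$, since $w_i$ is incident in $\g$ both to its \mopt-mate in its own \onetwopath and to the $F$-edge $(v,w_i)$. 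Hence each $w_i$'s degree drops to $1$ for the first time at some step $\sdown_i<s^*$, and by observation~\ref{12obs4} this step creates a new \mmoptpath.

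\emph{Phase 2: forcing $\sdown_i=\sdown_j$.} Assume for contradiction that $\sdown_i\neq\sdown_j$, WLOG $\sdown_i<\sdown_j$. The edge $(v,w_i)$ is present throughout steps $\sdown_i{+}1,\ldots,\sdown_j$ because $v$ is removed only in step $s^*>\sdown_j$; hence at the beginning of step $\sdown_j$ node $w_i$ still has $v$ as its unique neighbor, so $d(w_i)=1$. Consequently the minimum degree at step $\sdown_j$ is at most $1$, and the node $p_j$ selected by \mingreedy satisfies $d(p_j)\leq 1$. Now let $(p_j,q_j)\in\m$ be the edge matched in step $\sdown_j$. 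Since $w_j$'s degree drops in this step, observation~\ref{12obs4} forces $(p_j,q_j)$ to be the \emph{first} \m-edge of a newly created \mmoptpath $Y_j$. As $p_j$ is \m-matched within $Y_j$, it is an internal \m-covered node of $Y_j$ and therefore has an \mopt-mate $p'_j$ in $Y_j$. If $p'_j$ is an endpoint of $Y_j$, it is never matched by the algorithm and is still present at step $\sdown_j$; otherwise $p'_j$ is an internal \m-matched node of $Y_j$ whose \m-edge has not yet been picked (since $(p_j,q_j)$ is the first \m-edge of $Y_j$), and so $p'_j$ is again still present. Either way $p_j$ has at least the two current neighbors $q_j$ and $p'_j$, so $d(p_j)\geq 2$, contradicting $d(p_j)\leq 1$.

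\emph{Conclusion and main obstacle.} Swapping the roles of $i$ and $j$ yields the reverse inequality, hence $\sdown_i=\sdown_j$ for all pairs, and setting $\sdown$ to this common value proves the lemma. I expect the case analysis on $p'_j$ in Phase 2 to be the main obstacle: one must invoke the algorithmic definition of the creation step of a component very carefully to guarantee that no other \m-edge of $Y_j$ has already been processed before step $\sdown_j$, which is what ensures the \mopt-mate of $p_j$ is in the graph at that moment and yields the two-neighbor lower bound on $d(p_j)$.
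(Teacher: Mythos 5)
Your proof is correct and follows essentially the same route as the paper's: both derive a contradiction from the fact that the later drop step creates a new augmenting path (observation~\ref{12obs4}) and hence must select a node of degree at least two, while some $w_i$ already at degree one forces the minimum degree to be at most one at that moment. The only difference is cosmetic — the paper cites the degree-at-least-two fact from Lemma~\ref{obs:potentialTransfer}, whereas you re-derive it by exhibiting the still-present \mopt-mate of the selected node.
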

\begin{proof}
By definition of indirect transfers, all recipients~$w_1,\ldots,w_n$ are isolated when~$u$ and~$v$ are matched with each other. We denote this step by~$s$.
By~\ref{12obs3} each node~$w_i$ has degree one at the beginning of step~$s$.
By~\ref{12obs4.5}, for all~$ i $ the degree of~$ w_i $ dropped to~$ d(w_i)=1 $ in a step $ \sdown^i $ with~$\sdown^i < s$, when an \mmoptpath was created by picking an \m-edge $ e_i $.
Recall that step $ \sdown^i $ selected a node of degree at least two.

Assume that there is a step when we have $ d(w_j)=1 $ and $ d(w_k)\geq2 $ for $ j\neq k $.
Until~$ w_j $ is isolated in step $ s $, \mingreedy picks only nodes of degree one and hence step~$ \sdown^k $ does not happen until after step $ s $.
Therefore $ d(w_k)\geq2 $ holds when $ w_j $ is isolated, a contradiction since all $ w_i $ are isolated in step~$s$ when all $ d(w_i)=1 $.
So all $ d(w_i) $ are decreased to $ d(w_i)=1 $ by the same step $ \sdown=\sdown^1=\dots=\sdown^k $ creating an \mmoptpath by picking an \m-edge $ e_1=\dots=e_k $.
\end{proof}

\begin{lemma}
\label{totalDebitBound11}
Let $ \edge{u,v} $ be an \m-edge and assume that indirect transfers $ uw_1,$ $\dots,$ $uw_n $ are added.
Denote by~\sdown the step when the degrees of all $ w_i $ drop to $ d(w_i)=1 $. Then:
\begin{enumerate}[topsep=0mm,noitemsep,label=\alph*)]
\item\label{totalDebitBound11a}
In step~\sdown for every~$w_i$ an incident~$F$-edge is removed.
\item\label{totalDebitBound11b}
It holds that $ n\leq\maxdeg-2 $.
\item\label{totalDebitBound11c}
If $ n=\maxdeg-2 $, then step \sdown removes exactly $ \maxdeg-2 $ edges of $ F $.
\end{enumerate}
\end{lemma}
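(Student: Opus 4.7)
The plan is to leverage the structure Lemma~\ref{degreesDropToOne} imposes on~\sdown: \mingreedy matches some \m-edge $(x,x')$ in \sdown that must be the middle edge of a freshly created augmenting path, so each of $x$ and $x'$ carries one incident \m-edge and one incident \mopt-edge at the start of \sdown and therefore has at most $\maxdeg-2$ incident $F$-edges. Throughout the argument I write $X_i$ for the \onetwopath containing $w_i$.

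For part~\ref{totalDebitBound11a}, I would first show that no $X_i$ can be created in step~\sdown itself. Suppose toward a contradiction that some $X_i$ is created in \sdown. By observation~\ref{12obs1} the endpoint $w_i'$ of $X_i$ opposite $w_i$ is immediately isolated in \sdown, and to keep $X_i$ down to only one direct credit (a prerequisite for the indirect transfer $uw_i$ to be added) $w_i'$ must contribute no credit of its own---which forces $w_i'$ to have degree exactly one, retaining only its \mopt-edge at the start of \sdown. Then the minimum degree in \sdown is one (attained at $w_i'$), but $d(x), d(x') \geq 2$, so \mingreedy cannot match $(x,x')$ in \sdown, a contradiction. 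Hence every $X_i$ is created \emph{strictly before} \sdown, so $w_i$'s \mopt-edge is already absent at the start of \sdown and, because $w_i$ is never matched, all of $w_i$'s remaining incident edges lie in $F$. Since $d(w_i)$ drops to one in \sdown, at least one such $F$-edge must be removed.

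For part~\ref{totalDebitBound11b}, I would combine part~\ref{totalDebitBound11a} with the minimum-degree rule of \mingreedy. Part~\ref{totalDebitBound11a} implies that each $w_i$ has an $F$-edge to $\{x,x'\}$, and since every edge incident to $w_i$ at the start of \sdown is in $F$, we obtain $d(w_i) \in \{2,3\}$; moreover $d(x) \leq \min_i d(w_i)$. A short case distinction then closes the bound: if some $w_i$ has degree two, then $d(x) = 2$ (recall $d(x) \geq 2$), so $x$ carries no $F$-edges, and all $n$ required $F$-edges into $\{x,x'\}$ emanate from $x'$, which has at most $\maxdeg-2$ of them, giving $n \leq \maxdeg-2$. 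Otherwise every $w_i$ has degree three and is adjacent via $F$-edges to both $x$ and $x'$; then $d(x) = 3$, so $x$ has exactly one $F$-edge, forcing $n \leq 1 \leq \maxdeg-2$.

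Part~\ref{totalDebitBound11c} will then follow by tracing equalities through the first case above. When $n = \maxdeg-2 \geq 2$ (which holds because $\maxdeg \geq 4$ in this section), the second case of part~\ref{totalDebitBound11b} is impossible, so $d(x) = 2$, $x$ has no $F$-edges, and $x'$ must have exactly $\maxdeg-2$ $F$-edges---all going to distinct $w_i$'s. The $F$-edges removed in \sdown are therefore precisely these $\maxdeg-2$ edges incident to $x'$. The step I expect to require the most care is part~\ref{totalDebitBound11a}: translating the single-direct-credit assumption into the strong claim that $w_i'$ has no incident $F$-edge at the start of \sdown, and then combining this with the minimum-degree rule to force the desired contradiction.
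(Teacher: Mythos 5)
Your argument for part~\ref{totalDebitBound11a} has a genuine gap, and it is located precisely at the claim that ``$w_i'$ must have degree exactly one.'' You reason that if $w_i'$ carried any $F$-edge it would contribute a credit, contradicting the single-direct-credit premise. But this overlooks a key feature of the modified transfer system introduced at the start of Sect.~\ref{section_mingreedy_guarantee_larger_maxdegs} (and recorded in~\ref{12obs8}): the two nodes of the \m-edge that \emph{creates} an augmenting path have no outgoing direct transfers. So $w_i'$ may retain, at the start of \sdown, an $F$-edge to the node of the creating \m-edge on the far side (call it $\vdown$, matched to $w_i$ by \mopt); this edge is removed in \sdown without ever becoming a credit. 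Consequently $w_i'$ can have degree two, the minimum degree at \sdown can legitimately be two (attained at the first endpoint $\udown$, which then carries no $F$-edges), and your intended contradiction with the minimum-degree rule does not arise. In other words, the case you try to rule out is in fact possible when $d(\udown)=2$, and the paper never claims otherwise.

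This discrepancy matters because it sits at the root of your whole plan. The paper's proof of~\ref{totalDebitBound11a} does not attempt to show that $X_i$ is created before \sdown; instead it performs a case distinction on whether $w_i$ is an endpoint of the path $X$ created in \sdown, treating both possibilities directly. The exclusion you are after---that \sdown does not pick the \m-edge of any $X_i$---appears in the paper only inside the proof of parts~\ref{totalDebitBound11b} and~\ref{totalDebitBound11c}, and only under the additional hypothesis $d(\udown)=3$, where the isolated endpoint $w'$ with $d(w')\leq 2 < d(\udown)$ yields a contradiction. That inequality is exactly what your version lacks: with $d(\udown)=2$ there is no conflict. Your case analysis for~\ref{totalDebitBound11b} and~\ref{totalDebitBound11c} is in the right spirit (branching on whether some $w_i$ has degree two, which amounts to $d(\udown)=2$ versus $d(\udown)=3$), but it inherits the faulty premise that ``every edge incident to $w_i$ at the start of \sdown{} is in $F$,'' which fails precisely when $X_i$ is created in \sdown. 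To repair the proof you would need to follow the paper and argue part~\ref{totalDebitBound11a} without first excluding the case $X=X_i$, then establish that exclusion only in the $d(\udown)=3$ branch of part~\ref{totalDebitBound11b}.
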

\begin{proof}
We prove~\ref{totalDebitBound11a}.
%
By~\ref{12obs4.5} step \sdown creates an \mmoptpath $ X $, since at step \sdown
the degrees of the $ w_i $ drop to $ d(w_i)=1 $.
First consider the case that~$w_i$ is not an endpoint of the augmenting path~$X$.
Then the edge that connects~$w_i$ to its neighbor in~$X$ is an~$F$-edge.
Now let~$w_i$ be an endpoint of~$X$.
Since in step~\sdown the algorithm selects a node of degree at least two, we have $ d(w_i)\geq 2 $ at that time.
But $ w_i $ is incident to at most one edge of $ X $, its~\mopt-edge, and hence to at least one $ F $-edge.

We prove~\ref{totalDebitBound11b}~and~\ref{totalDebitBound11c}.
W.l.o.g. step $ \sdown $ selects $ \udown $ and matches $ \udown$ with $\vdown $.
Step $ \sdown $ selects $\udown $ when $ 2\leq d(\udown) \leq 3$ since an \mmoptpath is created and the degrees of the $ w_i $ drop to $ d(w_i)=1 $ from at most $ d(w_i)\leq3 $.

Assume that $ d(\udown)=2 $.
Node $ \udown $ is incident only to its \m- and \mopt-edge.
So the~$ n $ distinct $ F $-edges being removed by~\ref{totalDebitBound11a} are incident to $ \vdown $.
Since $ \vdown $ is also incident to its \m- and \mopt-edge, we get $ n\leq d(\vdown)-2\leq\maxdeg-2 $.
Exactly $ n $ many $ F $-edges, namely~$ \edge{\vdown,w_1},\dots,\edge{\vdown,w_n} $, are removed, which holds in particular for $ n=\maxdeg-2 $.

Assume that $ d(\udown)=3 $.
Let $ X_i $ be the \onetwopath of $ w_i $.
Step $ \sdown $ does not pick the~\m-edge of one of the $ X_i $, since otherwise an \mopt-endpoint $ w' $ of the created \onetwopath would have to get isolated by~\ref{12obs1}, implying the contradiction $ d(w')\leq2<d(\udown) $.
So $ X\neq X_i $ for all $ i $.
At step \sdown we have $ d(w_i)\geq d(\udown)=3 $ for all $ i $.
Each $ w_i $ is connected to each of~$ \udown,\vdown $, since otherwise the degree of $ w_i $ could not drop to $ d(w_i)=1 $.
Hence at step~$ \sdown $ each of~$ \udown,\vdown $ is incident to $ n $ edges of $ F $.
Again we get $ n\leq\maxdeg-2 $, since each of $ \udown,\vdown $ is also incident to its \m-edge and \mopt-edge.
Assume that $ n=\maxdeg-2 $ holds.
Using $ \maxdeg\geq4 $ we get $ d(\udown)=2+\maxdeg-2>3=d(\udown) $, a contradiction.
(Hence if $ n=\maxdeg-2 $, then case~$ d(\udown)=2 $ applies, where exactly $ n $ many $ F $-edges are removed.)
\end{proof}

\subsubsection{The Upper Bound on the Balance of \mmoptPaths}
For an \mmoptpath $ X $ with $ \w{X}\geq2 $ we claim in Eq.~(\ref{newTotalDebitBoundMmopt}) a balance of at most
\begin{align*}
d_X-c_X 
& \leq 2 \cdot (\w{X}-1) \cdot (\maxdeg-2)-1\,.
\end{align*}
Recall that the nodes of the~\m-edge that was picked upon creation of~$X$ have no outgoing transfers at all (cp.~\ref{12obs8}).
For the other~$m_X - 1$ edges both nodes are incident to at most~$\maxdeg - 2$ edges in~$F$ each, and these edges could move~\m-funds out of~$X$. In particular, we have already argued that indirect transfers do not increase the overall number of transfers out of an~\m-edge: If $(u,v) \in \m$ and $ u $ has outgoing indirect transfers, then by~\ref{12obs5} and~\ref{12obs7} node $ u $ has no direct debits and~$ v $ has no outgoing indirect transfers.
In particular, the number of indirect transfers leaving~$u$ is bounded above the number of direct debits to~$v$, which in turn is at most~$\maxdeg - 2$.

If one of the~\m-covered nodes that was not matched upon creation of~$X$ has less than~$\maxdeg - 2$ outgoing transfers, then the upper bound on the balance claimed in Eq.~(\ref{newTotalDebitBoundMmopt}) is implied. Hence we assume from now on that~$X$ has~$ 2\cdot (\w{X}-1) \cdot (\maxdeg-2) $ outgoing transfers. Then the claim follows from the next lemma.
%
\begin{lemma}
Let $ X $ be an  \mmoptpath  with $ \w{X}\geq2 $.
If $ X $ has its maximum number of $ d_X=2 \cdot (\w{X}-1) \cdot (\maxdeg-2) $ outgoing transfers, then $ X $ has at least one direct credit.
\end{lemma}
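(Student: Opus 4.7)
My plan is to exploit Lemma~\ref{isol12pathEndpointsPrep} to produce a direct credit at one of the two endpoints of $X$. First, label the nodes of $X$ along the path as $w_L, a_1, b_1, a_2, \ldots, a_{\w{X}}, b_{\w{X}}, w_R$ so that $(w_L, a_1)$, each $(b_i, a_{i+1})$, and $(b_{\w{X}}, w_R)$ lie in $\mopt$, while each $(a_i, b_i)$ lies in $\m$. Since $\w{X} \geq 2$, the two $\m$-edges $(a_1, b_1)$ and $(a_{\w{X}}, b_{\w{X}})$ incident (via $\mopt$) to the endpoints $w_L$ and $w_R$ are distinct, and at most one of them can be the creating $\m$-edge of $X$. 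Without loss of generality, relabel so that $(a_1, b_1)$ is non-creating, and denote by $s'$ the step when \mingreedy picks it.

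The key intermediate claim is that every node remaining at step $s'$ has degree exactly $\maxdeg$. To see this, observe that the maximum-outgoing-transfer hypothesis forces the non-creating edge $(a_1, b_1)$ to contribute its full $2(\maxdeg - 2)$ direct debits, i.e., each of $a_1, b_1$ has $\maxdeg - 2$ outgoing direct transfers. Since an $\m$-covered inner node of $X$ has at most $\maxdeg - 2$ $F$-edges (its other incidences being its $\m$- and $\mopt$-edge), $a_1$ and $b_1$ must each have degree exactly $\maxdeg$ at step $s'$. As \mingreedy selects a minimum-degree node, every remaining node at $s'$ must have degree $\maxdeg$. In particular, $w_L$ is still in the current graph at $s'$ because the $\mopt$-edge $(w_L, a_1)$ persists as long as $a_1$ does and keeps $w_L$ non-isolated, so $w_L$ itself has degree $\maxdeg$ at $s'$.

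To finish, I note that step $s'$ removes at most the two edges of $w_L$ to $a_1$ and possibly $b_1$, dropping $w_L$'s degree by at most two, hence to at least $\maxdeg - 2 \geq 2$ using $\maxdeg \geq 4$. So $w_L$ survives step $s'$, yet its $\mopt$-edge is gone. Since $w_L$ is an \mmoptpath endpoint it is never matched, and must therefore be isolated in some later step $s^* > s'$; at the beginning of $s^*$ it no longer carries its $\mopt$-edge. The second part of Lemma~\ref{isol12pathEndpointsPrep} then yields a direct credit to $w_L$ along one of the $F$-edges incident to $w_L$ at the start of $s^*$, and this credit counts for $X$.

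The main obstacle I expect is the degree-propagation step: pushing the \emph{local} max-transfer hypothesis on the single edge $(a_1, b_1)$ to a \emph{global} degree-$\maxdeg$ statement for every remaining node at $s'$, and in particular for $w_L$ (which requires also arguing that $w_L$ has not already been isolated before $s'$). Once this and the bound $\maxdeg \geq 4$ are in place, Lemma~\ref{isol12pathEndpointsPrep} delivers the credit essentially for free.
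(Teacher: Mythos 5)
There is a genuine gap in the ``degree-propagation'' step, and you correctly flag it as the weak point, but it does not hold up. Your argument infers from the hypothesis $d_X = 2(\w{X}-1)(\maxdeg-2)$ that \emph{each} of $a_1, b_1$ has $\maxdeg-2$ outgoing \emph{direct} debits, hence $\maxdeg - 2$ live $F$-edges at step $s'$, hence degree $\maxdeg$ at $s'$. This fails for two independent reasons. First, $d_X$ counts \emph{direct and indirect} transfers: the per-edge bound of $2(\maxdeg-2)$ outgoing transfers on $(a_1,b_1)$ is saturated, e.g., by $a_1$ issuing $\maxdeg-2$ indirect transfers and $b_1$ issuing $\maxdeg-2$ direct debits. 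But by observation~\ref{12obs5} a node with outgoing indirect transfers was selected when it had degree~$1$, so at step $s'$ the selected node has degree~$1$ and the ``minimum degree is $\maxdeg$'' conclusion collapses. Second, even if both $a_1,b_1$ have only direct debits, the node $b_1$ has already lost its \mopt-edge $(b_1,a_2)$ whenever $a_2$ was matched before $s'$ (forced when $\w{X}=2$, since then $(a_2,b_2)$ is the creating edge); so $b_1$ has degree at most $\maxdeg-1$ at $s'$ and again the minimum degree can be below $\maxdeg$.

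Once the global degree-$\maxdeg$ claim is gone, nothing prevents $w_L$ from having degree~$1$ at the start of step $s'$, in which case $s'$ isolates $w_L$ while removing only its \mopt-edge; the second part of Lemma~\ref{isol12pathEndpointsPrep} does not apply because $w_L$ is still incident with its \mopt-edge when isolated. This is exactly the hard case that your proof tries to rule out. The paper's actual proof does not try to rule it out: it takes an endpoint $w$ that is not isolated at creation, observes that Lemma~\ref{isol12pathEndpointsPrep} handles the case where $w$ loses its \mopt-edge before isolation, and then devotes a careful case analysis to the remaining situation where $w$ is isolated precisely by the step that picks $(x,w^\ast)$ --- including the sub-case $d(w)=d(x)=1$ at that step, which is resolved by invoking Lemma~\ref{degreesDropToOne} and Lemma~\ref{totalDebitBound11} about indirect transfers and the creating step $\sdown$. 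That machinery is precisely what your shortcut would need to replace, and a local counting argument on the single edge $(a_1,b_1)$ cannot substitute for it.
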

\begin{proof}
Let~$w$ denote an endpoint of~$X$ that is not isolated upon creation.
Lemma~\ref{isol12pathEndpointsPrep} states that such a node~$w$ exists. Moreover, Lemma~\ref{isol12pathEndpointsPrep} also states that~$X$ receives the desired~\m-fund via a direct transfer to~$w$, if~$w$ is not incident to its~\mopt-neighbor~$w^\ast$ in the step~$s$ when~$w$ is isolated.
Thus, we assume in the sequel that~$w$ becomes isolated when the~\m-edge~$(x,w^\ast)$ is picked.
Note that~$(x,w^\ast)$ is not the first edge picked in~$X$ according to the choice of~$w$. If~$w$ is adjacent to~$x$, then the~$ F $-edge~$(x,w)$ provides the desired direct transfer from~$x$ to~$w$. 
	
But if~$w$ is not adjacent to~$x$, then~$w$ has degree one in step~$s$. Hence~$x$ must also have degree one, because~$w^\ast$ is still adjacent to~$w$ and~$x$, and thus~$x$ is a node of minimum degree. Then~$x$ has no direct debits, but it might have indirect transfers to path endpoints that are adjacent to~$w^\ast$. Since we assumed that~$X$ has its maximum number of outgoing transfers, node~$x$ has exactly~$\maxdeg - 2$ outgoing indirect transfers; call them $ xw_1,\dots,xw_{\maxdeg-2} $.
If there is a~$w_k$ that belongs to~$X$, then~$w_k$ receives a direct transfer from~$w^\ast$, which provides the desired credit to~$X$.
Thus, we assume from now on that all~$w_i$ do not belong to~$X$.
It is crucial to note that~$w \neq w_i$ for all~$i$ because~$w$ is in~$X$.

%
By Lemma~\ref{degreesDropToOne}, the degrees of all $ w_i $ drop to $ d(w_i)=1 $ in a step $\sdown<s$, and by~\ref{12obs4} step~$\sdown$ creates an \mmoptpath.
So at step $ s $ we have $ d(w)=d(w_1)=\dots=d(w_{\maxdeg-2})=1 $.
Since the degree of $ w $ in the input~\g is at least two by Lemma~\ref{obs:potentialTransfer}, there is a step~$\sdown'$ with~$\sdown'<s $ when the degree of $ w $ is decreased to $ d(w)=1 $ by removing an incident~$F$-edge.

\begin{itemize}
\item
Assume that $ \sdown'=\sdown $.
Since step $ \sdown' $ does not remove the \mopt-edge $ \edge{w,w^\ast} $, which is removed later in step~$ s $, step~$ \sdown' $ removes an edge of $ F $ incident to $ w $.
Also, by Lemma~\ref{totalDebitBound11}\ref{totalDebitBound11a}, step $ \sdown' $ removes~$ \maxdeg-2 $ edges of $ F $ incident to the endpoints~$ w_i$.
But since $ w\neq w_i $, for all $ i $, step~$ \sdown' $ removes at least $ \maxdeg-1 $ many edges of $ F $.
A contradiction to Lemma~\ref{totalDebitBound11}\ref{totalDebitBound11c}.
\item
Assume that $ \sdown'\neq\sdown $.
Since in step $ \sdown' $ the degree of $ w $ drops to $ d(w)=1 $ and step $ \sdown $ selects a node of degree at least 2, step $\sdown$ happens before step $ \sdown' $.
Since step \sdown decreases the degrees of all $ w_i $ to $ d(w_i)=1 $, step $ \sdown' $ is a degree-1 step, i.e.\ a step when a node of degree one is picked.
Consequently, step $ \sdown' $ does not create a new augmenting path, especially not~$ X $.
Also, we claim that step $ \sdown' $ does not pick any other \m-edge of $ X $:
To see this, recall that step $ \sdown' $ does not remove \edge{w,w^\ast}, which is removed by step $ s $. Nor does $ \sdown' $ remove an edge of $ F $ incident to $ w $, since such an edge would imply an `internal' direct transfer.

Hence step $ \sdown' $ picks the \m-edge of a component other than $ X $. We already argued that $ \sdown' $ does not create an \mmoptpath, thus there is a direct credit to $ w $ coming from the \m-edge picked by step $ \sdown' $.
\end{itemize}
\end{proof}

\section{Inapproximability Bounds for Priority Algorithms}
\label{section_inapprox_priority}
%
In order to study the limitations of greedy algorithms, we utilize the model of
priority algorithms. The crucial idea is to regard the input graph~$\cal I$ as a
collection of data items, where in the \emph{vertex model} a data item
corresponds to a node~$u$ in~$\cal I$ with its respective
neighbors~$v_1,\ldots,v_d$.
The data item is denoted by~$\langle u;v_1,\ldots,v_d\rangle$.
An \emph{adaptive priority algorithm} chooses an ordering~$\pi$ on the set of all
possible data items, i.e.\ without actually looking at~$\cal I$.
Then it receives the first data item~$d$ of~$\cal I$ w.r.t.~$\pi$ such that the
node~$u$ of~$d$ is still matchable, i.e.\ neither already matched nor isolated.
Now the algorithm has to make an irrevocable decision:
Either it chooses a matchable neighbor~$v$ and matches~$u$ to~$v$,
or~$u$ becomes isolated and cannot be matched afterwards. Then the algorithm
iterates until no matchable nodes are left. 
Such an algorithm is called \emph{greedy} if it may not choose to isolate~$u$.
All orderings and decisions are performed \emph{deterministically}, but may
take into account all information about~$\cal I$ gathered so far. In particular,
priority algorithms are not resource-bounded.

The adaptive priority game is a convenient way to present inapproximability
results by turning the above definition into a game between the algorithm~$A$
and an adversary~$B$ (see~\cite{bblm10} for a primer).
Initially,~$B$ selects a private graph~$\cal I$, then the game proceeds in
rounds until no matchable nodes are left in~$\cal I$: In each round~$A$ submits
an ordering~$\pi$ on all possible data items and receives the first data item of
a matchable node~$u$ from~$B$.
Then~$A$ makes an irrevocable decision for~$u$, thereby ending the round.

\label{section_nongreedy_adaptive_algorithms_matching}
\label{section_randomized_priority}
Angelopoulos and Borodin~\cite{ab10} introduced \emph{fully
randomized priority algorithms}: These algorithms proceed like adaptive priority
algorithms, but may utilize randomness when determining an ordering of the data
items and making decisions. 

%
The class of fully randomized algorithms is quite comprehensive, as it contains
for instance the algorithms \greedy, \mingreedy, \mrg, \ranking, and the
\karpsipser algorithm~\cite{ks81}.
An exception is~\MDS that we studied in Theorem~\ref{theorem_mindegree_varianten}.
%

First we study deterministic priority algorithms and show an inapproximability
bound for adaptive priority algorithms. The underlying construction will provide
the basis for our investigation of fully randomized priority algorithms.
\begin{theorem}
\label{theorem_adaptive_nongreedy}
No adaptive priority algorithm, whether greedy or not, achieves approximation
ratio better than~$\frac{2}{3}$ in the vertex model. 

The bound holds for graphs with maximum degree three, and hence the
deterministic \MDeg is an optimal adaptive priority algorithm for these graphs.
\end{theorem}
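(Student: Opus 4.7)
The plan is to use the adaptive priority game of Borodin, Nielsen and Rackoff: I would exhibit an adversary strategy that, against any adaptive priority algorithm~$A$ (greedy or not), commits to a graph~$G$ of maximum degree three such that the matching~$A$ produces on~$G$ has size at most~$\frac{2}{3}\mu(G)$. The amplification will come from using~$k$ disjoint copies of a small ``obfuscation gadget''~$H$ on six vertices with maximum degree three, so that summing the loss over the~$k$ gadgets yields the ratio~$\frac{2}{3}$ (against an optimum of~$3k$).

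For the gadget I would take two vertex-disjoint triangles joined by a single \emph{cross-edge}: the maximum matching has size three (one edge in each triangle plus the cross-edge), but if the algorithm's first move inside the gadget matches a triangle-internal edge, the cross-edge becomes useless and only one further edge can be picked in the opposite triangle, so the gadget contributes only two to the algorithm's matching. Crucially, in each such gadget four of the six vertices have degree two and their data items are identical across many choices of cross-edge; these shared data items are what give the adversary ambiguity to exploit.

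The adversary would maintain a \emph{pool} of inputs consisting of all disjoint unions of~$H$-gadgets with different placements of the cross-edge per gadget. When~$A$ submits an ordering~$\pi$, the adversary reveals the highest-priority data item~$\pi$ that is consistent with the largest possible subset of the pool (typically a degree-2 node). Once~$A$ irrevocably decides to match the revealed node to some neighbor or (being non-greedy) to isolate it, the adversary prunes the pool to those members in which that decision locks the gadget into a matching of size at most two---for instance, by committing the cross-edge to the opposite triangle or to an already-matched/isolated endpoint. Iterating round by round, every completed gadget contributes at most two matched edges.

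The main obstacle will be twofold. First, I must show that against any ordering---including aggressive orderings that prioritise distinguishing high-degree data items---the adversary can keep the pool rich enough to preserve the ambiguity described above; this requires exploiting the symmetries of~$H$ (swapping the roles of the two triangles, and permuting the vertices within each triangle) so that every ``early'' data item in~$\pi$ remains consistent with many pool members. Second, non-greedy algorithms may try to escape by isolating revealed nodes rather than matching them, so~$H$ must be designed so that an isolation move costs the algorithm at least as much as a bad match, pinning the per-gadget contribution at two regardless of the decision. Finally, since Theorem~\ref{thm:threeDeb} already guarantees a~$\frac{2}{3}$-approximation for the deterministic \mingreedy on graphs of maximum degree three, this inapproximability bound is tight, which in turn certifies \mingreedy as an optimal adaptive priority algorithm in this class.
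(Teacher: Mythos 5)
Your gadget — two vertex-disjoint triangles joined by a single cross-edge — is exactly the six-vertex graph the paper uses, and your adversary strategy of placing the cross-edge after seeing the algorithm's first choice (exploiting the indistinguishability of the degree-2 data items) is the same argument. The only superfluous step is the blowup to~$k$ disjoint copies: a single copy already forces a matching of size at most~$2$ against an optimum of~$3$, whether the algorithm matches badly or isolates, so the~$\frac{2}{3}$ bound follows without amplification.
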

\begin{proof}
Given such a deterministic algorithm $ A $, we consider the two input graphs
in Fig.~\ref{figure_nongreedy_adaptive_algos_deg2}
and Fig.~\ref{figure_nongreedy_adaptive_algos_deg3}, both with perfect matching.
When the game starts,~$A$ submits an ordering~$\pi_1$ on the set of all data
items: Depending on~$\pi_1$, the first data item~$d$ gives a node of
degree two or three; let~$d$ be $ \langle u;v,w\rangle $ or $ \langle
u;v,w,z\rangle $.
\begin{figure}[htbp!]
\centering
\begin{minipage}[t]{.45\textwidth}
\begin{center}
\begin{tikzpicture}
\node[,very thick](A){$u$};
\node[below=.5cm of A](B){$w$};
\node[draw=none, below=0.025cm of A](X){};
\node[right=7mm of X](C){$v$};
\node[right=8mm of C](D){$z$};
\node[right=3cm of A](E){$b$};
\node[right=3cm of B](F){$c$};

\draw[opt] (A) -- (B);
\draw[mg] (A) -- (C);
\draw[] (C) -- (B);
\draw[opt] (C) -- (D);
\draw[] (D) -- (E);
\draw[] (D) -- (F);
\draw[opt] (E) -- (F);

\end{tikzpicture}
\caption[Adaptive Priority Algorithms: The Case deg($u$) = 2]{Algorithm $ A $ receives $ \langle u; v,w\rangle $}
\label{figure_nongreedy_adaptive_algos_deg2}
\end{center}
\end{minipage}
\hspace{0.5cm}
\begin{minipage}[t]{.45\textwidth}
\begin{center}
\begin{tikzpicture}
\node[draw](A){$v$};
\node[below=.5cm of A](B){$w$};
\node[draw=none, below=0.025cm of A](X){};
\node[right=7mm of X,very thick](C){$u$};
\node[right=8mm of C](D){$z$};
\node[right=3cm of A](E){$b$};
\node[right=3cm of B](F){$c$};

\draw[opt] (A) -- (B);
\draw[mg] (A) -- (C);
\draw[] (C) -- (B);
\draw[opt] (C) -- (D);
\draw[] (D) -- (E);
\draw[] (D) -- (F);
\draw[opt] (E) -- (F);

\end{tikzpicture}
\caption[Adaptive Priority Algorithms: The Case deg($u$) = 3]{Algorithm $ A $ receives $ \langle u; v,w,z\rangle $}
\label{figure_nongreedy_adaptive_algos_deg3}
\end{center}
\end{minipage}
\end{figure}
In both cases, if $ A $ decides \emph{not} match $ u $, then $ A $ will not
obtain a matching larger than two edges, and approximation ratio at most
\twoOverThree follows.

How does the adversary proceed if $ A $ matches $ u $? $ A $ has no knowledge
about the graphs and we may assume that~$u$ matches~$v$.
In particular, both graphs are indistinguishable for~$A$.
If~$u$ has degree two, then the input is the graph in
Fig.~\ref{figure_nongreedy_adaptive_algos_deg2}. Otherwise it is the one in
Fig.~\ref{figure_nongreedy_adaptive_algos_deg3}.
%
Thus, after matching~$(u,v)$,~$A$ can match only one more edge, therefore the
claimed bound follows.
The inapproximability bound matches the guarantee for the deterministic variant
of \MDeg given in Theorem~\ref{thm:threeDeb} for graphs of maximum degree three.
\end{proof}
%
%
%
\begin{theorem}
\label{theorem_fully_randomized_algorithms_maximum_matching}
No \emph{fully randomized} priority algorithm can achieve an expected
approximation ratio better than~$\frac{5}{6}$ for the vertex model.
\end{theorem}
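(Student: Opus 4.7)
The plan is to invoke Yao's minimax principle: to bound the expected approximation ratio of any fully randomized priority algorithm in the sense of~\cite{ab10} against worst-case inputs, it suffices to construct a single distribution $\mathcal{D}$ over inputs and to show that every \emph{deterministic} adaptive priority algorithm has expected ratio at most $\frac{5}{6}$ when the input is drawn from $\mathcal{D}$. Once such a distribution is in hand, the bound transfers to the randomized model by the usual argument.

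The distribution $\mathcal{D}$ will be supported on graphs built as disjoint unions of $n$ independent copies of a basic two-graph gadget, where the gadget in each copy is drawn uniformly from a pair $\{G_1,G_2\}$. Both $G_1$ and $G_2$ share a common vertex set, a distinguished node $u$ of degree three with the same three labelled neighbours $v_1,v_2,v_3$, and an optimal matching of size three within the gadget. The defining property we require is that every maximum matching of $G_1$ pairs $u$ with $v_1$ while every maximum matching of $G_2$ pairs $u$ with $v_2$, and that any other commitment for $u$ (including isolating $u$, in the non-greedy case) forces the local matching down to exactly two edges. In particular, the ambiguous data item $\langle u;v_1,v_2,v_3\rangle$ appears in both $G_1$ and $G_2$, so receiving it reveals nothing about the identity of the copy. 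A natural candidate for the gadget is a variant of the graphs in Fig.~\ref{figure_nongreedy_adaptive_algos_deg2}~and~\ref{figure_nongreedy_adaptive_algos_deg3}, modified so that both graphs share identical degree-three neighbourhoods at $u$.

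Fix a deterministic algorithm $A$ with ordering policy $\pi$, and consider the $n$ copies in the order in which $A$ first makes an irrevocable decision about the central node $u$ of each. At the moment $A$ must decide within a copy, the outcomes of the other copies are statistically independent of the identity of the current copy, so $A$ has effectively no information about it: conditionally, the copy remains uniform over $\{G_1,G_2\}$. Thus, as in the proof of Theorem~\ref{theorem_adaptive_nongreedy}, whichever partner $A$ commits to for the current $u$ is optimal on at most one of $G_1,G_2$ and loses exactly one edge on the other, yielding an expected local matching size of at most $\tfrac{1}{2}(3+2)=\tfrac{5}{2}$ per copy. Summing across all $n$ copies by linearity of expectation gives expected matching size at most $\tfrac{5n}{2}$ against an optimum of $3n$, for an overall expected approximation ratio at most $\tfrac{5}{6}$.

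The main obstacle is the explicit design of $G_1$ and $G_2$. Because $G_1\neq G_2$ as labelled graphs, their global data-item sets necessarily differ, and a cunning $\pi$ could in principle place a distinguishing item at the top in order to learn the gadget identity before committing to $u$. The purpose of taking $n$ independent copies is precisely to neutralise this: a single distinguishing query consumes one copy while leaving the remaining $n-1$ copies statistically identical, so the per-copy bound of $5/2$ is preserved up to lower-order terms. The remaining combinatorial work is to verify that inside the gadget the ``wrong'' commitment provably costs one edge and that no data item of the gadget strictly precedes $\langle u;v_1,v_2,v_3\rangle$ in any ordering $\pi$ that could help $A$ disambiguate before committing to $u$; this is where the detailed structure of $G_1$ and $G_2$, in particular the symmetric arrangement of their remaining edges, must be chosen carefully.
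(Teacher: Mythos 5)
The high-level structure is right: you invoke Yao's minimax principle and aim to show that every deterministic adaptive priority algorithm has expected matching size at most $\frac{5}{2}$ against an optimum of $3$ per gadget, which is exactly what the paper does. However, your choice of hard distribution is genuinely different from the paper's, and this is where your argument has a gap.

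The paper's distribution is simply the uniform distribution over all $6!$ node labelings of the single fixed graph in Fig.~\ref{figure_nongreedy_adaptive_algos_deg2}. This sidesteps your ``main obstacle'' entirely: because the labeling is a uniformly random permutation, \emph{no} data item leaks information about which neighbor is the optimal mate. Whatever ordering $\pi$ the algorithm submits, the first data item it receives for a node of degree~$d$ gives a neighbor list in which each label is the optimal mate with probability exactly $\frac{1}{d}$. The paper then notes that the algorithm's best policy is to prioritize degree-$2$ data items (success probability $\frac{1}{2}$ rather than $\frac{1}{3}$), observes that a wrong or isolating first decision caps the matching at size $2$ while a correct one allows size $3$, and concludes that the expected matching size is at most $\frac{1}{2}\cdot 3 + \frac{1}{2}\cdot 2 = \frac{5}{2}$. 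A single $6$-node graph and one committing round suffice.

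Your construction replaces the random permutation by two fixed labeled graphs $G_1,G_2$ and $n$ independent copies, but this does not actually neutralize the distinguishing-item problem. If $G_1\neq G_2$ as labeled graphs then some node $x\neq u$ has different data items in $G_1$ and in $G_2$, and an adaptive priority algorithm is free to rank that data item above $\langle u;v_1,v_2,v_3\rangle$ in round one. Receiving it reveals the copy's identity. Committing $x$ at that point need not ``consume'' anything: unless you explicitly engineer the gadget so that every such distinguishing $x$ itself faces the same $\frac{1}{2}$-ambiguity, the algorithm can commit $x$ safely, learn whether it is in $G_1$ or $G_2$, and later match $u$ optimally with certainty. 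Taking $n$ independent copies does not help either, since the algorithm can probe the distinguishing node of \emph{every} copy before committing any $u$. You acknowledge this in your final paragraph, but without exhibiting $G_1,G_2$ this remains an open hole in the argument, not a ``remaining combinatorial detail.'' The paper's random-relabeling trick is precisely the device that makes this issue disappear, and it is the essential missing idea in your proposal.
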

\begin{proof}
We apply Yao's Minimax Principle \cite{yao83}.
We have to construct a hard distribution over
input instances and to analyze the best deterministic algorithm (that knows the
distribution).
As distribution we take all the graphs corresponding to the permutations of
the node labels of Fig.~\ref{figure_nongreedy_adaptive_algos_deg2}.
%
We will consider only mistakes made in the first round and assume that the
algorithm proceeds optimally afterwards.

First note that if the algorithm decides to isolate the node given in the first
round, it cannot obtain a matching larger than two. Thus, we may assume that the
first node is matched.
Furthermore, if the first matching is non-optimal, the algorithm again obtains
at most two edges and has approximation ratio at most~$\frac{2}{3}$.
On the other hand, if the first matching is optimal, a maximum matching can be
obtained.

Since we are in the first round, the algorithm has no information which neighbor
is the optimal choice, and any neighbor is the optimal mate with same
probability because we picked a labeling of the nodes uniformly at random.
Thus, the best strategy for the algorithm is to request no degree three prior to
degree two, since the probability of matching the first node optimally decreases
with its degree, and the bound follows because a node of degree two is matched
optimally with probability~$\frac{1}{2}$.
The bound is still valid if the number of nodes, the number of edges, and all
degrees are revealed in advance by the adversary.
\end{proof}
We compare our inapproximability result for fully randomized algorithms to the
bounds obtained by Goel and Tripathi~\cite{gt12}.
On the one hand, they studied randomized greedy algorithms in the
\emph{oblivious query commit model}.
%
%
In this model edges are not revealed to the algorithm; the only way to figure
out whether a particular edge exists is to probe the pair of its endpoints. If
an edge is found whose endpoints are both not matched yet, it must be added to the matching
irrevocably. 
In this case both nodes are removed from the graph.

Note that it is impossible to design an algorithm that chooses nodes depending
on their degrees, and in particular the algorithm cannot select a node of
degree one. That's why none of these randomized algorithms achieves an  expected
approximation ratio better than~$\frac{19}{24} \approx 0.792$ on a triangle
with a single edge attached.

On the other hand, Goel and Tripathi consider the more restricted class of \emph{vertex iterative} algorithms.
%
%
A vertex iterative algorithm picks randomly a vertex, say~$u$, in each round and
then may scan (a subset of) the other vertices, one after another, to check
whether they are adjacent to~$u$. As required by the oblivious query commit model, whenever an edge is
found, it is added to the matching. Moreover, before every probe the algorithm
may choose to isolate the currently inspected vertex~$u$ irrevocably and thereby
skip to the next round.
%

Goel and Tripathi show that no vertex iterative algorithm obtains an expected
approximation ratio better than~$\frac{3}{4}$ on the graph of Dyer and
Frieze~\cite{df91}.

The~\AlgFrieze algorithm and \AlgKarp are prominent representatives of vertex iterative algorithms.
\mingreedy, however, cannot be implemented in the oblivious query commit model.

We point out that the class of fully randomized adaptive priority algorithms in the vertex model contains all randomized algorithms in the oblivious query commit model.
\begin{theorem}
Every algorithm in the oblivious query commit model (and hence every vertex iterative algorithm) can be implemented as fully randomized priority algorithm in the vertex model.

On the other hand, the fully randomized priority algorithm \mingreedy cannot be implemented in the oblivious query commit model.
\end{theorem}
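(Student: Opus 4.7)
My plan for the first part is to show that any randomized OQC algorithm $\mathcal{A}$ can be simulated by a fully randomized priority algorithm $\mathcal{P}$ whose output distribution agrees with that of $\mathcal{A}$ on every input. The starting observation is that a single priority data item exposes the entire neighborhood of one vertex, which strictly subsumes OQC's pair-probe primitive: once $\mathcal{P}$ has received the data item for $u$, it can locally answer any OQC probe $(u,v)$ by checking whether $v$ appears in the adjacency list of $u$. I would therefore have $\mathcal{P}$ draw the same random bits as $\mathcal{A}$ and simulate $\mathcal{A}$ step by step; whenever the internal simulation reaches a probe $(u,v)$ whose outcome is not yet known, $\mathcal{P}$ submits an ordering that places the data items of $u$ at the top, receives $u$'s neighborhood, and uses it to resolve that probe and every subsequent probe of $\mathcal{A}$ that involves $u$.

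The main technical obstacle is that the priority model forces an irrevocable decision on $u$ the instant its data item is delivered, whereas $\mathcal{A}$ may interleave probes on $u$ with probes on other vertices before eventually committing to or abandoning $u$. I would resolve this by letting $\mathcal{P}$ continue its internal simulation of $\mathcal{A}$ using $u$'s fully known neighborhood, together with the neighborhoods gathered in earlier rounds, until either $\mathcal{A}$ commits its first match touching $u$---in which case $\mathcal{P}$ matches $u$ the same way---or the simulation shows that $\mathcal{A}$ will never again involve $u$ in a successful probe, in which case $\mathcal{P}$ isolates $u$. When the look-ahead encounters a probe on some other vertex $u'$ whose neighborhood is still unknown, $\mathcal{P}$ postpones the resolution and requests $u'$'s data item in a subsequent priority round; correctness hinges on the fact that $u$'s ultimate fate under $\mathcal{A}$ is determined only by probe outcomes, all of which can eventually be recovered from further data-item requests without invalidating earlier commitments.

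For the second part the plan is much shorter: it suffices to exhibit a single graph on which \mingreedy and every OQC algorithm induce distinct distributions over output matchings. I would take the graph consisting of a triangle on vertices $a,b,c$ together with a single pendant edge $cd$. The unique minimum-degree vertex is $d$, so \mingreedy deterministically selects $d$ first, matches it to its only neighbor $c$, and then matches the residual edge $ab$, returning a maximum matching of size two with probability one. In contrast, as already pointed out in the text preceding the theorem, Goel and Tripathi~\cite{gt12} show that no algorithm in the oblivious query commit model achieves expected approximation ratio exceeding $\frac{19}{24}$ on exactly this graph; since $\frac{19}{24}<1$, the two output distributions cannot coincide, and therefore \mingreedy admits no OQC implementation. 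The primary difficulty of the whole theorem lies in the bookkeeping of the first part, where one must argue that $\mathcal{P}$ can always defer its irrevocable commitments long enough to mirror $\mathcal{A}$'s adaptive probing order.
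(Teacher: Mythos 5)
Your handling of the second claim is sound and matches the paper's reasoning: the graph is a triangle with a pendant edge, \mingreedy deterministically picks the unique degree-one node and achieves a perfect matching, while Goel and Tripathi's $\tfrac{19}{24}$ bound rules out any OQC implementation. You may keep that part.

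For the first claim, however, there is a genuine gap in your simulation. You propose that when the internal simulation of $\mathcal{A}$ reaches an unresolved probe $(u,v)$, the priority algorithm $\mathcal{P}$ orders \emph{all} data items of $u$ at the top, receives $u$'s full neighborhood, and then tries to look ahead to determine whether $\mathcal{A}$ will eventually match or abandon $u$. This breaks against the irrevocability constraint of the priority model: the moment $\mathcal{P}$ receives $u$'s data item it must decide on $u$ in that very round --- it cannot ``postpone the resolution and request $u'$'s data item in a subsequent round'' while keeping $u$ undecided. And $u$'s fate under $\mathcal{A}$ genuinely can depend on probes between vertices other than $u$: e.g.\ after $(u,v)$ fails, $\mathcal{A}$ probes $(w,x)$ with $w,x\neq u$, and only returns to probe $u$ again if $(w,x)$ also fails. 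Knowing $u$'s neighborhood tells $\mathcal{P}$ nothing about $(w,x)$, so $\mathcal{P}$ cannot determine whether to match or isolate $u$, and either guess is wrong on some input.

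The paper's construction sidesteps this precisely by not giving all of $u$'s data items top priority. Instead, $B$ precomputes $\mathcal{A}$'s entire probe sequence assuming every probe fails (which is determined by the shared random bits), and builds $\pi_1$ by concatenating, for each probe $(u_i,v_i)$ in order, only those data items of $u_i$ that contain $v_i$. Then the first data item actually delivered corresponds exactly to the first probe that succeeds --- the point at which $\mathcal{A}$ is forced to commit under the OQC rules --- and $B$ can safely match the same edge. If $(u,v)$ fails, $B$ simply never receives $u$'s data item in that prefix and never has to commit prematurely. To repair your proof you would need to replace ``places the data items of $u$ at the top'' by this selective ordering, and eliminate the look-ahead-and-postpone step entirely.
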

%
\begin{proof}
We have already pointed out that the oblivious query commit model does not allow the algorithm to select edges depending on the degrees of their nodes. In fact, the lower bound given in~\cite{gt12} relies on this observation. In the sequel we show the first claim of the lemma.

Recall from the definition of vertex iterative algorithms that each such randomized algorithm can be implemented in the oblivious query commit model.
Thus, it suffices to demonstrate that an algorithm~$ A $ in the oblivious query commit model can be simulated by some fully randomized greedy algorithm~$ B $. 
Let~$G$ be the input on which both algorithms are run. Since~$G$ is unknown to both algorithms, we assume for the sake of convenience that both algorithms are provided a set of possible node identifiers in advance. In particular, we assume that an upper bound on the number of nodes is common knowledge.

We assume that all random bits that~$A$ uses are drawn in advance. Thus, given these random bits~$B$ can simulate~$A$ on~$G$ deterministically.
Algorithm~$B$ determines the first query of~$A$, say for~$(u,v)$.
$B$ simulates the first query by giving highest priority to all possible data items of node~$u$ that contain~$v$ as neighbor. Since exactly one of these data items is consistent with~$G$, their ordering w.r.t.\ each other is irrelevant. Let~$\pi_1$ be the ordering submitted by~$B$ in the first round.
If~$(u,v)$ exists, $A$ adds this edge to its matching, and so does~$B$.

If~$(u,v)$ does not exist, then~$A$ may query probabilistically for another edge, say~$(x,y)$. 
But~$B$ will receive the first data item according to~$\pi_1$ that exists in the graph. In particular, the priority algorithm may only change its ordering of data items after it has received a data item; since there is no such data item for~$u$ in~$G$, $B$ may take no action at this time.
Fortunately,~$B$ has access to the description of~$A$ and its pool of random bits, hence~$B$ can determine a priori which edge~$A$ would query next if~$(u,v)$ does not exist. Therefore, in the first round $B$ plans ahead and enumerates all data items for node~$x$ that contain~$y$ as neighbor after the prefix of the data items for~$u$ in~$\pi_1$. We iterate this process.

Assume that~$A$ eventually finds some edge~$(u',v')$ and adds it to the matching. Then~$B$ receives the data item for~$u'$ and picks the same edge. The gist is that~$B$ can infer the same information from the data item for~$u'$ that~$A$ has gathered: The position of the data item in~$\pi_1$ implies that no edge of higher priority exists in~$G$. Moreover, the data item of~$u'$ contains at least the information that~$(u',v')$ exists, and perhaps additional information about the neighborhood of~$u'$.

Thus, $B$ has always at least the same knowledge about~$G$ and hence can simulate~$A$ subsequently.
Here it is crucial to note that according to the respective definitions of their models, $A$ and~$B$ may only query data items of nodes that have not been matched yet.

\noindent{}Once~$B$ has decided the first data item, it determines~$\pi_2$, $\pi_3$ and so on analogously.
\end{proof}	

\subsection{Greedy Adaptive Priority Algorithms and Degree Bounded Graphs}
\label{section_greedy_priority}
The inapproximability bounds given in
Sect.~\ref{section_nongreedy_adaptive_algorithms_matching}
rely on graphs with maximum degree three.
How do priority algorithms perform when applied to arbitrary graphs?

Recall that \emph{greedy} adaptive priority algorithms do not have the option to
isolate the node given in the current data item; they must add an edge to their
matching in each round.
We show that every such algorithm has approximation ratio at most
$ \frac{1}{2}+\varepsilon $ for any $ \varepsilon>0 $. Thus, randomness seems
essential for greedy algorithms in order to achieve a non-trivial guarantee.
%

\begin{figure}[htbp!]
\centering
\begin{minipage}[t]{.32\textwidth}
\centering
\begin{tikzpicture}
\input{tikzStyles}
\tikzstyle{every node} = [circle, fill=white,draw=black,minimum size=13pt,inner sep=0pt];

\node[,very thick] (v) {$ v $};
\node[] (w) at ($(v) + (3,0)$) {$ v_1 $};
\node[] (x) at ($(v)+(0,-2)$){$ v_2 $};
\node[] (z) at ($(x)+(1,0)$){$ v_3 $};
\node[] (y) at ($(z)+(2,0)$) {$ v_d $};
\node[draw=none] at ($(y)+(-1,0)$) {$ \dots $};

\draw
(w) edge[comp,bend right=15] (x)
(w) edge[comp,bend right=15] (z)
(w) edge[comp] (y)
(w) edge[,bend right=15] (x)
(v) edge[mg] (w)
(v) edge[opt] (x)
(v) edge[,bend left=15] (z)
(w) edge[compopt,bend right=15] (z)
(w) edge[] (y)
(v) edge[,bend left=15] (y)
;

\draw[white] ($(v)+(-.4,+.4)$) rectangle ($(y)+(+.4,-.4)$);
\end{tikzpicture}%
\caption{
A connected component of \g:
Gray edges are unknown to algorithm $ A $
}
\label{unknownHigh_}
\end{minipage}
~
\begin{minipage}[t]{.65\textwidth}
\centering
\begin{tikzpicture}
\input{tikzStyles}

\node[,very thick] (v') {$ m_1 $};
\node[] (w') at ($(v') + (1,0)$) {$ \boldsymbol{r_1} $};
\node[left of=v'] (x') {$ l_1 $};
\node[,right of=w',,fill=lightgray] (g') {$ u_1 $};

\node[,very thick] (v) at ($(v') + (0,-1)$) {$ m_k $};
\node[] (w) at ($(v) + (1,0)$) {$ \boldsymbol{r_k} $};
\node[left of=v] (x) {$ l_k $};
\node[right of=w,,fill=lightgray] (g) {$ u_k $};

\node[,very thick] (v''') at ($(v') + (0,-2)$) {$ m_i $};
\node[] (w''') at ($(v''') + (1,0)$) {$ \boldsymbol{r_i} $};
\node[left of=v'''] (x''') {$ l_i $};
\node[right of=w''',fill=lightgray] (g''') {$ u_i $};

\node[mylabel] (vdots') at ($(x')+(0,-.4)$) {$\vdots$};
\node[mylabel] (vdots') at ($(g')+(-.5,-.4)$) {$\vdots$};

\node[,fill=lightgray] at ($(g') + (2,0)$) (c) {$ a $};
\node[fill=lightgray]  at ($(c) + (0,-2)$) (c') {$ c $};
\node[fill=lightgray,,very thick] at ($(c)+(2,0)$) (c'') {$ b $};
\node[fill=lightgray] (d) at ($(c)+(0,-1)$) {};
\node[fill=lightgray,right of=d] (d') {};
\node[fill=lightgray] (d'') at ($(c')+(2,0)$) {$ d $};

\draw[lightgray,thick,rounded corners=6pt] ($(c)+(-.4,.4)$) rectangle ($(d'')+(.4,-.4)$);
\node[draw=none,fill=none,text=gray] at ($(d')+(.5,0)$) {\rotatebox{90}{center}};

\draw
(g') edge[comp] (c')
(g) edge[comp] (c)
(g) edge[comp] (c')
(g') edge[comp] (c)
(g''') edge[comp] (c')
(g''') edge[comp] (c)

(w') edge[comp,bend right=35] (x')
(w') edge[   ,bend right=35] (x')
(v') edge[] (w')
(v') edge[] (x')
(w') edge[] (g')
(v') edge[mg] (w')
(v') edge[opt] (x')
(w') edge[comp] (g')
(w') edge[compopt] (g')
(g') edge[] (c)
(g') edge[] (c')

(w) edge[comp,bend right=35] (x)
(w) edge[   ,bend right=35] (x)
(v) edge[] (w)
(v) edge[] (x)
(w) edge[] (g)
(v) edge[mg] (w)
(v) edge[opt] (x)
(w) edge[comp] (g)
(w) edge[compopt] (g)
(g) edge[] (c)
(g) edge[] (c')

(w''') edge[comp,bend right=35] (x''')
(w''') edge[   ,bend right=35] (x''')
(v''') edge[] (w''')
(v''') edge[] (x''')
(w''') edge[] (g''')
(v''') edge[mg] (w''')
(v''') edge[opt] (x''')
(w''') edge[comp] (g''')
(w''') edge[compopt] (g''')
(g''') edge[] (c)
(g''') edge[] (c')

(c) edge[comp] (c'')
(c) edge[mg] (c'')
(c) edge[comp] (d)
(c) edge[compopt] (d)
(d) edge[comp] (c')
(d) edge[] (c')
(c') edge[comp] (d'')
(c') edge[] (d'')
(c'') edge[comp] (d'')
(c'') edge[compopt] (d'')
(c) edge[comp] (d')
(c) edge[] (d')
(c') edge[comp] (d')
(c') edge[compopt] (d')
;


\draw[dash pattern=on 2.5pt off 2.5pt,rounded corners=10pt] (w') -- ($ (w)+(.5,-.2) $) -- (v''');

\end{tikzpicture}
\caption{
The full construction:
Gray nodes and edges are unknown to $ A $,
frontier nodes are drawn bold,
the dashed edge is an example for edge \edge{m_i=v,r_j} in a type 3 round
}
\label{known_}
\end{minipage}
\end{figure}

\begin{theorem}
\label{apvUpperBounds}
Let $ A $ be a greedy adaptive priority algorithm.
There is a graph with maximum degree at most~\maxdeg, for which the
approximation ratio of~$ A $ is not better than~\desiredRatio.
\end{theorem}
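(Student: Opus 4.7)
My plan is to play the adaptive priority game against any greedy priority algorithm~$A$ and construct, round by round, a graph of maximum degree~$\maxdeg$ on which~$A$'s committed matching has size at most~$\desiredRatio$ times the optimum. The construction is built from the gadget of Fig.~\ref{unknownHigh_} and its assembly in Fig.~\ref{known_}: a sequence of~$k$ local gadgets $\{l_j, m_j, r_j, u_j\}$ is linked via a small center block that absorbs the high-degree hub neighbors of the gadgets.

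First I would specify the adversary. Whenever $A$'s priority ordering surfaces the data item of a fresh gadget vertex~$m_j$, the adversary returns $\langle m_j; l_j, r_j\rangle$. Because $A$ must be greedy, it commits to an edge, and by relabeling $l_j \leftrightarrow r_j$ if necessary we may assume $A$ picks $(m_j, r_j)$. The adversary then fixes $r_j$ to be a hub of degree~$\maxdeg$; its further neighbor~$u_j$ and the ones belonging to the center block are disclosed only once~$A$'s ordering requests~$r_j$'s data item. Thus every gadget contributes a single $A$-edge $(m_j, r_j)$ but two \mopt-edges $(m_j, l_j)$ and $(r_j, u_j)$.

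Next I would balance the center. Writing~$a_c, o_c$ for the $A$-edges and \mopt-edges contributed by the center (which also absorbs the orphaned hub neighbors into \mopt), the total matching sizes are $|\m| = k + a_c$ and $|\mopt| = 2k + o_c$. Choosing the center constants and~$k$ so that $(k + a_c)(2\maxdeg - 3) = (2k + o_c)(\maxdeg - 1)$ pins the ratio down to exactly~$\desiredRatio$; this equation has positive-integer solutions for every~$\maxdeg \geq 3$, and for~$\maxdeg = 3$ it degenerates into the construction of Theorem~\ref{theorem_adaptive_nongreedy}.

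The main obstacle will be the consistency of the adaptive strategy: $A$'s priority ordering may interleave data items of gadgets and center vertices arbitrarily, and the adversary's responses must extend to a single graph of maximum degree~$\maxdeg$ no matter the order in which~$A$ requests them. I would handle this by the standard technique: the hidden edges shown gray in Fig.~\ref{known_} are not committed until~$A$'s choice forces the adversary's hand, a pool of interchangeable hub candidates is maintained so that the relabeling turning~$A$'s committed neighbor into~$r_j$ is always available, and the neighborhoods of yet-unrevealed data items are enumerated only consistent with the remaining partial graph. Once this consistency argument is in place, the counting above yields the claimed bound.
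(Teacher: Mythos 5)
Your plan is in the same spirit as the paper's proof: play the adaptive priority game, grow the graph on-the-fly from triangle-plus-pendant gadgets $\{l_j,m_j,r_j,u_j\}$ linked through a small center, and account edges so that the ratio works out to~$\desiredRatio$. However, there are two genuine gaps and one modeling confusion.

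First, you write ``whenever $A$'s priority ordering surfaces the data item of a fresh gadget vertex~$m_j$, the adversary returns~$\langle m_j;l_j,r_j\rangle$.'' The adversary does not get to pick which data item is shown: it must hand over the \emph{first} data item, in $A$'s submitted order~$\pi$, that is consistent with the partially built graph. If $A$'s ordering asks for an unknown node of degree~$d$ with $3\le d\le\maxdeg$ and all neighbors unknown, the adversary has to respond with such an item. The paper handles this with a third gadget type (Type~1, Fig.~\ref{unknownHigh_}), a self-contained component in which $A$ picks one edge and \mopt contains two; without an analogue of this, your adversary has no answer to an algorithm whose ordering front-loads high-degree items, and the strategy is simply not total.

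Second, you say~$r_j$'s further neighbors ``are disclosed only once $A$'s ordering requests $r_j$'s data item,'' but once $A$ matches~$(m_j,r_j)$ the node~$r_j$ is no longer matchable, so its data item will \emph{never} be presented. The paper reveals $r_j$'s extra incidences indirectly, through the data items of later-revealed nodes that list~$r_j$ as a (known, already matched) neighbor; this is precisely the Type~3 case, and it must be carefully verified that such items never force the adversary into an inconsistency or a degree exceeding~$\maxdeg$. Finally, the center cannot be left as unspecified constants $a_c,o_c$ to be solved for: the paper constructs a concrete six-node endgame where the two final rounds force~$A$ to take exactly two edges against three optimal ones, and then verifies degree bounds ($a$ and~$c$ gain at most one $u$-neighbor per regular round, $r$-nodes at most $s-1$ extra neighbors plus one in the endgame). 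These verifications are where the real work lies; your counting identity $(k+a_c)(2\maxdeg-3)=(2k+o_c)(\maxdeg-1)$ is correct with $k=\maxdeg-3$, $a_c=2$, $o_c=3$, but it is the conclusion of the construction, not a substitute for it.
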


\begin{proof}
%
The adversary creates the input graph on-the-fly during the adaptive priority
game; this is legal as long as the adversary ensures that the final graph is
consistent with the revealed data items, since the algorithm works
deterministically.
We call a node \emph{known}, if it was contained in a data item shown in a
previous round, and \emph{unknown} otherwise. The gist of the construction is
that all known nodes are either already matched or isolated.

The game between $ A $ and the adversary $ B $ consists of two
phases:
The \emph{regular game} lasts for $ s=\maxdeg-3 $ rounds, the \emph{endgame} has
two rounds.
We consider round~$i \geq 1$ of the regular game and
assume that~$B$ returns data item~$a_i$ that belongs to one of the following
three types; $B$ will give no other data item to~$A$.

%
%

\noindent\textbf{Type 1: $ a_i=\langle v; v_{1}, \dots, v_{d} \rangle $ with
$ 3\leq d\leq \maxdeg $ and all nodes are unknown.} Thus, the nodes are
indistinguishable and we may assume that~$A$ matches~$ v $ to~$ v_1 $.
$ B $ constructs a separate connected component $ C $ (cf. Fig.~\ref{unknownHigh_}):
The optimum~\mopt contains two edges $ \edge{v,v_2},\edge{v_1,v_3}$ in~$C$,
whereas $ A $ adds only the edge~$\edge{v,v_1} $ to its matching~\m.
Observe that all (data items of) nodes in~$ C $ belong to type~1 or~2 before the
current round and are either matched or isolated afterwards.


\noindent\textbf{Type 2: $ a_i=\langle v; v_{1}, v_{2} \rangle$, and all nodes
are unknown.} 
Assume that $ A $ matches $ v$ with $v_1 $.~$ B $ constructs a triangle $
\{l_i,m_i,r_i\} $ with $ l_i{=}v_2,m_i{=}v,r_i{=}v_1 $ and an edge $
\edge{r_i,u_i} $ with a new node $ u_i $, which connects the triangle to the
unknown \emph{center} (cf. Fig.~\ref{known_}). The center will connect triangles
created for data items of types~2 and~3.
Again \mopt is extended by two edges~$ \edge{l_i,m_i},\edge{r_i,u_i} $,
whereas only~$\edge{m_i,r_i} $ is added to~\m.
To verify the legality, we observe that before $ m_i,r_i $ are matched, (the
data items of) nodes $ m_i,l_i $ are of type 2 and $ r_i,u_i $ are of type 1;
after matching $m_i,r_i$, nodes $ l_i,m_i,r_i $ are isolated and $ u_i $ turns
into a type 3 node.

\noindent\textbf{Type 3: $ a_i=\langle v; v_{1}, v_{2},v_{3} \rangle $, where $
v,v_1,v_2 $ are unknown and $ v_3 $ is known.}
Node $ v_3 $ occurred in a data item presented previously, and in particular
must be a neighbor of some node~$r_j$ (with~$j < i$) by construction.
%
%
But then, does $ v=u_j $ hold? Not necessarily, since $ B $ may introduce
further neighbors of~$ r_j $, since~$r_j$ was matched on its first appearance
and hence its data item is never presented to the algorithm.
%
Since $ v_3=r_j $ is already matched and $ v_1,v_2 $ are both unknown, we may
assume that $ A $ matches $ v $ to~$ v_1 $.
Again~$ B $ creates a triangle $ \{l_i=v_2,m_i=v,r_i=v_1\} $
and an additional edge \edge{r_i,u_i} with a new node $ u_i $. Moreover,~$ B $
also inserts the edge $\edge{m_i{=}v,r_j}$ to preserve consistency.
\mopt (resp.,~\m) is extended by~$ \edge{l_i,m_i},\edge{r_i,u_i} $ (resp., $
\edge{m_i,r_i} $).
Before $ m_i,r_i $ are matched, node $ l_i $ is of type 2, nodes $ r_i,u_i $ are
of type 1 and $ m_i{=}v $ is of type 3.
After matching $m_i,r_i$, nodes $ l_i,m_i,r_i $ are isolated and $ u_i $ turns
into type 3. $ u_j $ is still of type 3.

The regular game ends after round~$s = \maxdeg - 3$.
%
We consider the graph created by~$B$ (cf. Fig.~\ref{known_}):
$a,b$ are of type~1 (for a specific value of~$d$), $b,c$ of type~2, and
the~$u$-nodes of type~3.
The edges matched in the endgame are all incident in the center: $B$
enforces that $A$ matches only two edges, whereas the optimum obtains
three. Summing up, we have $ |\m|=s+2=\maxdeg-1 $ and $ |\mopt|=2s+3
=2\maxdeg-3$, and the claim follows.
%
%
$B$ asserts that the edge~$(a,b)$ is matched in round~$\maxdeg - 2$;
afterwards~$c$ can be matched to some neighbor, leaving all other nodes
isolated.
We distinguish the following types for data item~$ a_{\maxdeg-2} $:

\noindent\textbf{I) $ a_{\maxdeg-2}=\langle v; v_{1}, \dots, v_{d} \rangle
$ is of type~1.} Since no nodes in $ a_{\maxdeg-2} $ is known, we assume
that $ A $ matches~$ v $ to~$ v_1 $. $ B $ chooses $ v =a$, $ v_1=b $, and $ v_2,\dots,v_d $ as
the remaining neighbors of $ a $.

\noindent\textbf{II) $ a_{\maxdeg-2}=\langle v; v_{1}, v_{2} \rangle $ is of
type 2.}
Again we may assume that $(v,v_1)$ is matched, hence~$ B $ chooses $ v =b$, $
v_1=a $, and $ v_2=d $.

\noindent\textbf{III) $ a_{\maxdeg-2}=\langle v; v_{1}, v_{2},v_{3} \rangle $ is
of type~3.}
As above, the known node~$ v_3 $ is some matched node $r_j$, $j<\maxdeg-2 $, and
we may assume that $(v,v_1)$ is matched by~$ A $.
The adversary chooses $v_3 = r_j$, $ v{=}b$, $v_1{=}a$, and~$v_2{=}d$;
therefore, $B$ creates the edge~$(v_3,b)$ (not present in Fig.~\ref{known_}).

Concludingly, we verify that no node has degree larger than~$\maxdeg$:
Nodes in type~1 components have degree at most \maxdeg by definition of the
component.
The degree of $ a$ and $c $ is at most~$ \maxdeg = 3+s $, since each
round of the regular game adds at most one~$u$-neighbor to both.
At most $ s-1 $ neighbors are added to an $ r $-node during the regular game, at most one neighbor is added in (the first step of) the endgame, hence degrees of $ r $-nodes are at most \maxdeg as well.
All other nodes have degree at most three.
%
\end{proof}

\section{Hypergraph Matching}
\label{section_hypergraph_matching}
We study the limitations of greedy algorithms for the more general $
k $-Hypergraph Matching Problem.
In a~$k$-hypergraph an edge may have up to $ k $ nodes.
The goal is to find a maximum set of node disjoint edges.
As for common graphs, a $ \frac{1}{k} $-approximation is easily obtained by greedily picking edges~\cite{kh78}.
We show that greedy adaptive priority algorithms in the vertex model cannot
surpass this trivial worst case guarantee.

$ k $-hypergraph matching is $\NP$-complete:
3-dimensional matching, where each edge has exactly three nodes and the graph is
tripartite, as well as the unrestricted hypergraph matching problem, also called
the set packing problem, belong to Karp's 21 $\NP$-complete problems.
For an overview of problems closely related to hypergraph matching, see Chan and
Lau~\cite{chanLau}.

We consider $ k $-uniform hypergraphs where each edge has exactly $ k $ nodes.
To achieve non-trivial approximation guarantees efficiently, local search was
shown to be successful.
Hurkens and Schrijver~\cite{Hurkens:1989:SSS:63905.63913} gave, for any fixed $ \varepsilon>0 $, a
polynomial time local search algorithm with approximation ratio $
\frac{k}{2}+\varepsilon $.
Using an enhanced local search method, Cygan~\cite{10.1109/FOCS.2013.61}
recently improved the approximation ratio to $ \frac{k+1+\varepsilon}{3} $.
On the other hand, Hazan, Safra, and Schwartz~\cite{Hazan03onthe} showed that $
k $-uniform hypergraph matching cannot efficiently be approximated within a factor
of $ O(\frac{k}{\ln k}) $.

Greedy approaches have also been investigated.
Bennett and Bohman~\cite{1210.3581} showed the following bound on the
expected performance of \greedy on $ k $-uniform $ D $-regular hypergraphs $ H $
with $ N $ nodes:
If $ D\to\infty $ as $ N\to\infty $ and co-degrees are at most $ L = o(D/ \log^5
N) $, then a proportion of at most $ (L/D)^{\frac{1}{2(k-1)} + o(1)} $ of the
nodes remains unmatched whp.
Aronson et al. \cite{adfs95} investigated \greedy on general $ k $-uniform
hypergraphs and showed that the expected approximation ratio is at least $
1/(k-\frac{k-1}{m}) $,
where the non-negative value of $ m $ depends on the graph.
We give a tight bound for greedy adaptive priority algorithms.

\begin{theorem}
No greedy adaptive priority algorithm in the vertex model has approximation ratio better than~$
\frac{1}{k} $ for $ k $-uniform hypergraph matching with~$k \geq 3$.
\end{theorem}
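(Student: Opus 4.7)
My plan is to construct, for any fixed greedy adaptive priority algorithm $A$ and any $k\geq 3$, a $k$-uniform hypergraph on which $A$ attains approximation ratio at most $1/k$. The adversary reveals a small gadget in which $A$'s forced greedy commitment simultaneously blocks $k$ pairwise disjoint optimum hyperedges and isolates every remaining node, so $A$ terminates with a matching of size one while the optimum has size $k$.

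The gadget uses a distinguished node $z$ incident to exactly two hyperedges, $e_1=\{z,u_1,\ldots,u_{k-1}\}$ and $e_2=\{z,u'_1,\ldots,u'_{k-1}\}$, which share only $z$. The adversary reveals $z$'s data item first; since $A$ is greedy it must commit to either $e_1$ or $e_2$, and by choosing the node labels after $A$ submits $\pi$ the adversary engineers $A$'s deterministic choice to land on one designated edge, say $e_1$. The adversary then completes the graph adaptively: for each $u_i\in e_1\setminus\{z\}$ it introduces an additional hyperedge $f_i=\{u_i,w_{i,1},\ldots,w_{i,k-1}\}$ with fresh leaves $w_{i,j}$ (each incident only to $f_i$), while every $u'_j$ stays incident only to $e_2$. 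Once $A$ commits to $e_1$, the hyperedge $e_2$ is blocked by $z$, each $f_i$ is blocked by $u_i$, and all $u'_j$'s and $w_{i,j}$'s become isolated because their only incident hyperedge is blocked. Hence $A$ outputs a matching of size $1$, whereas $\{e_2,f_1,\ldots,f_{k-1}\}$ is a pairwise disjoint matching of size $k$, giving ratio $1/k$. If instead $A$ picks $e_2$, the adversary performs the symmetric completion through the $u'_j$'s.

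The main obstacle is ensuring that $z$'s two-hyperedge data item is indeed the first one presented, since the algorithm's priority $\pi$ ranks all potential data items and the one-hyperedge data items of the leaves might outrank $z$'s. I plan to address this by exploiting the adversary's freedom to assign labels after $A$ commits to $\pi$, making $z$'s data item the highest matchable item under any label-sensitive priority, and by inflating the gadget with auxiliary hyperedges attached to the leaves so that every node is incident to at least two hyperedges when $\pi$ uses a purely structural rule (e.g.\ always preferring one-hyperedge items). The hardest step is verifying that these refinements preserve the property that every completion leaves $A$ with one edge and an optimum of size $k$; once this is in place, a standard Yao-style symmetry argument across isomorphic copies of the gadget extends the $1/k$ bound to any deterministic greedy adaptive priority algorithm.
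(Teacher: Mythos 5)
Your high-level plan is the same as the paper's: build a hypergraph in which one special hyperedge intersects every edge of a size-$k$ optimum matching, force the greedy algorithm to pick that edge first, and observe that the matching is then stuck at size $1$. That much is right. However, there is a genuine gap exactly where you flag it, and your proposed fix does not close it.

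Your gadget contains degree-$1$ nodes: the $u'_j$'s appear only in $e_2$, and the $w_{i,j}$'s appear only in $f_i$. An adaptive priority algorithm submits its ordering $\pi$ over \emph{all} data items before the adversary responds, so the adversary is obliged to return the first matchable data item according to $\pi$. A purely structural priority rule that ranks degree-$1$ data items highest would then receive a leaf first, pick the unique incident hyperedge, and from there may well assemble a matching of size close to $k$. Your remedy---attach auxiliary hyperedges to the leaves so everyone has degree at least two---does not restore the argument as stated: any auxiliary hyperedge hanging off a $w_{i,j}$ or a $u'_j$ does \emph{not} pass through a node of $e_1$, so it remains available after $e_1$ is picked, and the algorithm can pick it, inflating the output matching past $1$. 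In addition, those auxiliary hyperedges introduce their own fresh leaves, which again have degree $1$, so the patch has an infinite regress unless the extra edges are wired back into the original node set.

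What the paper actually does, and what you would need, is a construction in which \emph{every} hyperedge (not just $e_2$ and the $f_i$'s) contains a node of the special edge $e$. Concretely, the paper augments the $k$ disjoint optimum edges $e_0,\ldots,e_{k-1}$ with a horizontal edge $e$ containing one node from each $e_i$, and then adds $2(k-1)$ more hyperedges---each routed through an $e$-node---together with $\binom{k-1}{2}$ new nodes arranged so that every node has degree exactly $2$ or $4$. Because $e$ meets each other hyperedge in exactly one node, picking $e$ blocks everything; because the only degrees present are $2$ and $4$, whatever degree the first requested data item has, the adversary can relabel so that that node is an $e$-node (using the $e$-node of $e_{k-1}$, which has degree $2$, for the degree-$2$ case), and then relabel so that the hyperedge the greedy algorithm selects is $e$. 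Your proposal would need a comparable degree-regularization in which the padding hyperedges are all incident to $e_1$'s nodes; once you require that, you are essentially rebuilding the paper's construction.
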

\begin{proof}
Given an algorithm $ A $, we construct a $ k $-uniform hypergraph on which $ A $ has approximation ratio exactly $ \frac{1}{k} $.
The instance constructed by the adversary is illustrated in
Fig.~\ref{kdimhardinstance}.
The white (vertical) edges constitute a maximum matching $ \{e_0,\dots,e_{k-1}\}
$. The topmost horizontal edge~$ e $ will be the only edge picked by
the adaptive priority algorithm.
(We call a node an \emph{$ e $-node} if it belongs to the edge labeled $ e $,
and a \emph{non-$ e $-node} otherwise.)

The adversary creates $ k-1 $ additional edges that are depicted as gray edges
in Fig.~\ref{kdimhardinstance}.
For $ 0\leq i\leq k-2 $ an edge contains the (unique) $ e $-node of vertical edge~$ e_i
$ and a non-$ e $-node of
each $ e_j\neq e_i $ chosen in a way such that all non-$ e $-nodes are covered
at most once.
Exactly one non-$ e $-node of each of $e_0,\dots,e_{k-2} $ is not contained in
a gray edge, call them~$ v_0,\dots,v_{k-2} $.

\begin{figure}[htbp!]
\centering
\begin{minipage}[t]{0.4\textwidth}
\centering
\scalebox{.5}{
\begin{tikzpicture}[baseline=(current bounding box.north)]
\def\w{.2}
\def\c{4}
\tikzstyle{every node} = [circle, fill=white,draw=black,inner sep=0,minimum size=5];
\node                (v1k) {};
\node[above of=v1k]  (v1k-1) {};
\node[above of=v1k-1,draw=none] (dots) {$ \vdots $};

\node[above of=dots] (v13) {};
\node[above of=v13] (v12) {};
\node[above of=v12] (v11) {};

\node[draw=none,above of=v11,font=\LARGE] (e1) {\rotatebox{90}{$ e_{0} $}};

\node[right of=v11]  (v21) {};
\node[below of=v21]  (v22) {};
\node[below of=v22]  (v23) {};
\node[below of=v23,draw=none] (dots) {$\vdots$};
\node[below of=dots]  (v2k-1) {};
\node[below of=v2k-1] (v2k) {};

\node[draw=none,above of=v21,font=\LARGE] (e2) {\rotatebox{90}{$ e_{1} $}};

\node[right of=v21]  (v31) {};
\node[below of=v31]  (v32) {};
\node[below of=v32]  (v33) {};
\node[below of=v33,draw=none] (dots) {$\vdots$};
\node[below of=dots]  (v3k-1) {};
\node[below of=v3k-1] (v3k) {};

\node[draw=none,above of=v31,font=\LARGE] (e2) {\rotatebox{90}{$ e_{3} $}};

\node[draw=none,right of=v3k] (dots) {$ \dots $};
\node[draw=none,right of=v31] (dots) {$ \dots $};

\node[right of=dots] (vk-21) {};
\node[below of=vk-21]  (vk-22) {};
\node[below of=vk-22]  (vk-23) {};
\node[below of=vk-23,draw=none] (dots) {$\vdots$};
\node[below of=dots] (vk-2k-1) {};
\node[below of=vk-2k-1] (vk-2k) {};

\node[draw=none,above of=vk-21,font=\LARGE] (ek-2) {\rotatebox{90}{$ e_{k-3} $}};

\node[right of=vk-21] (vk-11) {};
\node[below of=vk-11]  (vk-12) {};
\node[below of=vk-12]  (vk-13) {};
\node[below of=vk-13,draw=none] (dots) {$\vdots$};
\node[below of=dots] (vk-1k-1) {};
\node[below of=vk-1k-1] (vk-1k) {};

\node[draw=none,above of=vk-11,font=\LARGE] (ek-1) {\rotatebox{90}{$ e_{k-2} $}};

\node[right of=vk-11] (vk1) {};
\node[below of=vk1]  (vk2) {};
\node[below of=vk2]  (vk3) {};
\node[below of=vk3,draw=none] (dots) {$\vdots$};
\node[below of=dots] (vkk-1) {};
\node[below of=vkk-1] (vkk) {};

\node[draw=none,above of=vk1,font=\LARGE] (ek) {\rotatebox{90}{$ e_{k-1} $}};

\draw[rounded corners=\c] ($(v11)+(-\w,\w)$) rectangle ($(v1k)+(+\w,-\w)$);
\draw[rounded corners=\c] ($(v21)+(-\w,\w)$) rectangle ($(v2k)+(+\w,-\w)$);
\draw[rounded corners=\c] ($(v31)+(-\w,\w)$) rectangle ($(v3k)+(+\w,-\w)$);
\draw[rounded corners=\c] ($(vk1)+(-\w,\w)$) rectangle ($(vkk)+(+\w,-\w)$);
\draw[rounded corners=\c] ($(vk-21)+(-\w,\w)$) rectangle ($(vk-2k)+(+\w,-\w)$);
\draw[rounded corners=\c] ($(vk-11)+(-\w,\w)$) rectangle ($(vk-1k)+(+\w,-\w)$);

\draw[rounded corners=\c] ($(v11)+(-\w,\w)$) rectangle ($(vk1)+(+\w,-\w)$);

\node[draw=none,left of=v11,font=\LARGE] (e) {$ e $};

\node[draw=none,left of=v12] {$ \dots $};
\node[draw=none,left of=v13] {$ \dots $};
\node[draw=none,left of=v1k-1] {$ \dots $};
\node[draw=none,right of=vk3] {$ \dots $};
\node[draw=none] at ($(vk3)+(1,-1)$) {$ \dots $};
\node[draw=none,right of=vkk-1] {$ \dots $};
\node[draw=none,right of=vkk] {$ \dots $};

\draw[rounded corners=\c,fill=gray, fill opacity=0.5]
($(v11)+(0,\w)$) --
($(v11)+(.5*\w,\w)$) --
($(v22)+(-2*\w,\w)$) --
($(v22)+(-\w,\w)$) --
($(vk2)+(\w,\w)$) --
($(vk2)+(\w,-\w)$) --
($(v22)+(-\w,-\w)$) --
($(v22)+(-3*\w,-\w)$) --
($(v22)+(-3*\w,+\w)$) --
($(v11)+(-\w,-.25*\w)$) --
($(v11)+(-\w,\w)$) --
($(v11)+(0,\w)$) 

;

\draw[rounded corners=\c,fill=gray!85, fill opacity=0.5]
($(vk3)+(1+2*\w,-\w)$) --
($(v33)+(-3*\w,-\w)$) --
($(v32)+(-3*\w,+\w)$) --
($(v21)+(-\w,-\w)$) --
($(v21)+(-\w,\w)$) --
($(v21)+(0,\w)$) --
($(v21)+(.5*\w,\w)$) --
($(v32)+(-2*\w,+\w)$) --
($(v33)+(-2*\w,+\w)$) --
($(vk3)+(1+2*\w,\w)$)

($(v12)+(-1-2*\w,\w)$) --
($(v12)+(\w,\w)$) --
($(v12)+(\w,-\w)$) --
($(v12)+(-1-2*\w,-\w)$)

;

\draw[rounded corners=\c,fill=gray!70, fill opacity=0.5]
($(vk3)+(1+2*\w,-1-\w)$) --
($(v33)+(1-3*\w,-1-\w)$) --
($(v32)+(1-3*\w,+\w)$) --
($(v31)+(-\w,-\w)$) --
($(v31)+(-\w,\w)$) --
($(v31)+(0,\w)$) --
($(v31)+(.5*\w,\w)$) --
($(v32)+(1-2*\w,+\w)$) --
($(v33)+(1-2*\w,-1+\w)$) --
($(vk3)+(1+2*\w,-1+\w)$)

($(v13)+(-1-2*\w,\w)$) --
($(v23)+(\w,\w)$) --
($(v23)+(\w,-\w)$) --
($(v13)+(-1-2*\w,-\w)$)

;

\draw[rounded corners=\c,fill=gray!55, fill opacity=0.5]

($(vkk-1)+(1+2*\w,\w)$) --
($(vk-1k-1)+(-2*\w,\w)$) --
($(vk-12)+(-2*\w,\w)$) --
($(vk-21)+(.5*\w,\w)$) --
($(vk-21)+(-\w,\w)$) --
($(vk-21)+(-\w,-.25*\w)$) --
($(vk-12)+(-3*\w,\w)$) --
($(vk-1k-1)+(-3*\w,-\w)$) --
($(vkk-1)+(1+2*\w,-\w)$) 

;

\draw[rounded corners=\c,fill=gray!40, fill opacity=0.5]
($(v1k-1)+(-1-2*\w,\w)$) --
($(vk-2k-1)+(+\w,+\w)$) --
($(vk-2k-1)+(+\w,-\w)$) --
($(v1k-1)+(-1-2*\w,-\w)$)

($(vkk)+(1+2*\w,\w)$) --
($(vkk)+(-2*\w,\w)$) --
($(vk2)+(-2*\w,\w)$) --
($(vk-11)+(.5*\w,\w)$) --
($(vk-11)+(-\w,\w)$) --
($(vk-11)+(-\w,-.25*\w)$) --
($(vk2)+(-3*\w,\w)$) --
($(vkk)+(-3*\w,-\w)$) --
($(vkk)+(1+2*\w,-\w)$) 
;

\draw[rounded corners=\c,very thick]

(v1k) -- ($(v1k)+(.5,.5)$) -- ($(v21)+(-.5,-.5)$) -- (v21)

(v2k) -- ($(v2k)+(.5,.5)$) -- ($(v31)+(-.5,-.5)$) -- (v31)

(v3k) -- ($(v3k)+(.5,.5)$) -- ($(v3k)+(.5,1.5)$)

($(vk-21)+(-.5,-3.5)$) -- ($(vk-21)+(-.5,-.5)$) -- (vk-21)

(vk-2k) -- ($(vk-2k)+(.5,.5)$) -- ($(vk-11)+(-.5,-.5)$) -- (vk-11)

(vk-1k) -- ($(vk-1k)+(-.5,-.5)$) -- ($(v1k)+(-.5,-.5)$) -- ($(v11)+(-.5,-.5)$) -- (v11)
;
\end{tikzpicture}
}
\caption{
A hard~$ k $-uniform hypergraph matching instance
}
\label{kdimhardinstance}
\end{minipage}%
\hfill
\begin{minipage}[t]{0.5\textwidth}
\centering
{\footnotesize
\newlength{\colsep}
\setlength{\colsep}{.5em}
\newlength{\rowsep}
\setlength{\rowsep}{.35em}
\begin{alignat*}{2}
\begin{array}{c@{\hspace{\colsep}}c@{\hspace{\colsep}}c@{\hspace{\colsep}}c@{\hspace{\colsep}}c@{\hspace{\colsep}}c@{\hspace{\colsep}}c@{\hspace{\colsep}}c@{\hspace{\colsep}}c@{\hspace{\colsep}}c@{\hspace{\colsep}}c@{\hspace{\colsep}}}
S_0    &=&\{& 1,    & 2,   & 3,    & 4,   &\dots,   & k-2    &\}\\[\rowsep]
S_1    &=&\{& 1,    & k-1,   &k,       &k+1,    &\dots,     &2k-5      &\}\\[\rowsep]
S_2    &=&\{& 2,    &k-1,    & 2k-4, & 2k-3,& \dots,  & 3k-9   &\}\\[\rowsep]
S_3    &=&\{& 3,    &k     , & 2k-4, &        &           &          &\}\\[\rowsep]
       & &  & \vdots&\vdots  & \vdots&        &\ddots     &          &\\[\rowsep]
S_{k-2}&=&\{& k-2,  &2k-5,   & 3k-9, &        &           &         K&\}
\end{array}
\end{alignat*}
}
\caption{Definition of the $ S_i $}
\label{defsi}
\end{minipage}
\end{figure}
So far, the $ e $-nodes of $ e_0,\dots,e_{k-2} $ have degree three, the $ e $-node of $ e_{k-1} $ has degree two.
%
The adversary creates $k-1 $ more edges, that are displayed as black (vertical) lines
in Fig.~\ref{kdimhardinstance}.
These edges use $ K=\frac{(k-1)(k-2)}{2} $ new nodes~$1, 2,\dots,K$. 
Using these nodes, the adversary creates sets $ S_0,\dots,S_{k-2}$
of~$k-2$ nodes each, such that every new node occurs in exactly two of the the $ S_i $ and $
|S_i\cap S_j|=1 $ whenever $ i\neq j $. Refer to Fig.~\ref{defsi} for the
construction of the $ S_0,\dots,S_{k-2}$:
a node listed in
the $ j $-th row has its second occurrence in the $ j $-th column.
The~$i$-th new edge, with $ 0\leq i\leq k-2 $, contains $ v_i $, the $ e
$-node of $ e_{i+1\mod k-1} $, and the nodes of $ S_i $.
%

\newpage

Observe the following properties of the construction:
\begin{enumerate}[label=\roman*.]
  \item\label{eins} The $ e $-nodes of $ e_0,\dots,e_{k-2} $ have degree four.
  \item\label{zwei} All other nodes, including the new nodes in~$S_0,\ldots,S_{k-2}$,
  have degree two.
  \item\label{drei} Any two edges have at most one node in common.
  \item\label{vier} The edge $ e $ shares exactly one node with any other edge.
\end{enumerate}

%
Now the data items of the input graph look as follows:
The data item~$$ \langle u; V_1,\dots,V_d \rangle$$ of node $ u $ lists the~$d$
hyperedges incident in~$ u $: each hyperedge~$ \{u\}\cup V_i $ is represented by
the node set~$V_i$.

How does the greedy adaptive priority algorithm~$A$ proceed when the adaptive
priority game starts?
Recall that $ A $ submits an ordering~$\pi$ on the set of all data items
\emph{without looking at the graph}.
In the first round the adversary presents the, according to $ \pi $, first data
item $ \langle u;V_1,\dots,V_d\rangle $  with $ d \in\{2,4\} $ (which are the
only degrees present in the graph, by~\ref{eins} and~\ref{zwei}), $V_i\cap V_j =
\emptyset$ for $ i\neq j $ (node $ u $ is the only common node of all incident
edges, by~\ref{drei}) and $ |V_i|=k-1 $ for all $ i $ (the graph is $ k
$-uniform).
Since $ A $ is greedy, $ A $ selects an incident edge~$ \{u\}\cup V_i $ and adds
it to its matching.

First assume that~$d = 4$ holds. 
Then the adversary may relabel the nodes in the instance such that $ u $ is the $ e $-node of $ e_0 $, since this is the first data item revealed to
the algorithm.
The greedy adaptive priority algorithm must pick an edge incident to~$u$, and
the adversary asserts that this edge is $ e $.
The matching is maximal by~\ref{vier} 
In case $ d=2 $ the adversary relabels the nodes such that $ u $ is the $ e $-node of $ e_{k-1} $, which has degree two by construction, and again lets the picked edge be $ e $.
\end{proof}

\section{Conclusion}
%
%
Our inapproximability result for fully randomized priority algorithms implies
that greedy-like algorithms cannot compete with algorithms based on
augmenting-paths or algebraic methods.
Nonetheless, conceptually simple algorithms, that are easy to
implement and very efficient in practice, deserve further investigation.

Theorem~\ref{apvUpperBounds} gives inapproximability bounds for a large class of
deterministic greedy algorithms on graphs with maximum degree~$\maxdeg \geq 3$.
We conjecture that the deterministic variant of \MDeg achieves these bounds
for all~$\maxdeg$.

Moreover, our approximation guarantee given in Theorem~\ref{thm:threeDeb} does
not take into account that choosing a random neighbor has a good probability of
picking an optimal neighbor, if the degrees are small (cp.~\cite{ef65} and also
Sect.~\ref{section_intro}).
We leave it as future work to exploit this observation.
%
%

\bibliographystyle{abbrv}      
\bibliography{matching}

\appendix
\section{A Linear Time Implementation of \mingreedy}
\label{section_mingreedy_implementation}
For \AlgFrieze and \ranking Poloczek and Szegedy~\cite{ps12} propose a data
structure that allows to run both algorithms in linear time.
\greedy can also be implemented in linear time using a data structure similar
to the one we describe below.

Given an adjacency list representation of the input graph~$G = (V=\{0,\dots,n-1\},E)$, the data
structure can be initialized in linear time~$O(|E| + |V|)$.
At any time during \mingreedy, the data structure supports each of the following
operations in constant time:
selection of a random node of minimum (non-zero) degree, selection of a random
neighbor of a given node, and the deletion of a given edge.
Hence \mingreedy can be implemented in linear time since a minimum degree node
and a neighbor are selected at most $\frac{|V|}{2}$ times and each of the $ |E|
$ edges is removed exactly once.

How is a minimum degree node $ u $ selected in constant time?
Consider a step of \mingreedy and let $ d_0<d_1<\dots<d_k $ be the different
degrees currently present in the graph, where $ d_0=0 $ is the degree of already isolated nodes.
We use an array $ S $ which is partitioned into sub-arrays $ S_i $
($ 0\leq i\leq k $) such that~$S_i$ precedes all~$S_j$ with~$i < j$.
Each~$S_i$ contains all nodes that currently have degree $ d_i $ in contiguous cells of $ S $.
A doubly linked list $ D $ stores (from head to tail) the borders of $
S_0,S_1,\dots,S_k $.
Node $ u $ is selected by reading the second entry in $ D $, which stores nodes of minimum degree $ d_1 $, and choosing a random
cell in $ S_1 $.

To select a random neighbor $ v $ of $ u $ in constant time, instead of adjacency lists we utilize adjacency \emph{arrays} (which have same lengths as the lists and can be computed in linear time during the preprocessing phase).
To pick $ v $ from the
adjacency array $ A_u $ of $ u $, we assert that the currently available
neighbors of~$u$ are stored in a consecutive part of~$ A_u $.

Now that nodes~$u$ and~$v$ are selected, the edge~$(u,v)$, and all incident edges
of $u$ and~$v $ are removed from the data structure. We remove these edges
one-by-one, each in constant time.

This is how we update the adjacency arrays.
Let a node $ x $ and an index of a cell in $ A_x $ be given, say containing
neighbor $ y $.
Assume that the array cell in $ A_x
$ containing node $ y $ also stores the index of the array cell in $ A_y $
containing node $ x $ as a \emph{reference} and vice versa.
In order to remove the edge $ (x,y) $, we move the entry of the last non-empty
cell in~$A_x$ to the position of $ y $, and handle~$A_y $ and $ x $ analogously.
We also update the references of the two moved entries accordingly;
this is done in constant time using the references stored inside the moved entries.
The references are initialized during the construction of the adjacency arrays.

How to update $ S $ and $ D $ for the removal of an edge $ (x,y) $?
Since the degrees $ d_x,d_y $ of $ x$ respectively $y $ are decreased by exactly one, these nodes are moved from sub-array $ S_{d_x} $ to sub-array $ S_{d_x-1} $ respectively from $ S_{d_y} $ to $ S_{d_y-1} $.
We proceed analogously for $ x $ and $ y $.
To move $ x $ to its new sub-array, we utilize two helper arrays $
P_D,P_S $:
$ P_D[x]$ stores a pointer to the $ D $-entry containing $ x $, $
P_S[x] $ holds the index of the cell in $ S $ containing $ x $.
The entry of node $ x $ in the sub-array $ S_{d_x} $ is replaced by the ``leftmost'' node
in $ S_{d_x} $, i.e., the node stored at the smallest index, and $ x $ is appended to the ``right'' of $ S_{d_x-1} $.
If $ S_{d_x} $ is now empty, we remove it from $ D $ in constant time using the pointer $ P_D[x] $.
If $ S_{d_x-1} $ does not yet exist, i.e., $ x $ is now the only node of degree $ d_x-1 $, then we create $ S_{d_x-1} $ at the now cleared position in $ S $ and insert $ S_{d_x-1} $ before $ S_{d_x} $ (or before the successor of $ S_{d_x} $, if $ S_{d_x} $ was removed), also in constant time.
The pointers in $ P_D $ and the addresses stored in $ P_S $ are updated accordingly.

Note that $ S,D,P_S,P_D $ can be initialized in time $ O(|E|+|V|) $ during the preprocessing phase by scanning through the adjacency lists of $ G $ a constant number of times.

\end{document}